\newif\iffull\fulltrue
\newif\ifshort\shortfalse
\newif\ifcomments\commentstrue
\newif\ifdavide\davidetrue
\theoremstyle{plain}
\newtheorem{theorem}{Theorem}
\newtheorem{proposition}[theorem]{Proposition}
\newtheorem{lemma}[theorem]{Lemma}
\newtheorem{corollary}[theorem]{Corollary}
\theoremstyle{definition}
\newtheorem{definition}[theorem]{Definition}
\newtheorem{example}[theorem]{Example}
\newtheorem{remark}[theorem]{Remark}
\newcommand{\rref}[1]{(\ref{#1})}
\title{Eager Functions as Processes}
\author[1]{Adrien Durier}
\ead{adrien.durier@ens-lyon.fr}
\author[2]{Daniel Hirschkoff}
\ead{daniel.hirschkoff@ens-lyon.fr}
\author[3]{Davide Sangiorgi}
\ead{davide.sangiorgi@cs.unibo.it}
\affiliation[1]{organization={École Normale Supérieure de Lyon},
  addressline={46, all{\'e}e d'Italie}, postcode={69007}, city={Lyon},
  country={France}}
\affiliation[2]{organization={École Normale Supérieure de Lyon},
  addressline={46, all{\'e}e d'Italie}, postcode={69007}, city={Lyon},
  country={France}}
\affiliation[3]{organization={Università di Bologna and INRIA},
  addressline={Mura Anteo Zamboni, 7}, postcode={40126}, city={Bologna},
  country={Italy}}
\date{}
\begin{document}

\begin{abstract}
  We study Milner's encoding of the call-by-value $\lambda$-calculus
  into the $\pi$-calculus.  We show that, by tuning the encoding to two
  subcalculi of the $\pi$-calculus (Internal $\pi$ and Asynchronous Local
  $\pi$), the equivalence on $\lambda$-terms induced by the encoding
  coincides with Lassen's eager normal-form bisimilarity, extended to
  handle $\eta$-equality.  As behavioural equivalence in the
  $\pi$-calculus we consider contextual equivalence and barbed
  congruence. We also extend the results to preorders.

  A crucial technical ingredient in the proofs is the
  recently-intro\-du\-ced technique of unique solutions of equations,
  further developed in this paper.  In this respect, the paper also
  intends to be an extended case study on the applicability and
  expressiveness of the technique.




\end{abstract}


\begin{keyword}
  call-by-value $\lambda$-calculus \sep $\pi$-calculus \sep
  behavioural equivalence \sep full-abstraction 
\end{keyword}
\maketitle

\paragraph{Acknowledgements}
We are delighted to be able to contribute to the Festschrift in  honour 
of Fu Yuxi.
We would like to take this opportunity for heartily thanking him  for
  his many technical and scientific  contributions,  for
his work in favour of the  concurrency theory community, both locally in 
Shanghai and elsewhere, and  for all the discussions and time  spent 
together.

\section*{Introduction}
  Milner's work on functions as processes~\cite{milner:inria-00075405,encodingsmilner},
that shows  how the evaluation strategies of  {\em call-by-name 
  $\lambda$-calculus} and   {\em call-by-value
  $\lambda$-calculus}~\cite{Abr88,DBLP:journals/tcs/Plotkin75} can be faithfully mimicked in
the $\pi$-calculus,  
 is generally considered a
landmark in Concurrency Theory, and more generally in  Programming Language Theory.
 The comparison with the $\lambda$-calculus 
   is a significant  expressiveness test for
the $\pi$-calculus. 
More than that, 
it
promotes the $\pi$-calculus to be  a basis for general-purpose 
 programming languages in which communication is the fundamental
 computing primitive. 
From the $\lambda$-calculus point of view, the comparison  provides the means  to
study
 $\lambda$-terms
 in    contexts  other  than  purely  sequential ones,
 and with the  instruments available to reason about 
 processes.
Further, Milner's work, and the works that followed it, have contributed to
understanding and developing the theory of the $\pi$-calculus.

More precisely,   Milner shows the operational correspondence between
reductions in the $\lambda$-terms and in the encoding $\pi$-terms. 
He then uses the correspondence to prove that the encodings are
\emph{sound}, i.e., if the processes encoding 
 two $\lambda$-terms  are
behaviourally equivalent, then the source $\lambda$-terms are also
behaviourally equivalent in the $\lambda$-calculus.
Milner also shows that the converse,   \emph{completeness}, fails, intuitively because 
the  encodings allow one to test the $\lambda$-terms in all  contexts
of the $\pi$-calculus~--- more
diverse than those of the $\lambda$-calculus. 

The main problem that Milner's work left open is the characterisation
of the equivalence  on $\lambda$-terms induced by the encoding, whereby
two $\lambda$-terms are equal if their  encodings are behaviourally
equivalent $\pi$-calculus terms. 
The question is largely independent of the precise form of 
 behavioural equivalence adopted in the $\pi$-calculus
because the encodings are deterministic (or at
least confluent). In the 
paper we consider contextual equivalence (that coincides with may
testing and trace equivalence) and barbed congruence (that coincides
with bisimilarity). 


For the call-by-name  $\lambda$-calculus, the answer was found shortly later
\cite{San93,cbn}: 
the equality induced is the equality of Levy-Longo 
 Trees~\cite{LONGO1983153}, the lazy
variant of  B{\"o}hm Trees.  
It is actually also possible to 
obtain B{\"o}hm Trees, by modifying the call-by-name encoding so to allow also reductions
underneath a $\lambda$-abstraction, and by including divergence among the observables \cite{xian}. 
These results show that, at least for call-by-name, the $\pi$-calculus encoding, while not
fully abstract for the contextual equivalence of the $\lambda$-calculus, is in remarkable
agreement with the theory of the $\lambda$-calculus: several well-known models of the
$\lambda$-calculus yield   Levy-Longo Trees
 or B{\"o}hm  Trees  as
their induced equivalence~\cite{levy75,LONGO1983153,barendregt1984lambda}.

For call-by-value, in contrast, the problem of identifying the equivalence induced by the
encoding  has remained open, for two
main reasons. First, tree structures in call-by-value are less studied
and less established  than in call-by-name. Secondly, proving
completeness of an encoding of $\lambda$ into $\pi$ requires sophisticated proof techniques. For
call-by-name, for instance, a central role is played by 
\emph{bisimulation up-to contexts}. For call-by-value, however,
existing proof techniques, including `up-to contexts',
 appeared not to be  powerful enough.  


In this paper we study the above open problem for call-by-value. 
Our main result is that the equivalence induced on $\lambda$-terms by their call-by-value 
encoding into the $\pi$-calculus  is 
\emph{eager normal-form bisimilarity} \cite{lassentrees,lassentrees2}. 
This is a tree structure for call-by-value,
proposed  by Lassen as the call-by-value  counterpart 
   of
Levy-Longo Trees. 
Precisely we obtain the variant that is insensitive to some 
$\eta$-expansions, called \emph{$\eta$-eager
normal-form bisimilarity}. It validates the \etaexp law for variables:

\begin{equation}
\abs y xy = x
\end{equation}

This law is also valid for abstractions: $\abs y (\abs z M)y = \abs z
M$ if $y$ does not occur free in $M$.
However, in a weak call-by-value setting, $\eta$-expanded terms should not always be equated: 
indeed, $\Omega$ diverges, while $\abs x \Omega x$ converges to a value.

To obtain the results we have however to make a few adjustments
to Milner's encoding and/or specialise the target language of the encoding.
 These adjustments have to do with the presence 
 of free
outputs 
(outputs of known names) in the encoding. Indeed, 
Milner had initially
translated call-by-value  $\lambda$-variables using a free output: 
the translation of the variable $x$ would be a free output 
$\out p x$, where $p$ is the continuation, or location, at which $x$ is 
evaluated.
In the original encoding, therefore, the encoding of $x$ at $p$ (the
encoding of $\lambda$-terms is parametrised upon a name) is defined as
$\out px$, which can be written as follows:
\begin{equation}
\label{e:VarOne}
\encma x  p \defi \out px
.
\end{equation}

 However this 
 is troublesome for 
 the validity of $\betav$-reduction 
(the property that $\lambda$-terms that are related by 
$\betav$-reduction~--- the call-by-value $\beta$-reduction~--- 
 are also equal in the $\pi$-calculus).
 Milner solved the problem by ruling out the initial
free output $\out p x$ and by replacing it with a bound output $\new y\out p y$ followed 
by a static link $ \alpilink y x$. A static link $\alpilink y x$ forwards any 
name received by $y$ to $x$, therefore acting as a substitution between 
$x$ and $y$ (while also constraining the behaviour of the context, that may only 
use $y$ in output).
Thus, in the modified encoding, we have:

\begin{equation}
\label{e:VarMore}
\encmap x p \defi \new y (\out p y. \alpilink y x)
.
\end{equation}
It was indeed shown later \cite{sangiorgiphd} that  with \reff{e:VarOne}
the validity of $\betav$-reduction fails. 
Accordingly, the final journal paper~\cite{encodingsmilner} does not even mention
encoding~\reff{e:VarOne}. 
If one wants to maintain the simpler rule  \reff{e:VarOne},  
then the validity of $\betav$-reduction can be regained  
by taking, as target language, a subset of the  $\pi$-calculus
in which only the output capability of names is communicated. 
This can be enforced either by imposing a behavioural type system including capabilities 
\cite{iotypes}, or by  syntactically taking a dialect of the $\pi$-calculus in which only the
output capability of names is communicated, such as Local $\pi$~\cite{localpi}.

The encoding  \reff{e:VarMore} still  makes use of free
outputs~--- the final action of $\alpilink y x $ is a free output on $x$.
While  this limited  form of free
output is harmless for the validity of $\betav$-reduction, 
we show in the paper that 
this 
 brings problems when analysing $\lambda$-terms with free
variables: desirable call-by-value equalities  fail.
An example is given by the law:
\begin{equation}
  \label{eq:nonlaw}
  I(x\val)  = x\val
\end{equation}
where $I$ is $\abs zz$ and $\val$ is a value.

Law~\rref{eq:nonlaw} is valid in any model of \emph{call-by-value}, as 
any context making both terms closed also equates them: we have 
$\abs x I(x\val) = \abs x x\val$, for instance. 
It also holds in any theory of open call-by-value, as far as we know. 
This is indeed very natural: for any substitution of $x$ with a closed value, 
the terms become related by $\betav$
--- and the identity should never be made observable in a theory of the 
$\lambda$-calculus.


Two possible
 solutions to recover law~\rref{eq:nonlaw} are:

\begin{enumerate}
\item rule out  the free outputs; this essentially means transplanting the encoding 
onto the Internal $\pi$-calculus, \Intp~\cite{internalpi},  a version of the
  $\pi$-calculus in which any name emitted in an  output is fresh;

\item control the use of capabilities in the $\pi$-calculus; for
  instance taking Asynchronous Local
  $\pi$, \alpi~\cite{localpi}  as the  target of the translation. 
 (Controlling capabilities allows one to impose a directionality on
names, which, under certain
technical conditions, may  hide the identity of the emitted names.)
\end{enumerate} 

In the paper we consider both approaches, and show that 
in both cases, the equivalence induced coincides with
$\eta$-eager
normal-form bisimilarity.

In summary, there are two main contributions in the paper:
\begin{enumerate}
\item Showing that Milner's encoding fails to equate terms that should be equal in call-by-value.
\item Rectifying the encoding, by considering different target calculi, and  
investigating Milner's problem in such a setting.
\end{enumerate}
The rectification we make does not really change the essence of the encoding -- 
in one case, the encoding actually remains the same. Moreover, the languages used 
are well-known dialects of the \pc, studied in the literature for other reasons. 
In the encoding, they allow us to avoid certain accidental misuses of the names 
emitted in the communications. The calculi were not known at the time of Milner's 
paper \cite{encodingsmilner}.

A key role in the completeness proof is played by the technique of
\emph{unique solution of equations},  proposed in~\cite{usol}. 
The structure induced by Milner's 
call-by-value encoding was expected to look like Lassen's trees; 
however existing proof 
techniques did not seem powerful enough to prove it. 
The unique solution technique allows one to derive process bisimilarities from
equations whose infinite unfolding does not introduce divergences, by
proving that the processes are solutions of the same equations.  The
technique can be generalised to possibly-infinite systems of
equations, and can be  strengthened by allowing certain
  kinds of divergences in equations.  In this respect, another goal
of the paper is to carry out an extended case study on the
applicability and expressiveness of the techniques.  Then, a
by-product of the study are a few further developments of the
technique. 
In particular, one such result allows us to transplant uniqueness of
solutions from a system of equations, for which divergences
are easy to analyse, to another one. Another result is about
the application of the technique to preorders.

Finally, we consider 
preorders~--- thus referring to the
preorder on $\lambda$-terms induced by a behavioural preorder on their $\pi$-calculus
encodings. 
We introduce a preorder on Lassen's trees (preorders had not been considered by Lassen)
and show that this is the preorder on $\lambda$-terms induced by the call-by-value
encoding, when the behavioural relation on $\pi$-calculus terms is 
the ordinary contextual preorder (again, with the same restrictions as
mentioned above). 
With the move from equivalences to preorders, the overall structure of
the proofs of our full abstraction results remains the same. 
However, the impact on the application of the unique-solution
technique is substantial, because the phrasing of this technique in
the cases of preorders and of  equivalences is quite different.




\paragraph{Further related work}
The standard behavioural equivalence in the $\lambda$-calculus
is contextual equivalence. Encodings into the $\pi$-calculus
(be it for call-by-name or call-by-value) break  contextual equivalence
 because $\pi$-calculus contexts
are richer than those in the (pure) $\lambda$-calculus. In the paper
we try to understand how far beyond contextual equivalence the
discriminating power of the $\pi$-calculus brings us, for
call-by-value.   
The opposite approach is to restrict the set of 'legal' $\pi$-contexts
so to remain faithful to contextual equivalence. This approach has been
followed, for call-by-name, and using type systems, in
\cite{BHYseqpi,toninho:yoshida:esop18}. 

Open call-by-value  has been studied in~\cite{accattolicbv}, where the
focus is on operational properties of $\lambda$-terms; behavioural
equivalences are not considered.
An important difference with our work is that
in~\cite{accattolicbv},
$x\val_1\cdots\val_k$ is treated as a
value, i.e., $\beta$-reduction can be triggered when the argument has
this shape.



An extensive presentation of call-by-value, including denotational
models, 
is  Ronchi della Rocca and
Paolini's book~\cite{DBLP:series/txtcs/RoccaP04}.

In~\cite{usol}, the unique-solution technique is used in the 
completeness proof for Milner's call-by-name encoding. That proof 
essentially revisits the proof of~\cite{cbn}, which is based on 
bisimulation up-to context. We have  explained
 above that the case for
call-by-value  is quite different.


\paragraph{Structure of the paper} We recall basic definitions about
the call-by-value $\lambda$-calculus and the $\pi$-calculus in
Section~\ref{s:background}. The technique of unique solution of
equations is introduced in Section~\ref{s:usol}, together with some
new developments. Section~\ref{s:enc:cbv}
presents our analysis of Milner's encoding, beginning with the shortcomings
related to the presence of free outputs. 
{
 The first solution to these shortcomings is to move to the Internal
$\pi$-calculus: this is described in Section~\ref{s:enc:pii}. }
For the proof of completeness, in Section~\ref{s:complete},
we  rely on unique solution of equations; we also compare 
such technique with the `up-to techniques'.  
{
The second solution is to move to the Asynchronous Local
  $\pi$-calculus: this is discussed in 
Section~\ref{s:localpi}.}
We show in Section~\ref{s:contextual} how our
results can be adapted to   preorders and to contextual equivalence.
Finally in Section~\ref{s:concl}  we present  conclusions and
directions for future work.

\paragraph{Comparison with the results published
in~\cite{DBLP:conf/lics/DurierHS18}} This paper is an extended
version of~\cite{DBLP:conf/lics/DurierHS18}. We provide here
detailed proofs which were either absent or only sketched
in~\cite{DBLP:conf/lics/DurierHS18}, notably for: the soundness of 
the encoding into \Intp{} (Sections~\ref{s:oper:corr} and~\ref{s:sound}, 
\ref{a:soundness:pii}), completeness of the same encoding (Section~\ref{s:complete}), 
the unique solution technique for 
contextual relations and preorders 
(Section~\ref{ss:usol_pre}, \ref{a:usoltrace}), as 
well as some more details about the full abstraction proofs 
for contextual preorders (Section~\ref{ss:fa_preorders}) and the encoding into \alpi~(Section~\ref{s:fa:alpi}).
We also include more detailed discussions along the paper, notably
about Milner's encoding (Section~\ref{s:enc:cbv}), 
the encoding into \Intp~(Section~\ref{ss:enc_pii}), and 
the encoding into \alpi~(Section~\ref{alpiopsem}).




\section{Background material}
\label{s:background}
Throughout the paper,  $\R$ ranges over relations.
The composition of  two   relations
$\R$ and $\R'$  is written 
$\R \: \R'$.
We often use infix notation for relations; thus 
 $P \RR Q$ means ${(P, Q)}\in\R$. 
A tilde  represents  a tuple.  
The $i$-th element of a tuple $\til P$ is  referred to as $P_i$. 
 Our notations are extended to tuples componentwise. Thus  
 $\til P \RR \til Q$ means  $P_i \RR Q_i$
for all components.
Several  behavioural   relations
are used in this paper | \ref{a:tab}
presents  a summary of these.

\subsection{The call-by-value  \lc}
\label{ss:cbv}
We let $x$ and $y$ range over the set of $\lambda$-calculus  variables.
The set   $\Lao$  of $\lambda$-terms is  
 defined by the  grammar  
\begin{center}
$ M ::= \;   x    \midd   \lambda   x. M  \midd  
  M_1 M_2 \, .$ 
\end{center}
 Free variables, closed terms, substitution, 
$\alpha$-conversion
etc.\ are  defined as usual  \cite{barendregt1984lambda,DBLP:books/cu/HindleyS86}.
Here and in the rest of the paper (including when reasoning about
$\pi$ processes), we adopt the usual
``Barendregt convention''. This will allow us to assume freshness
of bound variables and names  whenever needed.
The set of   free variables 
in the term $M$  is written   $\fv M$, and we sometimes use $\fv{M,N}$
to denote $\fv M\cup \fv N$.
Application is left-associative;
therefore $M N L $  
is $(M N ) L$.
 We  abbreviate $\lambda  x_1. \cdots. \lambda  x_n.M $   as 
$\lambda  x_1 \cdots x_n.M $, or $\lambda  \tilde{x}. M$ if the length of
$\tilde x$ is not important.
Symbol $\Omega  $ stands for the  always-divergent term
 $(\lambda  x . x x)(\lambda  x . x x)$.



\txthere{explain the following things ("defined as usual"?):}{scope
  and associativity in the \lc, substitutions, free variables
  $\fv{M}$, $\fv{M,N}$ is short for $\fv M\cup \fv N$, values and
  evaluation contexts as syntactic categories (of terms and contexts),
  open terms (here, everything is defined for open terms)} 


A \emph{context} is a term with a hole \hdot,
possibly occurring more than once. If $C$ is a context, then $C[M]$ is
a shorthand for $C$ where the hole \hdot is substituted by $M$. An
 \emph{evaluation context}, ranged over using $\evctxt$, is a special kind of 
 inductively
 defined
context,
with exactly one hole \hdot, and in which a term replacing the hole can
immediately run. In the pure $\lambda$-calculus \emph{values} are abstractions and variables.
\begin{center}
\begin{tabular}{rl}
{Evaluation contexts} & $\evctxt\scdef \hdot\OR \evctxt M\OR\val \evctxt$ \\
{Values} & $\val\scdef x \OR \abs x M$
\end{tabular} 
\end{center}
 We
accordingly write \fv\evctxt{} for the free variables of \evctxt. 




Eager reduction (or $\betav$-reduction), 
  ${\red}
\subseteq    \Lao \times  \Lao $, 
is 
defined on open terms, and 
is determined  by the rule:
\txthere{}{Eager reduction relation (definition?):}
$$  \evctxt[(\lambda x . M) \val]\red \evctxt[M\{\val/x\}]
,
$$
where $\{\val/x\}$ stands for the capture-avoiding substitution of $x$
with $\val$. 

A term in \emph{eager normal form} is a term that has no eager
reduction.
We write \reds\ for the reflexive transitive closure of \red.

\begin{proposition}\label{p:case:analysis}
The following hold:
\begin{enumerate}
\item If $M\red M'$, then $\evctxt[M]\red \evctxt[M']$ and $M\sigma
  \red M'\sigma$,
  for any  substitution $\sigma$ that replaces variables with values. 
\item Terms in eager normal form are either values or admit 
a unique decomposition of the shape 
  $\evctxt[x \val]$.
\end{enumerate}
\end{proposition}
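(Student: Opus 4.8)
The plan is to prove both items by structural induction, using only routine substitution and context lemmas. Recall that $M\red M'$ holds exactly when $M = \evctxt'[(\abs x N)\val]$ and $M' = \evctxt'[N\{\val/x\}]$ for some evaluation context $\evctxt'$, variable $x$, term $N$ and value $\val$; and that a term is in eager normal form precisely when no such decomposition of it exists.

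For the first item, the claim $\evctxt[M]\red\evctxt[M']$ follows from closure of evaluation contexts under composition: a straightforward induction on the structure of $\evctxt$ shows that filling its hole with $\evctxt'$ yields an evaluation context $\evctxt''$, whence $\evctxt[M] = \evctxt''[(\abs x N)\val]\red\evctxt''[N\{\val/x\}] = \evctxt[M']$. For the substitution claim I would first establish two auxiliary facts, each by induction on syntax: (i) values are stable under any substitution $\sigma$ mapping variables to values, so $\val\sigma$ is a value and hence $\evctxt'\sigma$ is again an evaluation context; and (ii) the substitution-composition identity $(N\{\val/x\})\sigma = (N\sigma)\{(\val\sigma)/x\}$, valid when $x$ is chosen fresh for $\sigma$, as permitted by the Barendregt convention. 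Combining these, $M\sigma = \evctxt'\sigma[(\abs x {(N\sigma)})(\val\sigma)]$ is again a redex inside an evaluation context, and it reduces to $\evctxt'\sigma[(N\sigma)\{(\val\sigma)/x\}] = M'\sigma$.

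For the second item I would argue by structural induction on $M$, treating existence of the decomposition first. The cases $M = x$ and $M = \abs x N$ are immediate, as these terms are values. The only substantial case is an application $M = M_1 M_2$, which is not a value. Since $[\cdot]\,M_2$ is an evaluation context, any reduction of $M_1$ would lift to one of $M$; as $M$ is in eager normal form, so is $M_1$, and the induction hypothesis applies. If $M_1 = \evctxt'[x\val]$, then $\evctxt'\,M_2$ is an evaluation context and $M = (\evctxt'\,M_2)[x\val]$. If instead $M_1$ is a value $\val_1$, then, since $\val_1\,[\cdot]$ is an evaluation context, the same reasoning shows $M_2$ is in eager normal form; if $M_2 = \evctxt'[x\val]$ we take $\evctxt = \val_1\,\evctxt'$, while if $M_2$ is a value $\val_2$ then $M = \val_1\val_2$, where $\val_1$ cannot be an abstraction (otherwise $M$ would be a $\betav$-redex, against eager normal form), so $\val_1 = x$ and $M = [\cdot][x\val_2]$.

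The main obstacle is uniqueness, which I would fold into the same induction. The key point is that the evaluation order pins down the position of the hole: for $M = M_1 M_2$, a decomposition $\evctxt[x\val]$ must have $\evctxt$ of one of the shapes $[\cdot]$, $\evctxt'\,M_2$, or $\val_1\,\evctxt'$. When $M_1$ is not a value, the shapes $[\cdot]$ and $\val_1\,\evctxt'$ are excluded, since each would force $M_1$ to be a value, so the hole must lie in $M_1$ and uniqueness follows from the induction hypothesis. When $M_1$ is a value, the shape $\evctxt'\,M_2$ is excluded, because a filled evaluation context $\evctxt'[x\val]$ is always an application and thus never a value; the remaining two shapes require $M_2$ to be, respectively, a value and a non-value, and so are mutually exclusive. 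The cleanest way to organise all this is to prove in a single induction the trichotomy that every term is either a value, or uniquely of the form $\evctxt[(\abs x N)\val]$, or uniquely of the form $\evctxt[x\val]$; keeping these three alternatives disjoint at each application node is the only genuinely delicate part, and it is exactly what the preceding observation secures.
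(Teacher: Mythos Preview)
Your proof is correct and carefully argued. The paper itself does not give a proof of this proposition: it is stated as a standard fact about call-by-value reduction and immediately used. Your argument---composition of evaluation contexts for the first closure, stability of values under value-substitutions plus the substitution lemma for the second, and the structural trichotomy for the normal-form classification---is exactly the routine verification one would expect, and the uniqueness argument via the observation that $\evctxt'[x\val]$ is always an application (hence never a value) is the right way to pin down the hole position.

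One small remark: in the uniqueness case where $M_1$ is a value and $M_2$ is not, you correctly exclude the shapes $[\cdot]$ and $\evctxt'\,M_2$, leaving only $\val_1\,\evctxt'$; you should then explicitly invoke the induction hypothesis on $M_2$ to get uniqueness of $\evctxt'$, just as you do for $M_1$ in the other branch. This is clearly what you intend with the trichotomy formulation, but it is worth making the recursive appeal explicit in that branch too.
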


Therefore, given a term $M$, either $M\reds M'$ where $M'$ is a term in
eager normal form, or there is an infinite reduction sequence 
starting from $M$.
In the first case,  we say that $M$ \emph{has eager
normal form $M'$}, written $M\converges M'$; in the second $M$
\emph{diverges}, written $M\diverges$. We write $M\converges$ when
$M\converges M'$ for some $M'$.


\begin{definition}[Contextual equivalence]\label{d:ctxeq}
Given $M,~N\in \Lao$, we say that $M$ and $N$ are contextually
equivalent, written $M\ctxeq N$, if for any context $C$, we have 
$C[M]\converges$ iff $C[N]\converges$. 
\end{definition}


\subsection{Tree Semantics for call-by-value}
\label{ss:trees}
 In this section, 
we recall 
Lassen's 
\emph{eager normal-form bisimilarity} \cite{lassentrees,lassentrees2,lassenfa}.

\begin{definition}[Eager normal-form bisimulation] 
\label{enfbsim}
A relation $\R$ between \lterms is an \emph{eager normal-form bisimulation} if, whenever
$M\RR N$, one of the following holds: 
\begin{enumerate}
\item both $M$ and $N$ diverge;
\item 
\label{ie:split}
$M\converges \evctxt[x\val]$ and $N\converges \evctxtp[x\valp]$ for
some $x$, values $\val$, $\valp$, 
and evaluation contexts 
$\evctxt$ and $\evctxtp$
satisfying 
$\val\RR
\valp$ and $\evctxt[z]\RR \evctxtp[z]$ for a  fresh $z$;
\item $M\converges \abs x M'$ and $N\converges \abs x N'$ for some $x$, $M'$, $N'$ with $M'\RR N'$;
\item $M\converges x$ and $N\converges x$ for some $x$.
\end{enumerate}
\emph{Eager normal-form bisimilarity}, written $\enf$, is the largest eager normal-form bisimulation.
\end{definition}


Essentially, the structure of a $\lambda$-term that is unveiled by
Definition~\ref{enfbsim} is that of a (possibly infinite) tree 
obtained by repeatedly applying $\betav$-reduction, and branching a tree whenever
instantiation of a variable is needed to continue the reduction (clause \reff{ie:split}).  
We call such trees \emph{Eager Trees} (ETs); accordingly, we also call  
 eager normal-form bisimilarity the \emph{Eager-Tree equality}.


\begin{example}
\label{exa:cteq}
Relation $\enf$ is strictly finer than contextual equivalence
$\ctxeq$: the inclusion ${\enf} \subseteq {\ctxeq}$ follows from the
congruence properties of $\enf$ \cite{lassentrees}. For strictness,
examples are given by the following equalities, which hold for
$ \ctxeq$ but not for $\enf$:
  $$
  \Omega  = (\abs y \Omega) (x\val)
\qquad\qquad
x\val =  (\abs y x\val)(x\val)
\enspace.
$$
\end{example} 
\begin{example}[$\eta$ rule]
\label{exa:eta}
The  $\eta$-rule is not valid for $\enf$. For instance, we have 
$\Omega \not\enf \abs x \Omega x$.
The rule is not even valid on values, as we also have
$ \abs y x y \not\enf x$. It holds however 
for abstractions: 
$   \abs y (\abs xM) ~y\enf \abs xM
$ when $y\notin\fv{M}$.
\end{example} 
The failure of the   $\eta$-rule $ \abs y x y \not\enf x$ is
troublesome as, under any closed value substitution (a substitution
replacing variables with closed values), the two terms are
indeed {
eager normal-form bisimilar}. 
%
%
%
%
Thus \emph{$\eta$-eager normal-form bisimilarity}~\cite{lassentrees} takes
$\eta$-expansion into account so to recover such missing equalities.


\begin{definition}[\enfbsim] \label{enfebsim}
A relation $\R$ between \lterms is an \emph{\enfbsim} if, whenever $M\RR N$, either one of the clauses of Definition \ref{enfbsim},
or one of the two following additional clauses,  hold:
\begin{enumerate}
\setcounter{enumi}{4}
\item\label{lab:five} $M\converges x$ and $N\converges \abs y N'$ for some
 $x$, $y$, and $N'$ such that $ N'\converges\evctxt[x\val]$,
with      $y\RR \val$ and $z\RR
  \evctxt[z]$ for some value $\val$, evaluation context 
\evctxt, and  fresh $z$.
\item\label{def:enfe:case:eta}
the converse of \reff{lab:five}, i.e., 
$N\converges x$ and
  $M\converges \abs y M'$ for some
  $x$, $y$, and $M'$ such that $ M'\converges\evctxt[x\val]$,
with
     $ \val\RR y$ and 
$
   \evctxt[z]\RR z$ for some value $\val$, evaluation context 
 \evctxt, and  fresh $z$.
\end{enumerate}
Then \emph{$\eta$-eager normal-form bisimilarity}, $\enfe$, is the largest $\eta$-eager normal-form bisimulation.
\end{definition}
We sometimes call relation $\enfe$ the \emph{$\eta$-Eager-Tree equality}. 
\begin{remark}Definition~\ref{enfebsim} coinductively allows
$\eta$-expansions to occur underneath other $\eta$-expan{-}sions, hence 
trees with infinite $\eta$-expansions may be equated with finite trees.
For instance, we have
$$x\enfe \abs y xy\enfe \abs y x (\abs z yz)\enfe \abs y x (\abs z y(\abs w zw))\enfe\dots$$
An example of a finite tree being equated with an infinite tree by
$\enfe$
is
as follows: take a fixpoint combinator $Y$, and define 
$f \defi (\abs {zxy} x(z\,y))$. We then have $Yfx \reds \abs y x (Y\,f\,y)$,
and then $x\,(Y\,f\,y)\reds x(\abs z y\,(Y\,f\,z))$, and so on. Hence,
we have  $x\enfe Yfx$.
\end{remark}



\subsection{The \pc}
\label{s:pi}


In all  encodings
we consider, 
  the encoding of a  $\lambda$-term
 is parametric on  a name, 
i.e., it is a 
 function from names  to $\pi$-calculus  
 processes. We also need parametric processes (over one or several names) to write recursive process definitions
 and equations. 
We call such  parametric processes {\em abstractions}.
The instantiation of the parameters of an abstraction $F$ is done
via the {\em application} construct $\app F \tila$.
 We use $P,Q$ for processes, $F$ for abstractions.
Processes and abstractions  form the set  of  {\em $\pi$-agents} (or
simply \emph{agents}), ranged
over by $A$. 
Small letters 
 $a,b, \ldots, x,y, \ldots$  
range  over the infinite set of names.
The grammar of the $\pi$-calculus is thus:
$$ \begin{array}{ccll}
A & ::= & P \midd F & \mbox{(agents)}\\[\mypt]
P & ::= & \nil    \midd    \inp a \tilb . P    \midd    \out a \tilb . P 
   \midd    \res a P
& \mbox{(processes)}  \\[\myptSmall]
& &
   \midd     P_1 |  P_2   \midd  ! \inp a \tilb . P   
\midd \app F \tila
\\[\mypt]
F & ::= & \bind \tila P \midd K & \mbox{(abstractions)}
   \end{array}
 $$

$\nil$ is the inactive process. An input-prefixed process $a (\tilb) .
P$, where  $\tilb$ has  pairwise distinct components,
 waits for a tuple of  names $\tilc$
 to be sent along $a$ and then  behaves like
$P \sub {\tilc} \tilb $, where $\sub{\tilc}\tilb $ is the 
 simultaneous
substitution
of names 
$\tilb$ with names  $\tilc$ (see below). An output particle 
$\opw a \tilb  $  emits  names  $\tilb$ at $a$.
Parallel composition  is used  to run two processes in parallel.
The restriction $\res a P$ makes name $a$ local, or private, to $P$.
A replicated input  $ !\inp a\tilb.P$
stands for a countable infinite  number
of copies of $\inp a\tilb.P$ in parallel.
 (Replication could be avoided in the syntax since it can be encoded
 with recursion. However its semantics is simple, and it is a useful
 construct for examples and encodings; thus we chose to include it in
 the grammar.)
 %

We do not include the operators of sum and matching.
We assign parallel composition the lowest precedence among the
operators. 
We refer to~\cite{milner1993polyadic} for detailed discussions on the
operators of the language.

 We use $\alpha $
 to range over prefixes.
 In  prefixes $\inp a \tilb$ and $\out a \tilb$, we call 
 $a$ the {\em subject} and  $\tilb$  the {\em object}. 
When the tilde  is empty, the surrounding brackets in prefixes
 will be omitted.
We  
 often abbreviate 
 $\alpha  . \nil$ as $\alpha $, and 
$\res a \res b P$ as $\resb{
a,b} P$.
An input prefix  $a (\tilb) .P$, a restriction 
   $\res{b} P$, and an abstraction $\bind\tilb P$  are binders for 
   names  $\tilb$ and $b$, respectively, and  give
rise in the expected way  to the definition of {\em free names}
(\mbox{\rmsf fn}) and {\em
bound names} (\mbox{\rmsf bn})
 of a term or a prefix. 
An agent is \emph{name-closed}  if it does not contain free names. 
 (Since the
 number of recursive definitions may be infinite, some care is
 necessary in the definition of free names of an agent, using a least
 fixed-point construction.)
As in the $\lambda$-calculus, 
we   identify
processes or actions which only  differ in the choice of the 
 bound names. 
The symbol $=$  will 
mean  ``syntactic identity modulo
$\alpha$-conversion''.
Sometimes, we  use $\defi$   as abbreviation mechanism, to
assign a name to an   expression to which we want to refer later.

   We use constants, ranged over by $K$, for writing recursive definitions. Each
constant has a defining equation of the form 
$K \Defi  \bind{\tilx} P$, where $\bind{\tilx} P  $ is name-closed; $\tilx$ 
 are the formal parameters of
the constant (replaced by the actual parameters whenever the constant
is used).



Since the calculus is polyadic, 
we assume a \emph{sorting system}~\cite{milner1993polyadic}
   to avoid disagreements  in the arities of
the tuples of names 
carried by a given name 
and in applications of abstractions.
We do not present the sorting system 
because it is not essential. 
The reader should
take for granted that all agents described  obey  a sorting. 

A \emph{context}  $\qct$ of the $\pi$-calculus    is a $\pi$-agent in which some
subterms have been replaced by the hole $\hdot{}$ or, if the context is
polyadic, with indexed holes $\holei 1, \ldots, \holei n$. Then, 
 $\ct A$ or $\ct {\til A}$
 is the agent resulting from replacing the holes with the terms $A$ or
 $\til A$.
 Holes in
contexts have a sort too, as they could be in place of an abstraction.
 




Substitutions are of the form $\sub \tilb \tila$, and
are   finite assignments of  names to names.
We use $\sigma$ and $\rho$  to range over substitutions. 
The application of a
  substitution  $\sigma $ to an expression $H$ is
written $H \sigma $.
Substitutions have 
precedence over the operators of the language; $\sigma  
\rho$ is the composition of substitutions  where $\sigma $ is performed
first, 
 therefore $P \sigma  
\rho$ is $ (P \sigma ) \rho$.

 The Barendregt convention allows us to assume that 
the application of a substitution
does not   affect  
   bound names   of expressions;
similarly, when comparing the transitions of two processes, we 
assume that the bound names of the actions 
do not occur free in
the processes.
In  a statement, we say that  a name is {\em fresh} to mean that it
 is different from any other name which 
occurs  in the statement or in objects of the statement like
processes and substitutions.

 \paragraph{Abstraction and application}
We say that an {application redex} $((\til x) P)\param{\til a}$ can be \emph{normalised} as $P \subst {\til x}{\til a}$. An agent is
\emph{normalised} if all such application redexes have been
contracted, everywhere in the terms. 
When reasoning on behaviours
it is useful to assume that all expressions are normalised, in the above sense.    Thus
in the remainder of the paper \emph{we identify an agent with its normalised expression}.
The application construct $ F \param{\til a}$ will play an important role in the treatment of
equations in the following sections.

\subsection{Operational semantics}

Transitions of $\pi$-calculus processes are of the form $P\arr\mu P'$,
where the grammar for actions is given by
$$
\mu\quad::=\quad \inp a\tilb\midd \res{\til{d}}\out a\tilb\midd \tau
\enspace.
$$
\begin{itemize}
\item $ P \arr{\inp a \tilb}P'$ is an input, where $\tilb$
 are the names bound by the input prefix which is being fired (we adopt a late version of the Labelled Transition Semantics),
 \item $P
 \arr{\res {\til{d}}\out a\tilb}P'$ is an output, where $\til d \subseteq
 \tilb$ are private names extruded in the output, and 
 \item $P \arr\tau P'$
 is an internal action.
\end{itemize}
%
%
%
We abbreviate
$\res{\tild}\out a\tilb$ as $\out a\tilb$ when $\tild$ is empty.
The occurrences 
of $\tilb$ in $\inp a\tilb$ and those of 
$\til{d}$ in $
\res{\til{d}}\out a\tilb$ are bound; we define accordingly 
the sets of bound 
names and free names 
 of an action $\mu$, respectively
written $\bnames\mu$
and $\fnames\mu$. 
The set of all the names appearing in $\mu$ (both free and bound)
is written $\names\mu$. 

Figure~\ref{f:lts:pi} presents the transition rules for the \pc.

\begin{figure*}[t]
\begin{mathpar}
  \inferrule{~}{\inp a\tilb.P \arr{\inp a\tilb}P}
  \and
  \inferrule{~}{!\inp a\tilb.P \arr{\inp a\tilb}!\inp a\tilb.P | P}
  \and
  \inferrule{~}{\out a\tilb.P\arr{\out a\tilb}P}
  \and
  \inferrule{P\arr{\res{\til{d}}\out a\tilb}P'
  }{
    \res n P\arr{\res{(\{n\}\cup\til{d})}\out a\tilb} P'}
  \begin{array}{l}
  n\in\tilb\\ n\notin\tild
  \end{array}
  \and
  \inferrule{P\arr\mu P'
  }{
    \res n P\arr\mu \res n P'}
  n\notin\names{\mu}
  \and
  \inferrule{P\arr{\inp a \tilb}P'\and Q\arr{\res{\til{d}} \out a{\til{b'}}}Q'
  }{
    P | Q\arr\tau \res{\til{d}}(P'\sub{\til{b'}}{\tilb} | Q')
  }
  \and
  \inferrule{P\arr\mu P'}{P | Q\arr\mu P' | Q}\bnames\mu\cap\fnames
  Q=\emptyset
  \and
  \inferrule{P\sub{\tilb}{\tila}\arr\mu P'}{\app {(\bind\tila P)}\tilb \arr{\mu} P'}
  \and
  \inferrule{\app F\tila\arr\mu P'}{\app K\tila\arr\mu P'} \mbox{ if }K\Defi F
\end{mathpar}
  \caption{Labelled Transition Semantics for the \pc}
  \label{f:lts:pi}
\end{figure*}

 We write
 $\Arr {}$ 
 for the reflexive transitive closure of $\arr{\tau}$, and 
 $\Arr{\mu}$ for $\Longrightarrow   \arr{\mu}\Longrightarrow$. Then
 $\Arcap \mu$ (resp.\ $\arr{\hat\mu}$) is $\Arr{\mu}$ (resp.\ $\arr\mu$) if $\mu$ is not
 $\tau$, and $\Arr{}$ (resp.\ $\arr\tau$ or $=$)
 otherwise.
 In bisimilarity  or other behavioural relations for the
 $\pi$-calculus we consider, no name instantiation
 is used  in the input clause or elsewhere; technically, the relations are \emph{ground}. 
 In the subcalculi we consider ground bisimilarity is a congruence and coincides with
 barbed congruence (congruence breaks in the full $\pi$-calculus).  Besides the simplicity
 of their definition, the ground relations make more effective the theory of unique
 solutions of equations (in particular, checking divergences is simpler, see Section~\ref{s:usol}).

The reference behavioural equivalence for $\pi$-calculi is
the usual \emph{barbed congruence}.
We recall its definition,  on a generic subset $\LL$ of
$\pi$-calculus processes.
 A \emph{$\LL$-context} is a  process of $\LL$ with a single
hole $\contexthole$ in it (the hole has a  sort, as it could be in
place of an abstraction). 
We write $P \Dwa_{a}\;$  
if $P$ can make an output action
whose subject is $a$, possibly after some internal moves. 

 We make only
output observable because this is standard in asynchronous
calculi; in the case of a synchronous calculus like \Intp, Definition~\ref{d:bc}
below yields synchronous barbed congruence, and adding also
observability of inputs does not change the induced equivalence. 
More details on this are given in Section~\ref{s:localpi}.

\begin{definition}[Barbed  congruence]
\label{d:bc}
 {\em Barbed bisimilarity} is the largest 
symmetric relation $ \wbb$ on 
 $\pi$-calculus  processes   
such that
  $P \wbb  Q$ implies:
\begin{enumerate}
\item 
If $P \Longrightarrow P'$ then  there is $ Q'$ such that
$Q \Longrightarrow Q'$ and
      $P' \wbb Q'$.
\item  $P \Dwa_{ a}\/$  iff $Q \Dwa_{a}\/$.
\end{enumerate}
Let $\LL$ be a set of $\pi$-calculus agents, and 
 $A, B \in \LL$. We say that $A$ and $B$ are 
  {\em barbed congruent in 
$\LL$}, written $A \wbc\LL B$,
if for each (well-sorted)  $\LL$-context $C$, it holds that $C[A] \wbb C[B]$.
\end{definition}

\begin{remark}
\label{r:abs} 
We have defined barbed congruence uniformly on processes and
abstractions (via a quantification on all process contexts). 
Usually, however,  
 definitions will only be given for processes; it is
then intended that they are extended to abstractions by requiring
closure under ground parameters, i.e., by supplying fresh names as
arguments. 
\end{remark} 

As for
 all contextually-defined behavioural relations, so 
barbed congruence is
hard to work with.  In all calculi we consider, it can be
characterised in terms of \emph{ground bisimilarity}, under the (mild)
condition that the processes are image-finite up to weak bisimilarity.
{
 (We recall  that
the class of processes {\em image-finite up to weak bisimilarity} 
is the largest subset ${\mathcal {IF}}$ of
$\pi$-calculus processes  which is  closed by transitions  and such that 
$P \in {\mathcal {IF}}$ implies that,   for all actions $\mu$, 
the set 
$\{P'  \st    P \Arr{\mu } P'\}$
quotiented by weak bisimilarity 
is finite. The definition is extended to abstractions as by
Remark~\ref{r:abs}.) }
All the agents in the paper, including those obtained by encodings
of the $\lambda$-calculus, are image-finite up to weak bisimilarity. 
The distinctive feature of \emph{ground} bisimilarity is that it does not
involve instantiation of the bound names of inputs (other than by
means of fresh names), and similarly for abstractions.
In the remainder, we omit the adjective `{ground}'. 

\begin{definition}[Bisimilarity]
\label{d:bisimulation}
A symmetric relation $\R$ 
on $\pi$-processes is a
\emph{bisimulation}, if whenever $P \,\R\, Q$ and $P \arr\mu P'$, then $Q \Arcap\mu Q'$
for some $Q'$ with $P' \,\R\, Q'$. 

Processes $P$ and $Q$ are \emph{bisimilar}, written
$P\approx Q$, if $P \,\R\, Q$ for some bisimulation $\R$. 
\end{definition}

 We extend $\approx$ to abstractions, as per Remark~\ref{r:abs}: 
$F 
\approx G 
$ if $\app {F}\tilb \approx \app {G} \tilb $ for fresh $\tilb$.

In the proofs, we shall also use \emph{strong bisimilarity}, written
$\sim$. Relation $\sim$ is defined as per Definition~\ref{d:bisimulation}, but
$Q$ must answer with a strong transition, that is, we impose $Q\arr\mu Q'$.

\paragraph{The Expansion preorder}
             
We define the expansion preorder, written $\lexn$, where $P\lexn Q$
intuitively means that $P$ and $Q$ have the same behaviour, and that
 $P$ may  not be `slower' (in the sense of doing more $\arr\tau$
transitions) than process $Q$.

\begin{definition}
\label{d:expan}
  \emph{Expansion}, written $\lexn$, is defined as the largest relation $\R$
  such
  that $P\RR Q$ implies
  \begin{enumerate}
  \item
 if $P\arr\mu P'$, 
then for some $Q'$,
  $Q\Arr{\mu}Q'$ and $P'\RR Q'$, and
\item if $Q\arr\mu Q'$, 
then
  for some $P'$, $P\arr{\hat\mu}P'$ and $P'\RR Q'$. 
  \end{enumerate}

  The converse of $\lexn$ is written $\exn$.
\end{definition}
As usual, expansion is extended to abstractions by requiring ground
instantiation of the parameters: $F\lexn F'$ if
$\app{F}{\tila} \bsim \app{F'}{\tila}$, where $\tila $ 
are
fresh names of the appropriate sort.

In the following we shall use two standard properties of expansion:
first, expansion is finer than bisimilarity, i.e., 
${\lexn}\subseteq{\bsim}$. Second, expansion, like bisimilarity, is
preserved by all contexts.




\subsection{The Subcalculi \Intp~and \alpi}\label{s:subcalculi}

We focus on two subcalculi of the $\pi$-calculus: the Internal $\pi$-calculus (\Intp),
and the Asynchronous Local $\pi$-calculus (\alpi). They are obtained by placing certain constraints
on prefixes. 

\paragraph{\Intp}

In \Intp, all outputs are bound. This is 
syntactically enforced  by replacing the output construct with 
the bound-output construct $\bout a \tilb .P$, which, with respect to
 the grammar
of the ordinary  $\pi$-calculus, is an abbreviation for $\res \tilb
\out a \tilb . P$. In all tuples (input, output, abstractions, applications) the
components are pairwise distinct so to make sure that distinctions among names are
preserved by reduction. 



\txthere{Forwarders}{How to fix forwarders for internal pi: infinite forwarders. We need also:\\
We consider a sorting system over the channel names, so that a name carries its sort (and therefore the number of names that are transmitted over it, as well as whether the channel is linear or not).
}

\begin{theorem}
\label{t:bisbc}~
 In  \Intp, on agents that are image-finite up to~\bsim,  barbed congruence  and
bisimilarity 
coincide. 
 \end{theorem}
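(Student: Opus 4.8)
The plan is to prove the two inclusions ${\bsim} \subseteq {\wbc{\Intp}}$ (soundness) and ${\wbc{\Intp}} \subseteq {\bsim}$ (completeness) separately. Only the completeness direction will use the image-finiteness hypothesis; soundness holds for arbitrary \Intp-agents.

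For soundness I would first establish that ground bisimilarity $\bsim$ is a congruence in \Intp, i.e.\ that it is preserved by every \Intp-context, and then check that $\bsim$ is itself a barbed bisimulation. The congruence property is exactly the point at which working in \Intp{} pays off: since every output is bound and emits fresh names, no context can instantiate the object of an output so as to break ground bisimilarity --- the defect responsible for the failure of this property in the full \pc. Granting congruence, the inclusion into barbed bisimilarity is immediate from the definitions: matching of internal moves is the bisimulation clause for $\tau$, and a barb $P \Dwa_a$ is witnessed by a (weak) output with subject $a$, which $\bsim$ matches by a weak output on $a$, so $Q \Dwa_a$. Combining the two facts, for every \Intp-context $C$ we get $C[P] \bsim C[Q]$ and hence $C[P] \wbb C[Q]$, which is precisely $P \wbc{\Intp} Q$.

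For completeness I would use the standard stratification argument, and this is the only place the hypothesis is needed. Define the approximants $\approx_n$ by taking $\approx_0$ to be the universal relation on \Intp-processes and $\approx_{n+1}$ to be the one-step unfolding of the bisimulation clause with $\approx_n$ in the premise. Image-finiteness up to $\bsim$ guarantees ${\bsim} = \bigcap_{n} {\approx_n}$, which is exactly what lets one pass from the finite approximants to the limit. It then suffices to prove ${\wbc{\Intp}} \subseteq {\approx_n}$ by induction on $n$; the base case is trivial and the inductive step is the heart of the argument.

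The crux, and the main obstacle, is to construct for each action $\mu$ a characteristic observing \Intp-context $C_\mu$ that raises a fresh success barb exactly when its argument performs $\mu$, while leaving a residual on which the induction hypothesis can be reapplied. Here too \Intp{} simplifies matters: to test an input $\inp a \tilb$ the observer supplies a fresh tuple on $a$, and since the relations are ground no instantiation issues arise; to test a bound output $\bout a \tilb$ the observer receives on $a$ and, because the emitted names are guaranteed fresh, it need not guess their identity, so the absence of a matching operator is harmless. Concretely, given $P \wbc{\Intp} Q$ with $P \arr\mu P'$, placing both processes in $C_\mu$ and using that $\wbc{\Intp}$ is preserved by the triggered reduction and by observation of the fresh barb, I would extract a matching weak transition $Q \Arcap\mu Q'$ together with $P' \wbc{\Intp} Q'$ for the derivatives; the induction hypothesis then gives $P' \approx_n Q'$. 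Assembling the cases $\mu = \tau$, $\mu = \inp a \tilb$ and $\mu = \bout a \tilb$ yields $P \approx_{n+1} Q$, closing the induction and the theorem.
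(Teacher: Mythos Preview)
The paper does not actually prove this theorem: it is stated in the background section as a known result about \Intp, implicitly deferring to the original work on the Internal $\pi$-calculus. Your proposal reconstructs the standard proof from that literature, and the overall strategy --- congruence of ground bisimilarity in \Intp\ for soundness, stratification via approximants $\approx_n$ together with characteristic testing contexts for completeness under image-finiteness up to $\bsim$ --- is the right one and matches what the cited reference does.

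One point worth tightening: the \Intp\ fragment used in the paper has neither sum nor matching, so the characteristic contexts $C_\mu$ must be built from prefix, parallel composition, restriction and replicated input only. You correctly note that matching is unnecessary for bound outputs (the emitted names are guaranteed fresh), but the absence of sum means a single observer cannot branch on which action to test; the usual workaround is to test one action at a time and rely on the inductive structure, or to run parallel observers on disjoint fresh barb channels. Extracting both the matching weak transition \emph{and} the relation $P' \wbc{\Intp} Q'$ on the derivatives from the barb-and-reduction game then needs a slightly more careful argument than your sketch gives (one typically shows the stronger fact that the context ``consumes'' the action and leaves the derivative in a position where it is still related, up to a fixed static context that can be peeled off). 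This is a known technicality rather than a genuine gap in your plan.
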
 

The encoding of the $\lambda$-calculus into  \Intp\ 
yields processes that are image-finite up to \bsim. 
Thus we can use 
bisimilarity as a proof technique for barbed congruence. 
 
\paragraph{\alpi}

\alpi\ is defined by enforcing that in an input $\inp a\tilb.P$, all
names in $\tilb$ appear only in output position in $P$.
Moreover, 
 \alpi{} being  \emph{asynchronous},
  output prefixes have no continuation; 
in the grammar of the $\pi$-calculus this corresponds to having only outputs of the form 
 $\out a\tilb.\nil$ (which we will simply write $\out a\tilb$).

 In   
\alpi, to maintain the  characterisation of  barbed congruence as (ground) bisimilarity,
the transition system has to be modified ~\cite{localpi}, 
 allowing the dynamic introduction of additional processes (the
  `links', sometimes also
  called forwarders).
In Section~\ref{s:localpi}, we present the modified transition system for
\alpi, upon which weak bisimilarity is defined.
We also explain how they allow us to obtain for \alpi{} a property  
similar to that of Theorem~\ref{t:bisbc} for \Intp. 

\section{Unique solutions in  \Intp\ and \alpi}
\label{s:usol}

We adapt   the proof technique of unique solution of equations, from~\cite{usol} to
the calculi  \Intp\ and \alpi, in order to  derive bisimilarity results. 
The technique is discussed in ~\cite{usol} on the asynchronous $\pi$-calculus
(for possibly-infinite systems of equations).
The structure of the 
 proofs for \Intp\ and \alpi\ is similar; 
 in particular the completeness part 
is essentially the same 
 because
 bisimilarity  is the same. 
 The differences in the syntax of \Intp, and in the transition
 system of \alpi, show up only in certain technical details of the
 soundness 
 proofs. 

%
\iffull
The results presented in this section hold both for \Intp\ and for
\alpi. 
\fi




\paragraph*{Equation expressions}
We need variables to write equations. We  use
 capital
letters  $X,Y,Z$
 for  these variables and call them \emph{equation variables}
 (sometimes simply \emph{variables}).
 The body of an equation is a name-closed abstraction
possibly containing equation  variables 
(that is,  applications can also be of the form $\app X\tila$). 
Thus, the solutions of equations  are abstractions. 
Free names of
equation expressions and contexts are defined as for agents. 
%

%
We use $E$ to range over name-closed abstractions; and 
 $\EE$ to range over systems of equations, defined as follows.
 In all the definitions, the indexing set $I$ can be infinite.

\begin{definition}
Assume that, for each $i$ belonging to 
 a countable indexing set $I$, we have a variable $X_i$, and an expression
$E_i$, possibly containing  some variables. 
Then 
$\{  X_i = E_i\}_{i\in I}$
(sometimes written $\til X = \til E$)
is 
  a \emph{system of equations}. (There is one equation for each
  variable $X_i$; we sometimes use $X_i$ to refer to that equation.)

  A system of equations is \emph{guarded} if each
  occurrence of a variable in the body of an equation is underneath a
  prefix.
\end{definition}


$E[\til F]$  is the abstraction resulting from $E$ by
replacing each occurrence of the variable $X_i$ 
with the abstraction $F_i$ 
(as usual assuming
$\til F$ and $\til X$ have the same sort). 
  This is syntactic
  replacement.  
However recall that we identify an agent with its normalised expression: hence 
replacing $X$ with $(\til x)P$ in $X\param{\til a}$ is the same as replacing 
$X\param{\til a}$ with the process $P\subst{\til x}{\til a}$.


\begin{definition}\label{d:un_sol}
Suppose  $\{  X_i = E_i\}_{i\in I}$ is a system of equations. We say that:
\begin{itemize}
\item
 $\til F$ is a \emph{solution of the 
system of equations  for $\bsim$} 
if for each $i$ it holds
that $F_i \bsim E_i [\til F]$.
\item The  system has 
\emph{a unique solution for $\bsim$}  if whenever 
$\til F$ and $\til G$ are both solutions for $\bsim$, we have 
$\til F \bsim \til G$. \end{itemize}
 \end{definition} 

 \begin{definition}[Syntactic solutions]
   The \emph{syntactic solutions} of the system of equations $\til X =\EeqBody{}{}$ are the  
   recursively defined constants $\KEi E \Defi E_i[\KE]$, for each
   $i\in I$, where $I$ is the indexing set of the system of equations.
 \end{definition}
The syntactic solutions
of a system of equations 
are indeed solutions of
it. 
%




  A process $P$ \emph{diverges} if it can perform an infinite sequence
  of internal moves, possibly after some visible ones 
  (i.e., actions  different from $\tau$); formally,  there are
  processes $P_i$, $i\geq 0$, and some $n$, such that
  $P=P_0\arr{\mu_0} P_1 \arr{\mu_1} P_2 \arr{\mu_2}\dots$ and for all
  $i>n$, $\mu_i=\tau$. We call a \emph{divergence of $P$} the sequence
  of transitions $\big(P_i\arr{\mu_i}P_{i+1}\big)_{i}$.
 In the case of an abstraction $F$, as per Remark~\ref{r:abs},
$F$ has a divergence 
if the process $\app F\tila$ has a divergence, where $\tila$ are fresh
names. A tuple of agents $\til A$ \emph{is divergence-free} if none of the
components $A_i$ has a  divergence. 

The following result is the technique we rely on to establish
completeness of the encoding. As announced above, it holds both in
\Intp\ and in \alpi.
\begin{theorem}\label{thm:usol}
In \Intp\ and \alpi,
a guarded system of equations 
with divergence-free  syntactic 
solutions  has unique solution for \bsim. 
\end{theorem}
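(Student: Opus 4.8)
The plan is to follow the standard schema for unique-solution theorems: given two solutions $\til F$ and $\til G$ of a guarded system $\{X_i = E_i\}_{i\in I}$ with divergence-free syntactic solutions $\KE$, I want to show $\til F \bsim \til G$ by exhibiting a bisimulation (up to some auxiliary relations) relating them. The key observation is that both $\til F$ and $\til G$ can be approximated, transition by transition, by the syntactic solutions $\KE$, and that the syntactic solutions provide a canonical ``reference'' against which to compare the two arbitrary solutions. Concretely, I would first establish that each solution $F_i$ is bisimilar to $E_i[\til F]$, and then, by iterating the equation and using guardedness, that after unfolding the system $n$ times every solution agrees with the $n$-th unfolding of the syntactic solution up to a ``tail'' that is still guarded by at least $n$ prefixes. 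Guardedness is what ensures that each unfolding consumes at least one prefix, so the residual contribution of the (still-unknown) variables is pushed arbitrarily deep.

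The heart of the argument is then to relate $\til F$ and $\til G$ through the syntactic solutions $\KE$. I would define the candidate relation to contain pairs of the form $(C[\til F], C[\til G])$ where $C$ ranges over the reachable ``unfolding contexts'' generated by following transitions of $\KE$, and prove it is a bisimulation up-to-expansion (using that ${\lexn}\subseteq{\bsim}$ and that expansion is preserved by contexts, as recalled in the excerpt). When $C[\til F]$ makes a transition $\arr\mu$, guardedness guarantees this transition is driven by a prefix belonging to the context part $C$ (coming from the equation bodies), not by the solutions plugged into the holes; hence $C[\til G]$ can match it with the same visible action and land in a matching context $C'[\til G]$, preserving the invariant. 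The fact that the syntactic solutions are divergence-free is exactly what rules out the pathological case where the unfolding could produce an infinite sequence of silent transitions without ever consuming observable structure --- this is precisely the obstruction that makes naive ``unique solution'' statements false for guarded-but-divergent systems (e.g. equations like $X = \tau.X$).

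The main obstacle I anticipate is the soundness bookkeeping specific to \Intp{} and \alpi{}. The completeness direction (matching transitions) is essentially the same in both calculi because bisimilarity is the same; the delicate part is controlling \emph{divergence} along the unfolding, and here the syntactic differences matter: in \Intp{} all outputs are bound, while in \alpi{} the transition system is augmented with dynamically introduced links/forwarders, so the divergence analysis of the syntactic solution $\KE$ must account for these extra processes. I expect the technical crux to be showing that a divergence of an unfolded solution $C[\til F]$ would force a divergence of the corresponding syntactic solution $\KE$ --- contradicting divergence-freeness --- which requires carefully tracking that the $\tau$-transitions contributed by the plugged-in solutions can be absorbed (via expansion) into finitely many steps, so that any genuinely infinite $\tau$-sequence must originate from the context/equation structure itself. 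Once this divergence-control lemma is in place, the up-to-expansion bisimulation closes and yields $\til F \bsim \til G$, establishing uniqueness.
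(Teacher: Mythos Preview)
Your proposal is essentially correct and follows the same schema as the proof in~\cite{usol}, to which the paper defers. You have identified all the key ingredients: the candidate relation built from pairs $(C[\til F], C[\til G])$ for contexts $C$ reachable from unfoldings $E^n$, the role of guardedness in ensuring that unfoldings push the holes under more prefixes, and the crucial use of divergence-freeness of $\KE$ to close the argument. Your remarks about the calculus-specific bookkeeping for \Intp{} versus \alpi{} are also on target.

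One refinement worth making precise: your middle paragraph suggests that once $C$ is an unfolding context, any transition of $C[\til F]$ is automatically driven by $C$. This is only true as long as the holes in $C$ remain guarded; after enough transitions they may become exposed. The actual mechanism in~\cite{usol} is an \emph{unfolding lemma}: whenever a transition of $C[\til F]$ touches a hole, one replaces the relevant $F_j$ by $E_j[\til F]$ (using $F_j \bsim E_j[\til F]$), so the transition is now absorbed by the freshly introduced, guarded context $E_j$. The point of divergence-freeness is precisely that this re-unfolding cannot go on forever along a single weak step --- otherwise the accumulated context transitions would exhibit a divergence of $\KE$. Your final paragraph shows you see this; just be aware that this is where the real work lies, not in a blanket claim that guardedness is preserved under transitions.
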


The proof of Theorem~\ref{thm:usol} is very similar to a similar
theorem, for the \pc, which is presented in~\cite{usol}.
%
Moreover, we present a proof of unique solution for trace inclusion and trace equivalence (in \Intp)
 in~\ref{a:usoltrace}. The latter is also very similar in structure
to the two aforementioned proofs.
For a pedagogical presentation of these proofs,
we refer the reader to \cite{usol}, specifically to 
the proof of unique solution for weak bisimilarity in the setting of 
CCS.


Techniques for ensuring termination, hence divergence freedom, have been 
studied in, e.g., \cite{termination1,termination2,termination3}.

\subsection*{Further Developments of the Technique of Unique Solution
  of Equations.}


We  present some further developments to the theory of unique solution of equations, 
that are needed  for the results in this paper. 
The first result allows us to derive the unique-solution property for a system of
equations from the analogous property of an extended system.

\begin{definition}
\label{d:extend}
A system of equations $\Eeq'$ \emph{extends} system $\Eeq$
if there exists a fixed set of indices $J$ such that any solution of
$\Eeq$ can be obtained from a solution of $\Eeq'$ by removing the
components corresponding to indices in $J$. 
\end{definition} 


\begin{lemma}
  \label{t:transf:equations}
  Consider two systems of equations 
$\Eeq'$ 
and $\Eeq$ 
where 
  $\Eeq'$
extends $\Eeq$. 
If $\Eeq'$ has a unique solution, then the property also holds for
$\Eeq$. 
\end{lemma}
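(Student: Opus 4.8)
The plan is to unfold Definition~\ref{d:un_sol} and reduce uniqueness for $\Eeq$ to uniqueness for $\Eeq'$ by a straightforward lift-and-project argument. Let $I'$ denote the indexing set of $\Eeq'$ and $J$ the set of indices supplied by Definition~\ref{d:extend}, so that the indexing set of $\Eeq$ is $I = I' \setminus J$. Recall that a tuple comparison $\til F \bsim \til G$ means $F_i \bsim G_i$ for every component, and that ``$\til F$ is obtained from a solution of $\Eeq'$ by removing the components in $J$'' means precisely that there is a solution $\til G$ of $\Eeq'$ with $G_i = F_i$ (syntactic identity) for all $i \in I$. Starting from two solutions of $\Eeq$, I would therefore lift each to a solution of the larger system, appeal to its unique solution property, and read the result back on the shared indices.

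Concretely, I would first take arbitrary solutions $\til F$ and $\til H$ of $\Eeq$ for $\bsim$. By the extension hypothesis, I obtain solutions $\til G$ and $\til K$ of $\Eeq'$ such that $G_i = F_i$ and $K_i = H_i$ for every $i \in I$. Since $\Eeq'$ has a unique solution for $\bsim$, I conclude $\til G \bsim \til K$, that is, $G_j \bsim K_j$ for all $j \in I'$, and in particular $G_i \bsim K_i$ for all $i \in I$. Projecting to the indices in $I$ then gives $F_i = G_i \bsim K_i = H_i$ for each $i \in I$, hence $\til F \bsim \til H$. As $\til F$ and $\til H$ were arbitrary solutions of $\Eeq$, this establishes that $\Eeq$ has a unique solution for $\bsim$, which is the claim.

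There is no deep obstacle here; the proof is essentially a bookkeeping argument about index sets, and the only points requiring care are notational. First, I must be careful that ``removing the components in $J$'' is read as exact syntactic agreement $G_i = F_i$ on $I$, so that I can freely combine this equality with the bisimilarity coming from uniqueness of $\Eeq'$ (using that $=$, i.e.\ identity modulo $\alpha$-conversion, refines $\bsim$). Second, I should note that only one direction of the correspondence in Definition~\ref{d:extend} is used: every solution of $\Eeq$ must lift to a solution of $\Eeq'$, but I never need that a solution of $\Eeq'$ projects back to a solution of $\Eeq$. Finally, since solutions are abstractions, all comparisons are understood componentwise and up to ground instantiation of parameters as in Remark~\ref{r:abs}, but this plays no active role beyond fixing what $\til G \bsim \til K$ means.
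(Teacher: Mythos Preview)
Your proof is correct and is exactly the natural lift-and-project argument one expects from Definition~\ref{d:extend}. The paper does not give an explicit proof of this lemma, treating it as immediate from the definitions; your argument spells out precisely the intended reasoning, and your observations about which direction of the correspondence is used and about syntactic versus bisimilarity identification are accurate.
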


We shall use Lemma~\ref{t:transf:equations} in
Section~\ref{s:complete}, in a situation where we transform a certain
system into another one, whose uniqueness of solutions
is easier to establish. 
\iffull
Then, by 
Lemma~\ref{t:transf:equations}, the property 
holds for the initial system.




\fi

\begin{remark}
We cannot derive Lemma~\ref{t:transf:equations} by comparing the syntactic solutions of
the two systems  $\Eeq'$ and $\Eeq$.
For instance,  the equations
  $X=\tau.X$ and $X=\tau.\tau\dots$ (where $\tau.\tau\dots$ abbreviates
  the corresponding recursive definition)  have strongly bisimilar   syntactic 
  solutions, yet only the latter equation has the unique-solution
  property.
(Further, Lemma~\ref{t:transf:equations} allows us to compare systems
of different size). 
\end{remark}

\medskip

The second development is a generalisation of 
Theorem~\ref{thm:usol} and Lemma~\ref{t:transf:equations} to trace equivalence and 
to trace preorders; we postpone its presentation to Section~\ref{s:contextual}.




\section{Milner's Encodings}
\label{s:enc:cbv}
\subsection{Background: encodings $\qencm$ and $\qencmp$}


Milner noticed \cite{milner:inria-00075405,encodingsmilner} that his
call-by-value 
 encoding can be easily tuned so to mimic
forms of 
evaluation in which, in an application $MN$, the function $M$ is run
first, or the argument $N$ is run first, or function and argument are
run in parallel (the proofs are actually carried out for this last
option). We chose here the first option.
%

The core of any  encoding of the  $\lambda$-calculus  into a process calculus is the 
translation of function application. This  
becomes a particular form of 
parallel combination of two processes, the function and its argument;  
 $\betav$-reduction  is then  modelled as   process interaction.

The encoding of a $\lambda$-term is parametric over a name; this may
be thought of as the  \emph{location} of that term, or as its
\emph{continuation}.  
A term that    becomes a value   signals so at its continuation name
and, in doing so, 
it grants 
access 
 to the body of the
value. Such body   is replicated, so that the value may be
copied several times. When the value is a function, its body can
receive two names: (the access to) its value argument, and the
following continuation.
In the translation of application, first the function is run, then
the argument; finally the function is informed of its argument and 
continuation.

In the  original paper~\cite{milner:inria-00075405},
Milner presented
two candidates for the encoding of call-by-value
$\lambda$-calculus~\cite{DBLP:journals/tcs/Plotkin75}.  They follow
the same pattern of 
translation, in particular regarding application (as described above),
but with a technical difference in the rule for
variables.  One encoding, $\qencm$, is defined as follows
 (for the case of application, we adapt the encoding from parallel call-by-value to
left-to-right call by value, as described above):
\begin{align*}
\encm{\abs xM} & \deff 
\bind p\outb p y.!\inp y{x,q}.\encma Mq \\
\encm{MN}& \deff  \bind p(\new q)(\encma Mq
|\inp qy.\new r(\encma Nr| \inp rw.\out
  y{w,p})) \\
\encm{x}& \deff \bind p \out p x
\end{align*}
In the other encoding, $\qencmp$,
application and $\lambda$-abstraction are treated as in 
$\qencm$; the rule for variables is:
$$
\encmp{x}  \deff 
\bind p \outb p y.!\inp y {z,q}.\out x{z,q}
\enspace.
  $$



In $\qencmp$, a
 $\lambda$-calculus variable gives rise to a one-place buffer.  As the
computation proceeds, these buffers are chained together, gradually
increasing the number of steps necessary to simulate a
$\beta$-reduction.  This phenomenon does not occur in $\qencm$, where
the occurrence of a variable disappears after it is used.
Hence the encoding $\qencm$ is  more efficient than $\qencmp$,

\subsection{Some problems with the encodings}
\label{ss:ot}

The immediate free output in the encoding of variables in  $\qencm$ 
breaks the validity of $\betav$-reduction; i.e., there exist a term $M$
and a value $V$ such that $\encm{(\lambda x. M)V } \not\bsim
\encm{ M \sub V x }$~\cite{sangiorgiphd}. 
The encoding $\qencmp$ fixes 
this by communicating, instead 
of a
free name,  a 
fresh  pointer to that name. 
 Technically, the initial free output of $x$ is replaced by a
 bound output coupled with a link to $x$ (the process
 $!\inp y {z,q}.\out x{z,q}$, receiving at $y$ and re-emitting at $x$).
 Thus  $\betav$-reduction is validated, i.e., 
$\encm{(\lambda x. M)V } \bsim
\encm{ M \sub V x }$ for any  $M$ and $V$~\cite{sangiorgiphd}.

%
(The final version of Milner's paper~\cite{encodingsmilner} 
was
written after the results in~\cite{sangiorgiphd} were known and presents
only the encoding $\qencmp$.)

Nevertheless,      $\qencmp$ only delays the free output, as the added link
contains itself a free output. 
As a consequence, we can show that other desirable equalities of
call-by-value are broken in
$\qencmp$. An example is law~\reff{eq:nonlaw} from the
Introduction, as stated by Proposition~\ref{p:nonlaw} below.
 This law is desirable (and indeed valid for contextual equivalence,
 or the Eager-Tree equality) 
intuitively because, in any substitution closure 
of the law,  either both terms diverge, or they
 converge to  the same value. 
 The same argument holds 
 for their $\lambda$-closures, $\abs x x\val$ and $\abs x I(x\val)$. 

 We recall that $\wbc\pi$ is barbed congruence in the 
$\pi$-calculus.
\begin{restatable}[Non-law]{proposition}{nonlaw}\label{p:nonlaw}
For any value $\val$, we have:
$$
\encm {I(x\val)}
\nwbc\pi
 \encm {x\val}
\quad\mbox{ and }\quad
\encmp {I(x\val)} 
\nwbc\pi
\encmp {x\val} 
\enspace.
$$
\end{restatable}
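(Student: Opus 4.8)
The plan is to exhibit a single $\pi$-context that tells the two encodings apart, by exploiting the one feature that distinguishes them operationally: the way the outer continuation $p$ is handed over. First I would run both encodings until they are ready to emit on the free name $x$. A routine computation gives
$$\encma{x\val}{p}\reds \res{y_1}(B | \out x{y_1,p})$$
and
$$\encma{I(x\val)}{p}\reds \res{y_0}\res r\res{y_1}\big(\,!\inp{y_0}{z,q}.\out q z | B | \inp rw.\out{y_0}{w,p} | \out x{y_1,r}\,\big),$$
where $B$ is the (inert, replicated) residual of the argument $\val$ and $y_1$ is the fresh pointer to its value. The crucial difference is visible in the object of the output on $x$: in $\encma{x\val}{p}$ the continuation handed to the environment is the outer name $p$ itself, whereas in $\encma{I(x\val)}{p}$ it is a fresh name $r$, connected to $p$ only through the \emph{single} (non-replicated) forwarder $\inp rw.\out{y_0}{w,p}$ followed by the body of $I$.

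Next I would define the distinguishing context
$$C \defi \res x\res p\big(\, \hdot\param p \mid \inp x{v,k}.(\out k{b_1} | \out k{b_2} | \inp pw.\inp p{w'}.\out s a)\,\big),$$
with $s$ a distinguished observable name and $b_1,b_2,a$ fresh. This context catches the output on $x$, discards the value pointer $v$, resends \emph{twice} on the received continuation $k$, and triggers the barb $\out s a$ only after receiving two messages on $p$. Plugging in $\encm{x\val}$ we have $k=p$, so the two messages $\out k{b_1},\out k{b_2}$ reach the trap on $p$ directly, which fires: $C[\encm{x\val}]\Dwa_s$. Plugging in $\encm{I(x\val)}$ we have $k=r$; the one-shot forwarder turns exactly \emph{one} of $\out r{b_1},\out r{b_2}$ into a single output on $p$ (through $I$'s body), while the other stays stuck on $r$ with no consumer, and the context never touches $v$, so the value body $B$ produces nothing on $p$ either. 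Hence the trap receives only one message on $p$ and never fires, giving $C[\encm{I(x\val)}]\not\Dwa_s$. Since barbed bisimilarity requires matching weak barbs, this yields $\encm{I(x\val)}\nwbc\pi \encm{x\val}$.

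The main obstacle, and the reason a more naive context fails, is that $I$ is a \emph{faithful} forwarder: any single output sent on the fresh continuation $r$ is relayed to $p$, so the usual interference test that distinguishes reception of a known name from reception of a freshly extruded one is defeated. The point I would stress is that the forwarding provided by $I$ is \emph{linear} (the prefix $\inp rw.\ldots$ is not replicated), whereas the direct access to $p$ granted by $\encm{x\val}$ is unrestricted; demanding two outputs on $p$ is exactly what exposes this mismatch. Finally, the same context and the same analysis apply verbatim to $\qencmp$: its variable link only changes the residual $B$, adding an inert replicated link on a private name, while the continuation handed to $x$ is still the free $p$ in $\encmp{x\val}$ and a freshly generated, one-shot-linked name in $\encmp{I(x\val)}$. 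This gives $\encmp{I(x\val)}\nwbc\pi\encmp{x\val}$ and completes the proof.
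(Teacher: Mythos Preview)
Your argument is correct. The paper's own proof is considerably less explicit: for the special case $\val=y$ and the encoding $\qencm$ it simply computes, via algebraic laws for $\wbc\pi$,
\[
\encma{xy}p \;\wbc\pi\; \out x{y,p}
\qquad\text{and}\qquad
\encma{I(xy)}p \;\wbc\pi\; \new q\big(\out x{y,q} \mid \inp q z.\out p z\big),
\]
and then stops, leaving the inequivalence of these two normal forms to the reader; the surrounding text identifies it as the standard failure of the ``link law'' in the full $\pi$-calculus, and remarks that the same calculation goes through for arbitrary $\val$ and for $\qencmp$.

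Your route, by contrast, actually constructs a distinguishing context. The mechanism you exploit---that the continuation forwarder $\inp r w.\out{y_0}{w,p}$ installed by the encoding of application is a one-shot (non-replicated) input---is sound and is a perfectly good witness. One remark: the more canonical witness for the failure of link laws in $\pi$ uses the received name $k$ in \emph{input} subject position, for instance
\[
\res x\res p\big(\,[\cdot]\param p \;\big|\; \inp x{v,k}.(\inp k w.\out s a \mid \out p b)\,\big).
\]
If $k=p$ the internal communication on $p$ releases the barb on $s$; if $k$ is a fresh $r$ one is left with two inputs on $r$ and no sender, plus a stranded $\out p b$, and the barb never appears. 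This exploits precisely the capability (using a received name for input) that \alpi{} and \Intp{} rule out, which is why the law is recovered there. Your two-outputs test is a genuinely different mechanism, relying on the linearity of the continuation forwarder rather than on input capability; both are valid distinguishing contexts in the full $\pi$-calculus.
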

(The law is
violated also under coarser equivalences, such as
contextual  equivalence.) 
Technically, the reason why the law fails in $\pi$ can be illustrated
when $\val=y$, for encoding $\qencm$. We have:
\begin{alignat*}{3}
  \encma {xy} p &\wbc\pi \out x {y,p}
                    \mbox{ \hspace{1ex}}
                    \\[\mypt]
  \encma {I(xy)} p&\wbc\pi (\new q)( \out x {y,q} | \inp q z.\out p z  )    
\end{alignat*}

(details of the calculation may be found in~\ref{a:milner})

In presence
of the normal form $x y$, the identity $I$ becomes observable. Indeed, in the second
term, a fresh name, $q$, is sent instead of continuation $p$, and a
link between $q$ and $p$ is installed. This corresponds to a
law
which is valid in 
\alpi, but not in $\pi$.


  \begin{remark}[Generalisations of law~\reff{eq:nonlaw}]
    The phenomenon observed in law~\ref{eq:nonlaw} can be observed in
    a more general setting.
    We can observe that for any evaluation context $\evctxt$ and any value $\val$, we have 
$$\encma {\evctxt[I (x \val)]}p \not\wbc\pi \encma {\evctxt[x
\val]}p
\enspace.$$
One may want to generalise further this law, by replacing the identity $I$ by an 
arbitrary function $\abs z M$, provided $M$ somehow ``uses'' $z$. 
We may take $M$ to be equal to $\evctxt' [z]$, for some evaluation context $\evctxt'$, 
or any term that reduces to 
such a normal form. 
We can then show, indeed: 
$$\encma {\evctxt[\abs zM (x \val)]}p \not\wbc\pi \encma {\evctxt[M\subst z{x
\val}]}p
\enspace.$$
Generalising further, to a term $M$ whose normal form is not written $\evctxt'[z]$, 
is outside the scope of this work. This could be related to Accattoli
and Sacerdoti Coen's notion of ``useful reduction'' for the call-by-value \lc~\cite{useful}.
\end{remark}


\subsection{Well-behaved encodings}


The problem put forward in Proposition~\ref{p:nonlaw} can be avoided by iterating the transformation that takes us
from $\qencm $ to $\qencmp$ (i.e., the replacement of a free output
with a bound output so to avoid all emissions of free names). Thus the
target language becomes Internal $\pi$; the resulting encoding is
analysed in Section~\ref{s:enc:pii}.

Another solution is to control the use of name capabilities in
processes. In this case the target language becomes \alpi,
 and we need not modify
 the  initial encoding  $\qencm$. This
situation is analysed in Section~\ref{s:localpi}.


In both solutions, the encoding uses two kinds of names: \emph{value names} $x,y,
\dots, v,w,\dots$ and  \emph{continuation names} $p,q,r,\dots$,.
For  simplicity,  we assume that  the set of value names contains
the set of $\lambda$-variables. 

Continuation names are always used linearly, meaning that they are 
only used once in subject position (once in input and once in output). On the other hand,  value 
names may be used multiple times.
Continuation names are used to transmit a value name, and value names
are used to transmit a pair consisting of a value name and a
continuation name.
This is a very mild form of typing. We could
avoid the distinction between these two kinds of names, at the cost of
introducing additional replications in the encoding. 

Moreover, in both solutions,
the use of  link processes
validates the following law~---
a form of $\eta$-expansion~---
 (the law fails  for Milner's encoding into the $\pi$-calculus): 
 \[
   \abs y x y = x
   \enspace.
\]
In the call-by-value  $\lambda$-calculus this is a useful law, 
that holds because  substitutions  replace variables with values.


\section{Encoding in the Internal \pc}
\label{s:enc:pii}
We present the encoding in Section~\ref{ss:enc_pii}, together with an
optimised version of it, which is actually the main object of study to
establish soundness. We then proceed to establish 
validity 
of $\betav$-reduction and operational correspondence, both for the
optimised encoding, in
Section~\ref{s:oper:corr}. This allows us to derive soundness w.r.t.\
$\enfe$ of the
original (unoptimised) encoding in Section~\ref{s:sound}.
In Section~\ref{s:complete},  
we establish the completeness of the encoding, 
using the unique solution technique.

Some proofs are omitted from the main text, and 
are given in~\ref{a:soundness:pii}.

\subsection{The Encoding and its Optimised Version} 
\label{ss:enc_pii}

\begin{figure}[t]
  \begin{mathpar}
    \begin{array}{rcl}
\enca {\abs x M}&\deff& \bind p \outb p y . !\inp y {x,q}.\enc M q
\\[\mypt]
\enca x& \deff& \bind p \outb p y. \fwd y x
\\[\mypt]
      \enca {MN} &\deff& \bind p \new q \big(\enc M q
                         | \inp q y. \new r \big(\enc N r |
      \\[\myptSmall]
                \multicolumn{3}{r}{  ~~ \inp r w .\outb y {w',p'}.(\fwd {w'} w|\fwd {p'} p)\big)\big)}
    \end{array}
  \end{mathpar}
\caption{The encoding into \Intp} 
\label{f:enc_internal}
\end{figure}

Figure~\ref{f:enc_internal} presents the encoding into \Intp, derived
from Milner's encoding by removing the 
free outputs as explained in Section~\ref{s:enc:cbv}.
Process   $\fwd ab$  represents a \emph{link} (sometimes called forwarder; 
for readability we use the infix
notation $\fwd ab$ for the constant $\fwd{}{}$). It 
transforms all outputs at $a$ into outputs at $b$ (therefore $a,b$ are
names of the same sort). The body of $\fwd ab$ is replicated,
unless $a$ and $b$ are \emph{continuation names}.
The
definition of the constant $\fwd{}{}$ therefore is: 
$$
\begin{array}{rcl}
\fwd{}{} &\Defi &
\left\{ \begin{array}{l}
\bind{p,q} \inp p {x}.\outb {q} {y}
          . {\fwd{ y}{ x}}\\[\myptSmall]
\multicolumn{1}{r}{  
       ~\quad  \mbox{if $p,q$ are continuation names}
                   }              \\[\mypt]
\bind{x,y} ! \inp x {z,p}.\outb y {w,q}
          .({\fwd{ q}{ p}}|\fwd w z)  \\[\myptSmall]
\multicolumn{1}{r}{  
          ~\quad \mbox{if $x,y$ are value names} 
}\end{array} \right. 
\end{array}
 $$
(The distinction between continuation names and value names is not necessary, but simplifies the proofs.)

We recall some useful properties related to compositions of  links \cite{internalpi}. 

\begin{restatable}{lemma}{linkcomp}
    \label{l:fwd}
  We have: 

\begin{enumerate}
\item $\new q (\fwd p q|\fwd q r)\exn \fwd pr
$, for all continuation names $p,r$.
\item $\new y (\fwd xy|\fwd yz)\exn \fwd xz
  $, for all 
  value 
  names $x,z$.
\end{enumerate}
\end{restatable}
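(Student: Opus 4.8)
The relation $\exn$ is the converse of $\lexn$, so the two claims unfold to $\fwd p r \lexn \new q(\fwd p q | \fwd q r)$ and $\fwd x z \lexn \new y(\fwd x y | \fwd y z)$: the direct link is the faster process, since whenever the chained link relays a message it must first pass it through the intermediate name, at the cost of one extra internal communication, and by clause~2 of Definition~\ref{d:expan} the direct link may answer such a $\tau$ with the idle (empty) move. The plan is to exhibit a single relation $\R$ containing both displayed pairs and to check that it meets the two clauses of $\lexn$ on its left components. The two items must be proved \emph{together}: inspecting the definition of $\fwd{}{}$, a value link once triggered spawns both a continuation link and a value link, whereas a continuation link spawns a value link; hence the coinduction for item~1 feeds into item~2 and conversely.

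I would take $\R$ to be generated by the two base pairs (for all admissible name choices) together with the transient configurations reached while a message is being relayed, and then close the resulting set under parallel composition and under restriction. The transition analysis is then uniform. For the continuation link $\fwd p r = \inp p x.\outb r y.\fwd y x$, firing $\inp p x$ is matched by the chained link doing $\inp p x$ followed by its forced communication on $q$, reaching $\new{y_1}(\fwd{y_1}x | \outb r{y_2}.\fwd{y_2}{y_1})$; performing the output on $r$ on both sides then leaves $\fwd y x$ against $\new{y_1}(\fwd y{y_1} | \fwd{y_1}x)$, which is exactly a base pair of item~2. Symmetrically, triggering the value link $\fwd x z$ with $\inp x{w,p}$ and performing the induced communication on $y$ produces, after the output on $z$, the residual $\fwd x z | \fwd{p_3}p | \fwd{w_3}w$ on the left and, after re-associating the restrictions so that each relayed name is confined to the two links it connects, the parallel product $\new y(\fwd x y | \fwd y z) | \new{p_1}(\fwd{p_3}{p_1} | \fwd{p_1}p) | \new{w_1}(\fwd{w_3}{w_1} | \fwd{w_1}w)$ on the right, i.e.\ one value base pair, one continuation base pair and one more value base pair. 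In every case the extra $\tau$ of the chained link is answered by the idle move of the direct link, which is precisely what makes $\R$ an expansion rather than a strong bisimulation.

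The only genuine work left is to confirm that $\R$, closed under $|$ and $\new{}$, still satisfies the expansion clauses. Since the two links of a composition never synchronise except through their single shared relay name, the transitions of the composite terms project onto transitions of the generators, and the preservation of $\lexn$ under all contexts (stated after Definition~\ref{d:expan}) guarantees that the residuals remain related. Equivalently, one may verify the game only on the base pairs and discharge the parallel/restriction bookkeeping by expansion up-to-context, additionally using expansion up-to-$\lexn$ to collapse the pre-communication transient states into the post-communication ones.

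The main obstacle is the interlocking of the two items combined with the bound-output and scope-extrusion bookkeeping: at each step one must track the fresh names emitted in the bound outputs, re-associate the restrictions so that each restricted relay name is localised to exactly the two links it joins, and thereby recognise the residual as a parallel product of base and transient pairs of \emph{both} kinds. This scope management, rather than any deep behavioural argument, is where care is needed; retaining the explicit transient configurations inside $\R$ avoids any dependence on up-to techniques, at the price of a longer but entirely routine case analysis, whereas the up-to-context route is shorter but requires the standard soundness caveats for expansion up-to-context to be checked in \Intp.
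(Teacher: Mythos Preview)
Your proposal is correct and follows essentially the same approach as the paper's proof. Both recognise that the two items must be treated simultaneously (since triggering a continuation link spawns a value link and triggering a value link spawns one of each), both set up a relation containing the base pairs together with the intermediate prefix-guarded configurations, and both discharge the bookkeeping via an up-to-context argument. The only cosmetic difference is that the paper first performs a short algebraic massage of each chained link into a prefix normal form (using $\sim$ and one $\exn$ step for the deterministic communication on the relay name) before defining $\R$, which keeps $\R$ to four explicit pair schemas and lets the proof invoke ``expansion up to expansion and contexts''; your version carries the transient states explicitly and closes under parallel and restriction instead, which is equivalent but slightly bulkier.
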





\paragraph{The Optimised Encoding}

In order to establish operational correspondence
(Proposition~\ref{p:opt_op}) we introduce an
  \emph{optimised encoding}: we remove
some of the internal transitions from the encoding of Figure~\ref{f:enc_internal}, as they prevent the
use of the expansion $\exn$. This allows us to guarantee that if
$M\reds N$,  the encoding of $N$ is faster than the encoding of $M$, in the sense that it
performs fewer internal steps before a visible transition. As a
consequence, the encoding of  a term in normal form is ready to perform a
visible transition.
The results about the optimised encoding are formulated using the
expansion preorder, which is useful for the soundness proof. For the
completeness proof, we use these results with $\bsim$ in place of
$\exn$.


We therefore relate $\lambda$-terms and
\Intp-terms via the optimised encoding $\qencba$, presented in
Figure~\ref{f:opt_encod}.
In the figure we assume that rules
 \nameDS{var-val} and  \nameDS{abs-val} have priority over the others; 
accordingly, 
in rules \nameDS{val-app},  \nameDS{app-val}, and   \nameDS{app},
terms    $M$ and $N$ should  not  be values.

The optimised  encoding is
obtained  from
that in Figure~\ref{f:enc_internal} by performing a few (deterministic) reductions, 
at the price of 
a more complex definition.  Precisely, in the encoding of application
we remove some of the initial
communications, including those with which a term signals to have become a value. 
To achieve this,  the encoding of an 
application
is split into several cases, depending on whether a subterm of the
application is a value or not.
This yields to a case distinction according to whether the components
in an application are values or not. This is close in spirit to the
idea of the \textit{colon translation} in~\cite{DBLP:journals/tcs/Plotkin75}.

The general idea of the optimised encoding can be illustrated on two
particular cases. For $\encba{\val M}$, the corresponding equation is
the result of unfolding the original encoding, and performing one
(deterministic) communication. 
In the case of $\encba{x\val}$, not only do we unfold the original
encoding and reduce along deterministic communications, but we also perform 
the administrative reductions that always precede the execution of $\enca{x\val}$.

\begin{figure*}[t]
\[
\begin{array}{rcll}
\equaDS{{var-val}}
{  \encba {x\val}} \defi{ \bind p \outb x {z,q}.(\encbv \val z|\fwd q p)}
  \\
\equaDS{{abs-val}}
{ \encba {(\abs x M)\val}} \defi{ \bind p \new {y,w} (\encbv {\abs x M} y | \encbv {\val} w
}
  \\  && \qquad
         | \outb y {w',r'}.(\fwd {w'} w|\fwd {r'} p))
  \\
  \equaDS{{val-app}}
{ \encba {\val M}} \defi{ \bind p \new y (\encbv \val y | \new r
  (\encb M r
  }
  \\  &&  \qquad
| \inp r w
         .\outb y {w',r'}.(\fwd {w'} w|\fwd {r'} p)))
  \\
\equaDS{{app-val}}
  { \encba {M\val} } \defi {\bind p \new q (\encb M q
  | \inp q y. \new w (\encbv \val w
  }\\  &&  \qquad
|  \outb y {w',r'}.(\fwd {w'} w|\fwd {r'} p)))
\\
\equaDS{{app}}
{ \encba {MN}} \defi {\bind p \new q (\encb M q 
  | \inp q y. \new r (\encb N r |
  }\\  &&  \qquad
\inp r w
          .\outb y {w',r'}.(\fwd {w'} w|\fwd {r'} p)))
  \\
\equaDS{{opt-val}}
{ \encba {\val}}\defi{ \bind p \outb p y . \encbv\val y}
\end{array} 
  \]
where $ \encbva\val$ is thus defined: \hfill $ $ 
\[
\begin{array}{rcll}
\equaDS{{opt-abs}}{ \encbva {\abs x M}}\defi{ \bind y  !\inp y {x,q}.\encb M q  \hskip 2cm}\\
\equaDS{{opt-var}}{ \encbva x }\defi {\bind y \fwd y x}
\end{array}
\]
Moreover,
in rules \nameDS{val-app} and  \nameDS{app-val}, $M$ is  not   a value; 
in  rule  \nameDS{app}   $M$ and $N$  are  not values.
 \caption{Optimised encoding into \Intp}
\label{f:opt_encod}
\end{figure*}



The next lemma builds on Lemma~\ref{l:fwd} to show that, on the processes obtained by the encoding into \Intp, 
links behave as substitutions.  We recall that $p,q$ are continuation
names, whereas $x,y$ are
value names. 
%

\begin{restatable}{lemma}{propenc}
\label{l:fwd:optim}
We have: 
\begin{enumerate}
\item $\new x (\encb M p| \fwd x y )\exn \encb {M\subst x y} p$.  \label{l:subst}
\item $\new p (\encb M p| \fwd p q )\exn \encb M q$.  \label{l:const}
\item $\new {y}(\encbv \val y |\fwd x y)\exn \encbv \val x$. \label{l:substvar}
\end{enumerate}  
\end{restatable}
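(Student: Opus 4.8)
The plan is to prove the three statements of Lemma~\ref{l:fwd:optim} by structural induction on $M$ (for items~\ref{l:subst} and~\ref{l:const}) and on $\val$ (for item~\ref{l:substvar}), using the optimised encoding of Figure~\ref{f:opt_encod} and relying crucially on Lemma~\ref{l:fwd} to compose links. The guiding intuition is that placing a link $\fwd x y$ next to an encoded term $\encb M p$ and restricting $x$ amounts to renaming the value name $x$ to $y$ throughout $M$: every time $M$ emits on $x$, the link forwards the emission to $y$, which is exactly what the substitution $M\subst x y$ does syntactically. First I would treat item~\ref{l:substvar}, since $\encbv\val y$ is the simplest object and it feeds the other two proofs. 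For $\val = x'$ a variable, $\encbv{x'}y = \fwd y{x'}$, so $\new y(\fwd y{x'}\mid\fwd x y)$; if $x' = x$ this is $\new y(\fwd y x\mid\fwd x y)$ and one reasons directly, otherwise Lemma~\ref{l:fwd}(2) gives $\new y(\fwd x y\mid\fwd y {x'})\exn\fwd x{x'} = \encbv{x'}x$, matching $\encbv{\val\subst x y}x$ after accounting for the renaming. For $\val = \abs{z}M'$, I would unfold $\encbv{\abs z M'}y = \bind y\,{!}\inp y{z,q}.\encb{M'}q$ and push the link inside the replicated input using the value-name clause of $\fwd{}{}$, reducing to a statement about $\encb{M'}q$ that is handled by item~\ref{l:subst}.

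Next I would establish item~\ref{l:const}, which is the continuation-name analogue and is needed inside the inductive step for item~\ref{l:subst}. The key case is \nameDS{opt-val}: $\encba\val p = \bind p\,\outb p y.\encbv\val y$, so $\new p(\outb p y.\encbv\val y\mid\fwd p q)$ synchronises the bound output on $p$ with the link's continuation-name input $\inp p{x}.\outb q{y}.\fwd y x$, after which Lemma~\ref{l:fwd}(2) collapses the residual link composition against the value encoding, yielding $\outb q y.\encbv\val y = \encba\val q$ up to $\exn$. For the application cases the variable $p$ appears only in the innermost forwarder $\fwd{p'}p$, so composing $\fwd p q$ with it and applying Lemma~\ref{l:fwd}(1) gives $\fwd{p'}q$, i.e.\ exactly the encoding with continuation $q$; the recursive occurrences of $\encb{\cdot}{r}$ on inner continuations are untouched, so no appeal to the induction hypothesis on those is even required.

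Finally I would prove item~\ref{l:subst} by induction on $M$, invoking items~\ref{l:const} and~\ref{l:substvar} as base-case lemmas. When $M = x$, $\encb x p = \outb p y.\fwd y x$ and $\new x(\outb p y.\fwd y x\mid\fwd x y)$ reduces, via Lemma~\ref{l:fwd}(2), to $\outb p y.\fwd y y$-type behaviour matching $\encb y p = \encb{x\subst x y}p$. When $M = x'\neq x$, the name $x$ does not occur free, the link is garbage and can be discarded up to $\exn$. For $M = \abs{z}M'$, I push $\fwd x y$ through the replicated value-server using the value-name branch of $\fwd{}{}$ and appeal to the induction hypothesis on $M'$. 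For $M = M_1M_2$, the occurrences of $x$ are confined to the subterms $\encb{M_i}{\cdot}$, which are under restrictions on fresh continuation/value names distinct from $x$, so I distribute $\fwd x y$ into the parallel components (using that the forwarder on a value name is replicated, hence duplicable across the two subterms up to $\exn$ by Lemma~\ref{l:fwd}) and apply the induction hypothesis to each $M_i$. The main obstacle I anticipate is precisely this distribution step in the application case: a single link $\fwd x y$ must be shared between $\encb{M_1}{\cdot}$ and $\encb{M_2}{\cdot}$, and justifying that $\new x(\,\cdots\mid\fwd x y)\exn$ the term with one link per subterm requires the replication-duplication property of value-name forwarders together with careful bookkeeping of which bound names are fresh, so that pushing the restriction $\new x$ inward is sound. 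Establishing these structural manipulations rigorously up to $\exn$ (rather than merely $\bsim$) is where the delicate work lies, since expansion is sensitive to the exact number of $\tau$-steps.
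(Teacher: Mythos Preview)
Your overall intuition is right, and you correctly identify the delicate point (distributing a replicated forwarder across the components of an application, up to $\exn$). However, your proof organisation has a genuine circularity that the paper explicitly addresses.

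You propose to prove the three items \emph{sequentially}: first item~\ref{l:substvar}, then item~\ref{l:const}, then item~\ref{l:subst}. But you yourself note that item~\ref{l:substvar} for $\val=\abs z M'$ reduces to item~\ref{l:subst} on the body $M'$; and later, item~\ref{l:subst} in the $x\val$ application sub-case (with the head variable equal to the substituted $x$) requires item~\ref{l:substvar} on $\val$. So item~\ref{l:substvar} needs item~\ref{l:subst}, which needs item~\ref{l:substvar}. This cannot be untangled into three self-contained lemmas. The paper handles this by proving the conjunction of all three properties by a single simultaneous induction on $M$, stating explicitly that ``there are dependencies between these three properties, which prevent us from treating them separately.'' Concretely: (L3) on $\abs z M'$ uses the inductive hypotheses (L1) \emph{and} (L2) on $M'$; (L2) on $\abs z M'$ likewise uses (IH1) and (IH2); and (L1) on $x\val$ (when the head is the substituted variable) uses (IH3) on $\val$. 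Your sequential plan would work only if reframed as a mutual induction on term size, which is exactly what the paper does.

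Two smaller technical points. First, in your treatment of item~\ref{l:const} for the rule \nameDS{opt-val}, after the synchronisation on $p$ you are left with $\outb q{y'}.\new y(\encbv\val y\mid\fwd{y'}y)$. You claim Lemma~\ref{l:fwd}(2) collapses this, but Lemma~\ref{l:fwd}(2) is about composing two \emph{links}; when $\val$ is an abstraction, $\encbv\val y$ is not a link, and what you actually need is item~\ref{l:substvar} (equivalently, in the paper's simultaneous setup, (IH1) and (IH2) on the body). Second, item~\ref{l:substvar} involves no substitution on $\val$: the statement is $\new y(\encbv\val y\mid\fwd xy)\exn\encbv\val x$, so your phrase ``matching $\encbv{\val\subst x y}x$'' is a confusion with item~\ref{l:subst}. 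The paper also silently strengthens (L1) to include the analogue for $\encbv{\cdot}{}$ (namely $\new x(\encbv\val{y_1}\mid\fwd xy)\exn\encbv{\val\subst xy}{y_1}$), which is what the application cases actually consume; you will need that too.
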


The following lemma, relating the original and the optimised encoding,
allows us to use the latter to establish the soundness of the
former. 
The proofs of Lemmas~\ref{l:fwd:optim} and~\ref{l:opt_sound} are presented in~\ref{a:soundness}.
\begin{restatable}{lemma}{optsound}
  \label{l:opt_sound} 
 $\enca M\exn\encba M$, for all $M\in\Lao$.
\end{restatable}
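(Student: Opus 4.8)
The plan is to prove $\enca M \exn \encba M$ by structural induction on $M$, with a case analysis that mirrors the clauses of Figure~\ref{f:opt_encod}. The whole argument rests on three facts about the expansion preorder $\exn$, all available from Section~\ref{s:pi}: it is a preorder (hence transitive), it is preserved by all contexts (a precongruence), and a forced private communication expands its unique reduct, i.e.\ whenever a process consists of a restricted redex on a linear channel that is constrained to fire in exactly one way, the redex is related by $\exn$ to its reduct. Since, by design, the optimised encoding is obtained from the encoding of Figure~\ref{f:enc_internal} by contracting a fixed number of such deterministic internal communications, each case of the induction amounts to exhibiting these reductions and then closing up under $\exn$ using transitivity and precongruence.

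For the value cases (\nameDS{opt-val}, \nameDS{opt-abs}, \nameDS{opt-var}) there is essentially nothing to reduce: $\enca\val$ and $\encba\val$ carry the same top-level bound output $\outb p y$, and they differ only in that the body of the value uses $\enc{M'}q$ where the optimised version uses $\encb{M'}q$ (for an abstraction $\abs{x}{M'}$), the variable case having the identical body $\fwd y x$. The inductive hypothesis $\enca{M'}\exn\encba{M'}$, pushed under the replicated input and the output prefix by precongruence, closes these cases. For the genuine application cases (\nameDS{val-app}, \nameDS{app-val}, \nameDS{app}) I unfold $\enca{MN}$ from Figure~\ref{f:enc_internal} and fire the deterministic communications on the private continuation names $q$ and $r$ by which the encoding launches first the function and then the argument. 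When a subterm is already a value, its encoding begins with a bound output that synchronises immediately with the matching continuation input; contracting precisely this communication is what turns the general rule \nameDS{app} into \nameDS{val-app} or \nameDS{app-val}. Each such contraction contributes one expansion step, and the inductive hypotheses on $M$ and $N$ together with precongruence then yield the claim.

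The main obstacle is the variable-application case \nameDS{var-val}, $M = x\val$. Here the encoding of the variable $x$ produces, after the initial handshake on the continuation, a value-name link $\fwd{y_0}{x}$, and the bound output meant for the function is then redirected through this link onto the free name $x$. Accounting for this requires, beyond the deterministic contractions used above, collapsing the resulting chain of links so as to recover a single output on $x$. For this I invoke the link-composition laws of Lemma~\ref{l:fwd} together with Lemma~\ref{l:fwd:optim}, in particular $\new y (\encbv\val y | \fwd x y)\exn\encbv\val x$, to absorb the forwarder into the encoding of $\val$ and obtain exactly $\bind p \outb x{z,q}.(\encbv\val z | \fwd q p)$, the right-hand side of \nameDS{var-val}. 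This is the delicate step: it is where the forwarders introduced by Milner's fix for $\betav$-reduction interact with the administrative reductions, and where one must check that the link laws preserve \emph{expansion} and not merely bisimilarity.

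In every case, transitivity of $\exn$ lets me chain the deterministic-reduction steps with the algebraic simplifications coming from the link laws, while precongruence lets me apply the inductive hypotheses underneath prefixes and restrictions. Assembling these observations for all shapes of $M$ gives $\enca M \exn \encba M$ for every $M\in\Lao$, as required.
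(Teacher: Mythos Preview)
Your proposal is correct and follows essentially the same route as the paper's proof: structural induction on $M$, with the application cases split according to the clauses of Figure~\ref{f:opt_encod}, each handled by contracting the deterministic internal communications on the private continuation names and then applying the inductive hypotheses under prefixes via precongruence of $\exn$; the delicate \nameDS{var-val} case is resolved, as you indicate, by the link-composition laws (Lemma~\ref{l:fwd}) together with Lemma~\ref{l:fwd:optim}.  One small omission: you do not explicitly list the \nameDS{abs-val} case $(\lambda x.M')\val$, but it is covered by the very same technique (two deterministic contractions on $q$ and $r$, then the inductive hypotheses), so this does not affect the argument.
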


\subsection{Operational Correspondence}\label{s:oper:corr}
Thanks to the optimised encoding, we can now formulate and prove the 
operational correspondence between the (optimised) encoding and the source \lterms. 

The proofs of the two following lemmas are presented in~\ref{a:soundness}.

We first establish that reduction in the $\lambda$-calculus yields
expansion for the encodings of the $\lambda$-terms.

\begin{restatable}[Validity of
  $\betav$-reduction]{lemma}{betaval}\label{l:beta:opt}
  For any
  $M,N$ in $ \Lao$, $M\longrightarrow  N$ implies that  for any $p$,
$\encb Mp\exn\encb Np$.
\end{restatable}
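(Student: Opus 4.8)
The plan is to prove Lemma~\ref{l:beta:opt} by induction on the structure of the evaluation context that witnesses the $\betav$-reduction $M \longrightarrow N$. By the definition of eager reduction, $M = \evctxt[(\abs x M_0)\val]$ and $N = \evctxt[M_0\subst x \val]$ for some evaluation context $\evctxt$, abstraction $\abs x M_0$, and value $\val$. The base case is $\evctxt = \hdot$, where the redex is at top level; the inductive cases push the redex underneath the two evaluation-context constructors $\evctxt' M'$ and $\val' \evctxt'$.

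\medskip

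For the \textbf{base case} $M = (\abs x M_0)\val$, I would compute $\encb M p$ directly using the optimised encoding. By rule \nameDS{abs-val} (taking the priority conventions into account, and distinguishing whether $\val$ is a variable or an abstraction via \nameDS{opt-var}/\nameDS{opt-abs}), $\encba{(\abs x M_0)\val}$ unfolds to a process that, after a single deterministic internal communication on the freshly restricted name $y$, consumes the replicated input $!\inp y{x,q}.\encb{M_0}q$ and installs the links $\fwd{w'}{w}$ and $\fwd{r'}{p}$. Since this is a deterministic $\tau$-step (the name $y$ is bound and has a unique communication partner), I obtain an expansion $\encb Mp \exn P'$ where $P'$ is, up to $\exn$, of the form $\new{\dots}(\encb{M_0}{r'} \mid \encbv\val w \mid \fwd w z \mid \fwd{r'}p \mid \dots)$ with the argument value linked in for $x$ and the continuation linked through to $p$. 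The three parts of Lemma~\ref{l:fwd:optim} are then exactly what is needed: the link $\fwd w z$ (composed with the value encoding) performs the substitution of $\val$ for $x$, yielding $\encb{M_0\subst x\val}{r'}$ by part~(\ref{l:subst}) together with part~(\ref{l:substvar}), while the continuation link $\fwd{r'}p$ is absorbed by part~(\ref{l:const}), rewriting the continuation name $r'$ back to $p$. Chaining these expansions gives $\encb Mp \exn \encb{M_0\subst x\val}p = \encb Np$, as $\exn$ is transitive and preserved by restriction and parallel composition.

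\medskip

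For the \textbf{inductive cases}, suppose $\evctxt = \evctxt'' M''$ so that $M = \evctxt''[(\abs x M_0)\val]\,M''$ and the inner reduction is $\evctxt''[(\abs x M_0)\val] \longrightarrow \evctxt''[M_0\subst x\val]$; write $M_\star \longrightarrow N_\star$ for this inner step, so the induction hypothesis gives $\encb{M_\star}q \exn \encb{N_\star}q$ for every $q$. Using the optimised encoding rule for an application whose left component is not a value (\nameDS{app-val} or \nameDS{app}, depending on whether $M''$ is a value), $\encb M p$ has the form $\bind p \new q(\encb{M_\star}q \mid R)$ where $R$ is a fixed context not mentioning $q$ in a way that interferes. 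Since $\exn$ is a precongruence---preserved by parallel composition and restriction---I can replace $\encb{M_\star}q$ by $\encb{N_\star}q$ under the restriction and in parallel with $R$, obtaining $\encb Mp \exn \encb Np$. The case $\evctxt = \val'\evctxt''$ is symmetric, using rule \nameDS{val-app}. I would also need to handle the mild subtlety that the \emph{syntactic form} of the encoding rule chosen for $M$ and for $N$ may differ when reduction changes a non-value subterm into a value; this is managed by first rewriting via Lemma~\ref{l:opt_sound} ($\enca M \exn \encba M$) to pass between the plain and optimised encodings where convenient, or by absorbing the extra deterministic signalling steps into the expansion.

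\medskip

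The \textbf{main obstacle} I anticipate is the base case bookkeeping: correctly tracking the freshly generated names $y, w', r'$ and the chains of links, and verifying that the top-level communication really is a deterministic $\tau$-step that yields $\exn$ rather than merely $\bsim$. This is where the compositional link laws of Lemma~\ref{l:fwd:optim} must be applied in the right order and the right scopes, and where the distinction between value names and continuation names (and hence between the replicated and non-replicated link bodies) matters for whether part~(\ref{l:subst}), part~(\ref{l:const}), or part~(\ref{l:substvar}) applies. The inductive step is comparatively routine, relying only on the fact that $\exn$ is preserved by the static contexts into which the encoding places the encoded subterm.
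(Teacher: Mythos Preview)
Your overall structure---induction on the evaluation context, with the base case handling the top-level redex $(\abs x M_0)\val$ and the inductive cases propagating $\exn$ through the static contexts---matches the paper's approach. The inductive step is essentially as you describe (the paper phrases it as: the only transition of $\encb{\evctxt[M]}p$ arises from a transition of the encoding of $M$, and $\exn$ is a precongruence).

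However, there is a genuine gap in your base case. After the deterministic $\tau$ step on the bound name $y$ and after applying the link laws of Lemma~\ref{l:fwd:optim} (parts~(\ref{l:const}) and~(\ref{l:subst})) to absorb the continuation link $\fwd{r'}p$ and the variable link $\fwd{w'}w$, you arrive at a process of the shape
\[
  \new w\big(\encb{M_0\subst x w}{p} \mid \encbv\val w\big)
\enspace.
\]
To conclude $\encb{M_0\subst x\val}p$ from here, you need
\[
  \new w\big(\encb{M_0\subst x w}{p} \mid \encbv\val w\big) \;\exn\; \encb{M_0\subst x\val}p
\enspace,
\]
i.e., that a value encoding in parallel acts as a substitution of that value into the term. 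This is \emph{not} a consequence of Lemma~\ref{l:fwd:optim}: part~(\ref{l:subst}) only gives substitution of a \emph{variable} for a variable (because $\fwd xy$ is a link, not an arbitrary value encoding), and part~(\ref{l:substvar}) only lets you relocate a value encoding, not push it into a term encoding as a substitution. When $\val$ is an abstraction, $\encbv\val w$ is a replicated input $!\inp w{z,q}.\encb{N}q$, and distributing this into $\encb{M_0\subst x w}p$ requires a separate induction on the structure of $M_0$. The paper isolates exactly this step as Lemma~\ref{l:subst:value:optim} ($\new x(\encb Mp \mid \encbv\val x)\exn \encb{M\subst x\val}p$), proves it by induction on $M$, and then derives Lemma~\ref{l:beta:opt} as an immediate consequence. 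Your plan needs this lemma (or an equivalent argument) as the core of the base case.
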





To prove operational correspondence, we need a technical lemma about the optimised encoding of terms
in eager normal form that are not values.

\begin{restatable}{lemma}{optstuck} \label{l:opt_stuck}
We have: 
$$
\begin{array}{rcl}
  \encb {\evctxt[x\val]} p\sim 
  \outb x {z,q}.(\encbv \val z|\inp q
y.\encb {\evctxt[y]}p).
\end{array}
 $$ 
\end{restatable}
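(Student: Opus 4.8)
The plan is to prove Lemma~\ref{l:opt_stuck} by induction on the structure of the evaluation context $\evctxt$, using the optimised encoding of Figure~\ref{f:opt_encod} together with the link laws of Lemmas~\ref{l:fwd} and~\ref{l:fwd:optim}. The goal is to show that the encoding of a stuck term $\evctxt[x\val]$ is strongly bisimilar to a process that immediately performs the characteristic output $\outb x {z,q}$ carrying (the encoding of) the value $\val$ and a fresh continuation $q$, and that then waits on $q$ for the ``answer'' name $y$ in order to continue as the encoding of $\evctxt[y]$. Intuitively, this makes precise the fact that a stuck term is blocked precisely on the free variable $x$ in head position, and that the surrounding context $\evctxt$ is suspended on the continuation.

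First I would treat the base case $\evctxt = \hdot$, so that the term is $x\val$. Here I unfold rule \nameDS{var-val}, which gives $\encba{x\val}p = \outb x {z,q}.(\encbv\val z \mid \fwd q p)$. The target of the statement, with $\evctxt = \hdot$, is $\outb x{z,q}.(\encbv\val z \mid \inp q y.\encb{y}p)$. By rule \nameDS{opt-var}, $\encbv{y}$ is the link $\fwd y x$-style definition, and $\encb{y}p = \encba y p$ unfolds via \nameDS{opt-val} and \nameDS{opt-var}; the continuation $\inp q y.\encb y p$ should reduce to behave like $\fwd q p$. Concretely I expect $\inp q y.\encba y p \sim \fwd q p$ to follow directly by unfolding the definition of the forwarder $\fwd{}{}$ on continuation names, which is exactly $\bind{p,q}\inp p x.\outb q y.\fwd y x$; so the base case is a matter of matching the definitions up to the identification of agents with their normalised expressions.

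For the inductive step I would do a case analysis on the shape of $\evctxt$, namely $\evctxt = \evctxtp\, M$ or $\evctxt = \val'\,\evctxtp$, where by Proposition~\ref{p:case:analysis} the hole sits in evaluation position. In each case I unfold the appropriate application rule (\nameDS{app}, \nameDS{app-val}, \nameDS{val-app}) of the optimised encoding, observing that because $\evctxtp[x\val]$ is not a value, the relevant non-value side conditions are met. The induction hypothesis rewrites the encoding of the subterm $\evctxtp[x\val]$ into the head-output form $\outb x{z,q'}.(\encbv\val z \mid \inp{q'}y.\encb{\evctxtp[y]}p')$, and I then push this output out through the surrounding parallel context and restriction, using structural laws of $\sim$ (scope extrusion and the fact that the head output on $x$ commutes with the enclosing input prefixes on continuation names that guard the rest of the application encoding). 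The key bookkeeping is that the newly exposed input on the fresh $q'$ absorbs the administrative plumbing of the application (the $\inp r w.\outb y{w',r'}.(\dots)$ wiring) so that what remains, after reassembling, is exactly the encoding $\encb{\evctxt[y]}p$ of the whole context with the hole filled by a fresh $y$; here Lemma~\ref{l:fwd:optim} (links act as substitutions) lets me absorb the forwarders introduced by the encoding into genuine renamings.

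The main obstacle I anticipate is the precise manipulation of restrictions and forwarders when propagating the head output through the application context: one must check that the private continuation and value names introduced at each application node can be safely commuted past the exposed $\outb x{z,q}$ and then re-collected so that the residual process is literally the encoding of $\evctxt[y]$ rather than merely bisimilar to it up to some stray links. Since the statement is in terms of strong bisimilarity $\sim$, I must ensure every manipulation is a strong law (structural congruence, scope extrusion, and the strong-bisimilarity instances of the forwarder laws), not merely an expansion; where Lemma~\ref{l:fwd:optim} is stated with $\exn$, I would need the corresponding uses to hold up to $\sim$ in these deterministic, prefix-guarded positions, or else argue that the administrative manipulations involved are in fact strong equalities. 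Getting this alignment exactly right, so that the fresh name $y$ threads correctly through the reconstructed context encoding, is the delicate part; the rest is routine unfolding.
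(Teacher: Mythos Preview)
Your overall plan---induction on $\evctxt$, base case via the definition of the continuation forwarder, inductive cases by unfolding the relevant application rule, applying the IH to the inner $\evctxtp[x\val]$, and commuting the head output $\outb x{z,q}$ outward---is exactly the paper's approach.

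One correction worth making: you anticipate having to invoke Lemma~\ref{l:fwd:optim} to ``absorb the forwarders introduced by the encoding into genuine renamings,'' and you rightly worry that this lemma is stated for $\exn$, not $\sim$. In fact this step is unnecessary. In the inductive cases the forwarders that appear (inside the wiring $\inp r w.\outb y{w',r'}.(\fwd{w'}w|\fwd{r'}p)$) are \emph{not} eliminated; they remain in place and become part of the definition of $\encb{\evctxt[y]}p$ once you reassemble. Concretely, in the case $\evctxt = \val'\evctxtp$, after pulling $\outb x{z,q}$ to the top and pushing the input $\inp q{y_1}$ inward, the residual $\new s(\encbv{\val'}s \mid \new r(\encb{\evctxtp[y_1]}r \mid P_0))$ is \emph{literally} $\encb{\val'\evctxtp[y_1]}p$ by rule \nameDS{val-app}. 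So all manipulations are strong-bisimilarity laws (prefix commutation justified by the fact that the other components are blocked on inputs at restricted names), and no $\exn$-only law is needed. The same holds in the case $\evctxt = \evctxtp\,M$.
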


 In the lemma below, recall that we identify processes or transitions
that only differ in the choice of the bound names. Therefore, when 
we say a process has exactly one immediate transition, we mean that there 
is a unique pair ($\mu$,$P$), up to alpha-conversion, 
  such that 
$\encb M p \arr\mu P$. 

\begin{restatable}[Operational correspondence]{proposition}{optop}\label{p:opt_op}
For any
  $M\in\Lao$ and fresh $p$, process  $\encb M p$ has exactly
  one immediate transition, and exactly one of the following clauses holds:
\begin{enumerate}
\item $\encb M p\arr{\outb py}P$  
and $M$ is a value, with $P=\encbv M y$; 
\item $\encb M p\arr{\outb x {z,q}} P$  and 
 $M=\evctxt[x\val]$ and 
$P\exn \encbv \val z|\inp q y.\encb {\evctxt[y]} p$;
\item $\encb M p\arr\tau P$ and there is $N$  with $M\red N$ and $P\exn \encb N p$.
\end{enumerate}
\end{restatable}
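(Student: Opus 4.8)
The plan is to prove Proposition~\ref{p:opt_op} by structural induction on $M$, following exactly the case split that defines the optimised encoding $\qencba$ in Figure~\ref{f:opt_encod}. For each syntactic form of $M$, I would read off the immediate transition directly from the corresponding defining equation, using the fact that in \Intp{} the leading action of every right-hand side is syntactically determined (either a bound output $\outb p y$, a bound output $\outb x{z,q}$, or an internal communication $\tau$ between two parallel subcomponents). The crux is that each defining equation begins with a process whose only enabled action is unique up to $\alpha$-conversion, so the ``exactly one immediate transition'' claim will follow from inspecting the prefix structure, while the classification into clauses (1)--(3) will follow from Proposition~\ref{p:case:analysis}, which tells us that $M$ is either a value, or of the form $\evctxt[x\val]$, or reduces.

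First I would dispatch the value cases using rules \nameDS{opt-val}, \nameDS{opt-abs}, and \nameDS{opt-var}: when $M$ is a value, $\encb Mp = \outb py.\encbv My$ has the single transition $\encb Mp\arr{\outb py}\encbv My$, giving clause~(1) with $P=\encbv My$ exactly (no expansion needed here). Next, for the stuck-term case $M=\evctxt[x\val]$, I would invoke Lemma~\ref{l:opt_stuck}, which gives $\encb{\evctxt[x\val]}p\sim\outb x{z,q}.(\encbv\val z|\inp qy.\encb{\evctxt[y]}p)$; since $\sim$ preserves the transition structure, the unique immediate action is $\outb x{z,q}$ with residual $P$ strongly bisimilar to (hence expanded by) $\encbv\val z|\inp qy.\encb{\evctxt[y]}p$, yielding clause~(2). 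Finally, for a term $M$ with $M\red N$, the unique immediate action is a $\tau$, and I would combine Lemma~\ref{l:beta:opt} (validity of $\betav$-reduction, $\encb Mp\exn\encb Np$) with the determinacy of the encoding to obtain $\encb Mp\arr\tau P$ with $P\exn\encb Np$, giving clause~(3).

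The main subtlety, and the step I expect to require the most care, is checking that the reduction cases in the optimised encoding really do expose a $\tau$-action as their unique immediate transition, and that the residual $P$ satisfies $P\exn\encb Np$ for the specific $N$ with $M\red N$. Because $\qencba$ has already performed the administrative reductions, in cases \nameDS{val-app}, \nameDS{app-val}, and \nameDS{app} one must verify that the remaining synchronisation (for instance the communication on the private continuation $q$ or $r$, or the emission at the value name $y$) is the sole enabled action, and then show that contracting it yields a process that expands the encoding of the reduct. Here the link lemmas are essential: Lemma~\ref{l:fwd:optim} lets me replace the forwarders $\fwd{w'}w$ and $\fwd{r'}p$ by substitutions up to $\exn$, so that the post-communication process matches $\encb Np$ up to expansion, and Lemma~\ref{l:fwd} handles the composition of links that arises when the value name case of $\fwd{}{}$ fires.

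The only genuine obstacle is the bookkeeping to confirm \emph{uniqueness} of the immediate transition in the application cases: one must argue that the outer restrictions $\new q$, $\new r$, $\new{y,w}$ block any competing visible action from the encoded subterms, so that the sole unguarded prefix is the intended one. This reduces to observing that each encoded subterm $\encb Mq$ (or $\encbv\val w$) is itself guarded under a restricted continuation that has a single matching input, so by the induction hypothesis its only immediate action is internal or an output on the restricted name, both of which are consumed by the parallel input and hence give rise to exactly one $\tau$. With that determinacy established, the three clauses are mutually exclusive precisely because Proposition~\ref{p:case:analysis} partitions $M$ into value, stuck, and reducible forms.
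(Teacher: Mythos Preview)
Your approach is correct and matches the paper's in its essentials: case-split via Proposition~\ref{p:case:analysis}, dispatch values by the defining rule \nameDS{opt-val}, stuck terms via Lemma~\ref{l:opt_stuck} (whose strong bisimilarity immediately gives both the unique action $\outb x{z,q}$ and the residual up to $\exn$), and reducible terms via Lemma~\ref{l:beta:opt}.

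The one difference is organisational. You frame the argument as a structural induction on $M$ along the clauses of Figure~\ref{f:opt_encod}, and in your last two paragraphs you plan to re-examine each application clause (\nameDS{val-app}, \nameDS{app-val}, \nameDS{app}, \nameDS{abs-val}) to verify uniqueness of the $\tau$-transition and to massage the residual with Lemmas~\ref{l:fwd} and~\ref{l:fwd:optim}. The paper avoids all of this: the proof of Lemma~\ref{l:beta:opt} is already stated in the strengthened form $\encb Mp\arr\tau\exn\encb Np$ (and its proof records that ``the only transition $\encb{\evctxt[M]}p$ can do arises from a transition of the encoding of $M$''), so clause~(3) and the uniqueness claim follow by citation rather than by a fresh case analysis. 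In other words, no induction is needed here at all---the inductive work has been pushed into Lemmas~\ref{l:opt_stuck} and~\ref{l:beta:opt}, and Proposition~\ref{p:opt_op} is just a repackaging of those two lemmas together with the definitional case for values. Your plan would work, but it duplicates effort already invested.
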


\begin{proof}
  We rely on Proposition~\ref{p:case:analysis} to distinguish two
  cases:
  \begin{itemize}
  \item  $M$ is in eager normal form.

    Then $(i)$ either $M$ is a value, and we are in the first case
    above, by definition; or $(ii)$ we are in the second case above,
    and we rely on Lemma~\ref{l:opt_stuck} to conclude.
  \item There exists $N$ such that $M\red N$,  We then use the
    property that is established in the proof of validity of
    $\betav$-reduction (Lemma~\ref{l:beta:opt}), namely that $\encb
    Mp\arr\tau\exn\encb Np$. 
  \end{itemize}
\end{proof}





The operational correspondence has 
two immediate consequences, regarding converging and diverging terms. 

\begin{lemma}\label{l:opt_enf} If $\encb M p\Arr{\mu}P$ and $\mu\neq
  \tau$, then $M$ admits an \enform $M'$ such that $\encb {M'} p
  \arr{\mu}P_0$ and $P \exn P_0$ for some $P_0$.
\end{lemma}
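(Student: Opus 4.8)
The plan is to prove Lemma~\ref{l:opt_enf} by induction on the number of $\tau$-transitions in the weak transition $\encb M p\Arr{\mu}P$, using the operational correspondence (Proposition~\ref{p:opt_op}) as the engine at each step. The statement says: if the encoding of $M$ can perform a visible action $\mu$ (possibly after some internal steps), then $M$ has an eager normal form $M'$ whose encoding performs $\mu$ \emph{directly} (in one strong step), landing on a process that $P$ expands.

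First I would unfold the weak transition $\encb M p\Arr{\mu}P$ as a finite sequence $\encb M p = Q_0\arr\tau Q_1\arr\tau\cdots\arr\tau Q_k\arr\mu Q_{k+1}\Arr{} P$, where $\mu\neq\tau$. The base case is $k=0$ together with no trailing $\tau$'s, i.e.\ $\encb M p\arr\mu P$ is a single visible transition. By Proposition~\ref{p:opt_op}, since $M$ admits an immediate visible (non-$\tau$) transition, we are in clause~(1) or~(2) of that proposition, so $M$ is already in eager normal form; taking $M'=M$ gives $\encb{M'}p\arr\mu P_0$ with $P\exn P_0$ (in clause~(1), $P_0=P$ and $P\exn P$; in clause~(2), $P_0=\encbv\val z\mid\inp qy.\encb{\evctxt[y]}p$ and $P\exn P_0$). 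The trailing $\Arr{}$ after the visible step must be handled by noting that expansion is preserved under the same $\tau$-closure, but in the strong base case I can absorb it using that $\exn$ is a preorder refining $\bsim$.

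For the inductive step, suppose the sequence begins with a genuine $\tau$-transition $\encb M p\arr\tau Q_1$. By Proposition~\ref{p:opt_op}, the unique immediate transition of $\encb M p$ being a $\tau$ forces clause~(3): there is $N$ with $M\red N$ and $Q_1\exn\encb N p$. Now the remainder of the original sequence is a weak transition $Q_1\Arr{\mu}P$, and since $Q_1\exn\encb Np$ and expansion is a simulation-like relation preserved by (weak) transitions, I can transfer this to a weak transition $\encb Np\Arr{\mu}P'$ with $P\exn P'$ (using that $\exn\,\subseteq\,\bsim$ and the transfer properties of expansion under $\Arr{}$). The number of $\tau$'s strictly decreases, so the induction hypothesis applied to $N$ yields an eager normal form $M'$ of $N$ with $\encb{M'}p\arr\mu P_0$ and $P'\exn P_0$. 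Since $M\red N\reds M'$, the term $M'$ is also an eager normal form \emph{of $M$}, and composing the expansions $P\exn P'\exn P_0$ (expansion is transitive) closes the case.

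The main obstacle I anticipate is the careful bookkeeping of how expansion $\exn$ transfers a \emph{weak} visible transition from $Q_1$ to $\encb Np$: specifically, I must justify that $Q_1\exn\encb Np$ together with $Q_1\Arr{\mu}P$ produces a matching $\encb Np\Arr{\mu}P'$ with $P\exn P'$, which relies on the standard characterisation of expansion as both a strong-up-to simulation in one direction and a weak simulation in the other. A related subtlety is ensuring that ``unique immediate transition'' (as emphasised in the remark preceding Proposition~\ref{p:opt_op}) really does pin down which clause applies at each step, so that a $\tau$ first move can only come from clause~(3) and a visible first move only from clauses~(1)--(2); this is where I lean most heavily on the determinism of the optimised encoding. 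The rest is routine induction and transitivity of $\exn$.
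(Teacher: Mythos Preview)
Your proposal is correct and follows the paper's proof essentially verbatim: induction on the length of $\encb Mp\Arr\mu P$, with Proposition~\ref{p:opt_op} forcing a $\tau$-first step to come from clause~(3), yielding $N$ with $Q_1\exn\encb Np$; then clause~(2) of Definition~\ref{d:expan} transfers $Q_1\Arr\mu P$ step-for-step to $\encb Np\Arr\mu P'$ with $P\exn P'$ and a strictly shorter derivation, so the induction hypothesis applies to $N$.

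On your flagged worry about trailing $\tau$'s after the visible step in the base case: this case does not arise. By Lemma~\ref{l:opt_stuck} and the shape of $\encbv\val y$, the derivative of $\encb{M'}p$ after its unique visible action is, up to $\sim$, a parallel composition of input-prefixed processes on pairwise distinct subjects, and therefore has no $\tau$-transitions at all. Hence the dichotomy ``single visible step'' versus ``leading $\tau$'' is already exhaustive. Your suggested fallback via $\exn\subseteq\bsim$ would not work in any case, since the conclusion demands $P\exn P_0$, not merely $P\bsim P_0$.
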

\begin{proof}
  By induction on the length of the reduction $\encb M
  p\Arr{\mu}P$.
  If $\encb M p \arr\mu P$ and $\mu\neq \tau$, by Proposition~\ref{p:opt_op}, $M$ is an \enform. Otherwise, there is $P'$ such
  that $\encb M p \arr\tau P'\Arr\mu P$; by Proposition~\ref{p:opt_op},
  there is $N$ such that $M\red N$ and $P'\exn \encb N p$.
  Therefore, since $P'\Arr\mu P$, there is $Q$ such that
  $\encb N p\Arr\mu Q$,  $P\exn Q$ and, by definition of expansion,
  the sequence of transitions from $\encb Np$ to $Q$ is shorter
  than the sequence from $P'$ to $P$.
  By the induction hypothesis, N admits an \enform $N'$, with
  $\encb{N'}p\arr\mu Q_0$ and $Q\exn Q_0$ for some $Q_0$.

  $N'$ being an \enform for $N$, it is also an \enform for
  $M$. Moreover, we deduce from $Q\exn Q_0$ and $P\exn Q$ that $P\exn Q_0$.
\end{proof}

\begin{lemma}\label{l:bot}
 $\encb M p  \bsim  \zero$ iff $M  \Uparrow$.
\end{lemma}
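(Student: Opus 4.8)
The plan is to reduce the statement to the simple observation that, for weak (ground) bisimilarity, $Q \bsim \zero$ holds precisely when $Q$ admits no weak visible transition, i.e.\ there is no $\mu \neq \tau$ and $Q'$ with $Q \Arr\mu Q'$. One direction is immediate, since a visible transition of $Q$ cannot be matched by $\zero$. For the other, I would note that the set of processes having no weak visible transition is closed under $\tau$-transitions (any weak visible transition of a $\tau$-derivative would lift to one of the original process); relating all such processes to $\zero$ then yields a bisimulation, as all their moves are necessarily $\tau$ and are matched by $\zero$ staying put, while $\zero$ has no move to match. I would record this as a preliminary remark. I would also recall, from the determinism of $\betav$-reduction together with Proposition~\ref{p:case:analysis}, that $M \Uparrow$ holds exactly when $M$ has no \enform.

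For the implication $M \Uparrow \Rightarrow \encb M p \bsim \zero$, I would argue against the preliminary characterisation. Suppose $\encb M p$ had a weak visible transition $\encb M p \Arr\mu P$ with $\mu \neq \tau$. Then Lemma~\ref{l:opt_enf} would provide an \enform $M'$ of $M$, contradicting $M \Uparrow$. Hence $\encb M p$ has no weak visible transition, and so $\encb M p \bsim \zero$.

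For the converse, $\encb M p \bsim \zero \Rightarrow M \Uparrow$, I would argue contrapositively: assume $M$ does not diverge, so $M \reds M'$ with $M'$ in \enform. Iterating the validity of $\betav$-reduction (Lemma~\ref{l:beta:opt}) along $M \reds M'$ and using transitivity of $\exn$ gives $\encb M p \exn \encb{M'} p$, whence $\encb M p \bsim \encb{M'} p$ since $\exn \subseteq \bsim$ (a consequence of $\lexn \subseteq \bsim$ and symmetry of $\bsim$). By Proposition~\ref{p:case:analysis}, $M'$ is either a value or of the form $\evctxt[x\val]$; in the two cases Proposition~\ref{p:opt_op} (clause~1, resp.\ clause~2) shows that $\encb{M'} p$ performs an immediate visible output, on $p$ resp.\ on $x$. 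Such a transition cannot be matched by $\zero$, so $\encb{M'} p \not\bsim \zero$, and therefore $\encb M p \not\bsim \zero$.

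Overall this is essentially a bookkeeping exercise combining operational correspondence with the two facts that diverging terms have no normal form and that the encoding of a normal form is immediately ready for a visible action. The only point requiring a little care is the passage through expansion in the converse direction, where I must invoke $\exn \subseteq \bsim$ and transitivity of $\bsim$ to transfer non-bisimilarity with $\zero$ from $\encb{M'} p$ back to $\encb M p$. I do not expect any genuine obstacle beyond this.
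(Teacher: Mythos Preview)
Your proof is correct. For the direction $M\Uparrow \Rightarrow \encb Mp\bsim\zero$ you argue exactly as the paper does (contrapositively via Lemma~\ref{l:opt_enf}), and you make explicit the characterisation $Q\bsim\zero$ iff $Q$ has no weak visible transition, which the paper leaves implicit.

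For the converse direction the two arguments diverge. The paper works forward: assuming $\encb Mp\bsim\zero$, Proposition~\ref{p:opt_op} forces the unique transition of $\encb Mp$ to be a $\tau$ (a visible one would break bisimilarity with $\zero$), so clause~(3) yields $M\red N$ with $\encb Np\bsim\zero$, and iterating produces an infinite reduction from $M$. You instead argue contrapositively: if $M$ converges to an \enform $M'$, iterate Lemma~\ref{l:beta:opt} to get $\encb Mp\bsim\encb{M'}p$, then use clauses~(1)--(2) of Proposition~\ref{p:opt_op} to exhibit a visible action of $\encb{M'}p$. Both routes are sound; yours trades the explicit construction of the divergence for an appeal to validity of $\betav$-reduction, which is arguably cleaner here since the needed lemma is already in place. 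The paper's version has the minor advantage of touching only clause~(3) of operational correspondence and not needing transitivity of $\exn$ along a reduction sequence.
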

\begin{proof}
\begin{enumerate}
\item Suppose $\encba M\not\bsim\bind p \zero$. Then $M\arr\mu P$ for some 
$\mu\neq\tau$, and by Lemma~\ref{l:opt_enf}, $M$ has an eager normal
form (hence $M$ does not diverge).
\item Assume now $\encba M\bsim \bind p \zero$. By Proposition~\ref{p:opt_op}, $\encb Mp\arr\tau P$ for some $P$, and there is $N$ such that 
  $M\red N$, and $\encb N p\bsim P$. Since $\encba M\bsim\bind
  p\zero$, we have $P\bsim\zero$, thus $\encba N\bsim\bind p\zero$.
With this property, we can construct an infinite sequence of reductions
from $M$, thus $M\Uparrow$.
\end{enumerate}
\end{proof}

\subsection{Soundness}
\label{s:sound}

The structure of the 
proof of soundness of the encoding is similar to that for the
analogous property for 
Milner's call-by-name encoding with respect to Levy-Longo Trees \cite{cbn}. 
The details are however different, as in call-by-value both the encoding 
and the trees (the Eager Trees extended to handle $\eta$-expansion) are
more complex.


Using the operational correspondence, we then show that the observables 
for bisimilarity in the encoding $\pi$-terms  imply the observables for 
$\eta$-eager normal-form bisimilarity in the encoded $\lambda$-terms.
The delicate cases are those
in which a branch in the tree of the terms is produced~---
case \reff{ie:split} of Definition~\ref{enfbsim}~--- and where
 an $\eta$-expansion    occurs~---  thus
a variable is equivalent to an abstraction,  
 cases~\reff{lab:five} and~\reff{def:enfe:case:eta} of Definition~\ref{enfebsim}. 

For the branching,  we exploit a decomposition property on $\pi$-terms, roughly allowing
us to derive from the bisimilarity of two parallel compositions the componentwise 
bisimilarity of the single components. 
 For the $\eta$-expansion, 
if $\enca x \bsim\enca {\abs zM}$,
where   $M\converges \evctxt[x\val]$, 
 we use a coinductive argument to
derive $\val\enfe z$ and $\evctxt [y]\enfe y$, for $y$ fresh; 
from this we  then obtain
 $\abs zM \enfe x$.  
 
The following lemma
 allows us 
to decompose 
 an equivalence between two 
 parallel processes.
%
%
This result is used to handle equalities of the form  $\enca
{\evctxt[x\val]}\bsim \enca {\evctxt'[x\valp]}$, in order to deduce
equivalence between $\val$ and $\valp$ on the one hand, and between 
$\evctxt[y]$ and $\evctxt'[y]$ on the other.


\begin{lemma}\label{l:pref_decompose}
Suppose that $a$ does not occur free in  $Q$ or $Q'$, 
and one of the following holds:
\begin{enumerate}
\item $\inp a x .P|Q\bsim \inp a x . P'| Q'$;
\item $!\inp a x .P|Q\bsim !\inp a x . P'| Q'$.
\end{enumerate} 
Then we also have  $Q\bsim Q'$.
\end{lemma}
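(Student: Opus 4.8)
The plan is to exhibit a bisimulation containing $(Q,Q')$, obtained by ``peeling off'' the input component on $a$. First I would record the structural consequence of the freshness hypothesis: since $a\notin\fnames{Q}\cup\fnames{Q'}$, the prefix $\inp a x.P$ can never synchronise with $Q$, because a synchronisation would require $Q$ to emit on $a$. Hence every transition of $\inp a x.P | Q$ is either the prefix firing on its own, $\inp a x.P | Q\arr{\inp a x}P | Q$, or a transition $\inp a x.P | Q\arr{\mu}\inp a x.P | Q''$ inherited from some $Q\arr{\mu}Q''$, in which case $\mu$ does not mention $a$ (in particular its subject is not $a$, and no extruded name equals $a$). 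The same holds for $!\inp a x.P | Q$, the only change being that firing the prefix keeps $!\inp a x.P$ in place.

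Next I would take as candidate relation
$$
\R\defi\{(Q,Q')\mid a\text{ is fresh for }Q,Q',\text{ and }\inp a x.P | Q\bsim\inp a x.P' | Q'\text{ for some }P,P'\}.
$$
It is symmetric because $\bsim$ is, and it contains $(Q,Q')$ by hypothesis. To verify the transfer property, suppose $(Q,Q')\in\R$ and $Q\arr{\mu}Q''$. Then $\inp a x.P | Q\arr{\mu}\inp a x.P | Q''$, so by the assumed bisimilarity there is $R$ with $\inp a x.P' | Q'\Arcap{\mu}R$ and $\inp a x.P | Q''\bsim R$.

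The key step is to show that $R$ again has the shape $\inp a x.P' | Q'_0$ with $Q'\Arcap{\mu}Q'_0$. Along the weak transition $\Arcap{\mu}$ none of the internal steps can fire the prefix $\inp a x.P'$, since firing it would require an output on $a$, which is unavailable as $a$ is fresh for $Q'$; and the single visible step cannot be the prefix either, since $\mu\neq\inp a x$. Thus the prefix survives untouched, every move is performed by $Q'$, and $R=\inp a x.P' | Q'_0$ with $Q'\Arcap{\mu}Q'_0$. Because transitions introduce only fresh bound names (all distinct from $a$), the name $a$ stays fresh for the derivatives $Q''$ and $Q'_0$, so $(Q'',Q'_0)\in\R$, witnessed by $\inp a x.P | Q''\bsim\inp a x.P' | Q'_0$. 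Hence $\R$ is a bisimulation and $Q\bsim Q'$. Case (2) is identical, with $!\inp a x.P$ and $!\inp a x.P'$ replacing the linear prefixes: the replicated input likewise cannot synchronise on the fresh name $a$, so no copy is spawned along a $\tau$-step and the replicated guard persists.

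The main obstacle is exactly this preservation of the guard on $a$ along the weak matching transition: one must rule out that any internal step, or the visible step, interacts with the prefix (firing it, or, in the replicated case, spawning a copy). This is where the hypothesis $a\notin\fnames{Q}\cup\fnames{Q'}$ is essential, as it forbids the outputs on $a$ that such interactions would require; without it the two guards could desynchronise and the conclusion would fail.
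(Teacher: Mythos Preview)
Your proof is correct. The paper does not include a proof of this lemma (it is stated without proof, presumably regarded as standard), so there is no comparison to make; your argument is exactly the natural one, and the key point you isolate---that the freshness of $a$ in $Q$ and $Q'$ forces the guard $\inp a x.P'$ (or $!\inp a x.P'$) to survive every step of the matching weak transition---is precisely what makes the decomposition go through.
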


 
\begin{lemma}\label{l:bangy}
  Suppose $!y(z,q).\encb Mq \bsim !y(z,q).\encb Nq$, where $y$ does not
  occur in $M, N$.
  Then $\encb
  Mq\bsim\encb Nq$.
\end{lemma}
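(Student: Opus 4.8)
The plan is to fire the replicated input at $y$ exactly once, cancel the two replicated guards using Lemma~\ref{l:pref_decompose}, and then clean up the residual internal steps by invoking operational correspondence. First I would note that the only action available to $!y(z,q).\encb Mq$ is its input, so firing it (with fresh $z,q$) gives the transition $!y(z,q).\encb Mq \arr{y(z,q)} !y(z,q).\encb Mq \mid \encb Mq$. Since the two agents are bisimilar, there is a matching weak transition $!y(z,q).\encb Nq \Arr{y(z,q)} S'$ with $!y(z,q).\encb Mq \mid \encb Mq \bsim S'$, and the next task is to pin down the shape of $S'$.

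Because $y$ does not occur in $N$, the guard $!y(z,q).\encb Nq$ has no $\tau$-action and exhibits no output at $y$, and the same holds for the freshly spawned copy $\encb Nq$. Hence the only way to realise the weak transition is to fire the replicated input and then let the new copy run internally: $S'$ must have the form $!y(z,q).\encb Nq \mid N''$ with $\encb Nq \Arr{} N''$ and $y \notin N''$. I would then apply Lemma~\ref{l:pref_decompose}(2) to $!y(z,q).\encb Mq \mid \encb Mq \bsim !y(z,q).\encb Nq \mid N''$, with subject $a := y$. Its freshness hypothesis holds because $y$ does not occur in $M$ or $N$, so $y$ is free neither in $\encb Mq$ nor in $N''$. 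This yields $\encb Mq \bsim N''$.

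Finally I would reconcile the internal drift $\encb Nq \Arr{} N''$. By Proposition~\ref{p:opt_op} every $\tau$-step of the optimised encoding corresponds to a $\betav$-reduction up to expansion: if $\encb Nq \arr\tau P_1$ then $N \red N_1$ with $P_1 \exn \encb {N_1} q$, while Lemma~\ref{l:beta:opt} gives $\encb Nq \exn \encb {N_1} q$. Since $\exn$ is included in $\bsim$, both $P_1$ and $\encb Nq$ are bisimilar to $\encb {N_1} q$, whence $\encb Nq \bsim P_1$. Iterating along the (deterministic) $\tau$-sequence $\encb Nq \Arr{} N''$ gives $\encb Nq \bsim N''$, and therefore $\encb Mq \bsim N'' \bsim \encb Nq$, as required.

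The main obstacle is the second step: weak bisimilarity lets the defender over-run with silent moves, so the matching state $S'$ is not literally $!y(z,q).\encb Nq \mid \encb Nq$ but carries an extra internal run $\encb Nq \Arr{} N''$. Controlling this drift is the crux, and it is precisely the determinacy of the optimised encoding (Proposition~\ref{p:opt_op}, exactly one immediate transition) together with validity of $\betav$-reduction (Lemma~\ref{l:beta:opt}) that makes the drift harmless, since every internal residual reachable from $\encb Nq$ stays bisimilar to $\encb Nq$ itself.
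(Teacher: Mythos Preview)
Your proof is correct and follows essentially the same approach as the paper: fire the input at $y$, analyse the shape of the defender's response, invoke Lemma~\ref{l:pref_decompose}(2), and use operational correspondence (Proposition~\ref{p:opt_op}) together with validity of $\betav$-reduction (Lemma~\ref{l:beta:opt}) to handle the residual $\tau$-drift on the $N$ side. The only cosmetic difference is the order of the last two steps: the paper first uses the observation $\encb Nq \bsim N''$ to rewrite the right-hand component \emph{before} applying Lemma~\ref{l:pref_decompose}, whereas you apply Lemma~\ref{l:pref_decompose} first and then reconcile $N''$ with $\encb Nq$; both orderings are valid.
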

\begin{proof}
  We first observe that if $\encb Np\Arr{} P$, then there exists $N'$
  such that $N\red^* N'$, $P\exn\encb{N'}p$ and $\encb Np\bsim P$. This follows from
  operational correspondence (Proposition~\ref{p:opt_op}) and validity
  of $\betav$-reduction (Lemma~\ref{l:beta:opt}).
  
  Let us now prove the lemma. We play a transition $\arr{y(z,q)}$ on
  the left hand side. The answer on the right hand side leads to a
  process $P$ such that $\encb Nq\Arr{}P$ and
  $$!y(z,q).\encb Mq | \encb Mq \bsim
  !y(z,q).\encb Nq | P.$$
  By the observation above, we deduce that 
  $$!y(z,q).\encb Mq | \encb Mq \bsim
  !y(z,q).\encb Nq | \encb Nq.$$
  We can then conclude using Lemma~\ref{l:pref_decompose}.
\end{proof}

We now show that the only \lterms whose encoding is bisimilar to 
the encoding of some variable $x$ 
reduce either to $x$, or to a (possibly infinite) 
$\eta$-expansion of $x$.

\begin{lemma}\label{l:opt_eta}
  If $\val$ is a value and $x$ a variable,
  $\encbva \val\bsim \encbva x$ 
  implies that either $\val=x$ or $\val = \abs z {M}$, where the eager
  normal form of $M$ is of the form $\evctxt[x\valp]$, with
  $\encba {\valp}\bsim \encba z$ and $\encba {\evctxt[y]}\bsim\encba y$
  for any $y$ fresh.
\end{lemma}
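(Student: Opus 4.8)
The plan is to analyse the two possible shapes of the value $\val$ (a variable or an abstraction), the first being immediate and the second requiring the bulk of the work. First I would dispose of the case where $\val$ is a variable $x'$. Here $\encbva{x'}=\bind y\fwd y{x'}$ and $\encbva x=\bind y\fwd y x$, so, applying both abstractions to a fresh $y$, the link $\fwd y{x'}$ can only perform an output whose subject is $x'$, whereas $\fwd y x$ can only emit at $x$; since these barbs must agree under $\bsim$, we get $x'=x$, i.e.\ $\val=x$, which is the first alternative of the statement.

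The interesting case is $\val=\abs z M$. The key preliminary observation is that the body of the value link $\fwd y x$ coincides, up to $\sim$, with $!\inp y{z,p}.\encb{xz}p$: unfolding the value-name clause of $\fwd{}{}$ and comparing with rule \nameDS{var-val} applied to $xz$ shows $\fwd y x\sim !\inp y{z,p}.\encb{xz}p$ (the residual $\outb x{w,q}.(\fwd q p\mid\fwd w z)$ matches $\encb{xz}p$ after renaming the extruded name). Hence $\encbva x\sim\bind y !\inp y{z,p}.\encb{xz}p$. Combining this with $\encbva{\abs z M}=\bind y !\inp y{z,q}.\encb M q$, the hypothesis, and application to a fresh $y$ yields (after renaming the bound continuation to $q$) $!\inp y{z,q}.\encb M q\bsim !\inp y{z,q}.\encb{xz}q$, and Lemma~\ref{l:bangy} gives $\encb M q\bsim\encb{xz}q$.

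From $\encb M q\bsim\encb{xz}q$ I would recover the shape of the eager normal form of $M$. The process $\encb{xz}q$ has a single immediate action, an output at $x$ (clause~2 of Proposition~\ref{p:opt_op}), and does not diverge; so by Lemma~\ref{l:bot} $M$ converges, and by Lemma~\ref{l:opt_enf} together with Proposition~\ref{p:opt_op} its eager normal form $M'$ performs an output whose subject is a value name. A value would emit at the continuation $q$, so $M'$ cannot be a value; thus $M'=\evctxt[x'\valp]$, and since the (weak) barb must be at $x$ we get $x'=x$, i.e.\ $M\converges\evctxt[x\valp]$. By validity of $\betav$-reduction (Lemma~\ref{l:beta:opt}) we have $\encb M q\bsim\encb{\evctxt[x\valp]}q$, hence $\encb{\evctxt[x\valp]}q\bsim\encb{xz}q$.

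Finally I would extract the two required equivalences by a decomposition argument. Using Lemma~\ref{l:opt_stuck} on both sides, the two processes are, up to $\sim$, of the form $\outb x{z',q'}.R$ with residuals $R_L=\encbv\valp{z'}\mid\inp{q'}y.\encb{\evctxt[y]}q$ and $R_R=\encbv z{z'}\mid\inp{q'}y.\encb y q$; firing the (only) output and identifying the extruded names by $\alpha$-conversion gives $R_L\bsim R_R$. Applying Lemma~\ref{l:pref_decompose} twice then isolates the parallel components: case~1, isolating the continuation input at $q'$, yields $\encbv\valp{z'}\bsim\encbv z{z'}$, hence $\encba\valp\bsim\encba z$ by congruence of the output-prefix context of rule \nameDS{opt-val}; case~2, isolating the replicated input at $z'$, yields $\inp{q'}y.\encb{\evctxt[y]}q\bsim\inp{q'}y.\encb y q$, and firing the input at $q'$ gives $\encb{\evctxt[y]}q\bsim\encb y q$, i.e.\ $\encba{\evctxt[y]}\bsim\encba y$ for fresh $y$. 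The main obstacle I anticipate is verifying the freshness side-conditions of Lemma~\ref{l:pref_decompose}—in particular that the extruded argument-location $z'$ does not occur free in $\encb{\evctxt[y]}q$, and that $q'$ does not occur in the value encodings—and ensuring the extruded names of the two bound outputs can be identified before decomposing.
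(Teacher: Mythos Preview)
Your proof is correct and follows essentially the same route as the paper: case analysis on the shape of $\val$, then for $\val=\abs z M$ stripping the replicated input to obtain $\encb M q\bsim\encb{xz}q$, reading off the eager normal form of $M$ via operational correspondence, and finally decomposing the residuals after the bound output at $x$ with Lemma~\ref{l:pref_decompose}. The only noteworthy difference is presentational: you make explicit the identification $\fwd y x = {!}\inp y{z,q}.\encb{xz}q$ and invoke Lemma~\ref{l:bangy}, whereas the paper writes the right-hand side directly as $\outb x{z',q'}.(\fwd{z'}z\mid\fwd{q'}q)$ and passes to $\encb M q\bsim\outb x{z',q'}.(\fwd{z'}z\mid\fwd{q'}q)$ in one ``Therefore'' (implicitly using the same stripping). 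One small phrasing point: in the variable sub-case, $\fwd y{x'}$ does an input at $y$ first and only then exposes the output barb at $x'$; your conclusion $x'=x$ is of course still correct. The freshness side-conditions you worry about at the end are exactly those the paper checks, and they hold because $z',q'$ are the freshly extruded names of the bound output.
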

\begin{proof}
We observe that if $y\neq x$, then
  $\encbva y \not\bsim \encbva x$; therefore if $\encbva \val \bsim \encbva x$ and
  $\val\neq x$, then $\val$ has to be an abstraction $\val=\abs z M$.
  In this case, by definition, we have:
\begin{itemize}
\item $\encb {\abs z M} p= \outb p y. !\inp y {z,q}.\encb M q $
   and
\item $\encb x p = \outb p y. !\inp y{z,q}.\outb x{z',q'}.(\fwd {z'}z|\fwd {q'}q)$.
\end{itemize}
Therefore $\encb M q \bsim \outb x{z',q'}.(\fwd {z'}z|\fwd {q'}q)$,
and by Lemma~\ref{l:opt_enf} and Proposition~\ref{p:opt_op}, $M$ has an eager normal form $\evctxt[x\valp]$. 
We have, using Lemma~\ref{l:opt_stuck} (since $\exn\protect{\subseteq}\bsim$):
$$
\begin{array}{rcl}
\encb M q &\bsim&  \encb {\evctxt[x\valp]} q \\[\myptSmall]
&\bsim& \outb x {z',q'}.(\encbv {\valp}
        {z'}|\inp {q'} y.\encb {\evctxt[y]}q)
        ,
\end{array}
$$
which gives
$$\encbv {\valp} {z'}|\inp {q'} y.\encb {\evctxt[y]}q\bsim 
\fwd {z'}z |\fwd {q'}q . $$
We observe that $z'$ does not occur free in $\inp {q'} y.\encb
{\evctxt[y]}q$ and $q'$ does not occur 
free in $\encbv {\valp}{z'}$. Furthermore, $\encbv {\valp}{z'}$ is prefixed by 
an input on $z'$. By applying Lemma~\ref{l:pref_decompose} twice, we
deduce 
$$\encbv {\valp} {z'}\bsim  \fwd {z'}z\text{\hspace{2ex} and
  \hspace{2ex}}\inp {q'} y .\encb {E[y]}q \bsim \fwd {q'}q
\enspace.$$ 

By definition, $\fwd {z'}z=\encbv z {z'}$, so we have
$\encb {\valp} {z'}\bsim \encb z {z'}$.

We also have
\begin{align*}
\inp {q'} y .\encb {\evctxt[y]}q &\bsim \fwd {q'}q =\inp {q'} y.\outb q {y'}.(\fwd {y'}y),
\end{align*}
which gives
\begin{align*}
\encb{\evctxt[y]}q &\bsim \outb q {y'}.(\fwd {y'}y) =\encb y q.
\end{align*}
\end{proof}

We can  now prove soundness of the encoding.

\begin{proposition}[Soundness]\label{l:sound}
For any  $M,N \in \Lao$, if $\enca M\bsim\enca N$ then $M\enfe N$.
\end{proposition}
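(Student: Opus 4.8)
The plan is to show that the relation
$\R \defi \{(M,N) : \enca M \bsim \enca N\}$
is an \enfbsim{} in the sense of Definition~\ref{enfebsim}; since $\enfe$ is the largest such relation, this gives ${\R}\subseteq{\enfe}$ and hence the statement. As $\bsim$ is symmetric, so is $\R$, so it suffices to verify the clauses up to the symmetry relating clauses~\reff{lab:five} and~\reff{def:enfe:case:eta}. Throughout I would replace the original encoding by the optimised one: by Lemma~\ref{l:opt_sound} and ${\exn}\subseteq{\bsim}$ we have $\enca M\bsim\encba M$, so $M\RR N$ is equivalent to $\encba M\bsim\encba N$, and all the operational machinery (Proposition~\ref{p:opt_op}, Lemmas~\ref{l:opt_enf}, \ref{l:opt_stuck}, \ref{l:bot}) applies to $\encb{\cdot}p$.

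First I would dispose of divergence: if $M\diverges$ then $\encb Mp\bsim\zero$ by Lemma~\ref{l:bot}, whence $\encb Np\bsim\zero$ and $N\diverges$, giving clause~1. Otherwise both $M$ and $N$ converge (same argument applied to $N$), and by Proposition~\ref{p:opt_op} together with Lemma~\ref{l:opt_enf} the process $\encb Mp$ reaches, through $\tau$-steps, a unique first visible transition whose subject reveals the shape of the \enform{} of $M$: an output on the continuation $p$ signals a value, while an output on a value name $x$ signals a stuck term $\evctxt[x\val]$. Because $\encb Mp\bsim\encb Np$ forces the two processes to agree on this first observable (the subject $p$, resp.\ the free value name $x$, is preserved by $\bsim$), $M$ and $N$ converge to \enforms{} of the same kind, and in the stuck case with the same head variable $x$.

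In the value case, say $M\converges V_M$ and $N\converges V_N$, I would match the residuals $\encbv{V_M}y\bsim\encbv{V_N}y$ and split on the structure of the values. When both are abstractions $\abs xM'$, $\abs xN'$, their value-encodings are $!\inp y{x,q}.\encb{M'}q$ and $!\inp y{x,q}.\encb{N'}q$, so Lemma~\ref{l:bangy} yields $\encb{M'}q\bsim\encb{N'}q$, i.e.\ $M'\RR N'$ (clause~3). When both are the same variable $x$ we obtain clause~4, using the observation (from the proof of Lemma~\ref{l:opt_eta}) that distinct variables have non-bisimilar encodings. When one is a variable and the other an abstraction, Lemma~\ref{l:opt_eta} supplies exactly the $\eta$-structure required by clause~\reff{lab:five} (its converse handled by symmetry). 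In the stuck case I would use Lemma~\ref{l:opt_stuck} to rewrite both sides as $\outb x{z,q}.(\encbv\val z|\inp qy.\encb{\evctxt[y]}p)$ and its $\valp,\evctxtp$ counterpart; matching the common bound output gives $\encbv\val z|\inp qy.\encb{\evctxt[y]}p\bsim\encbv\valp z|\inp qy.\encb{\evctxtp[y]}p$ for fresh $z,q$. Since $z$ is fresh for the input component and $q$, a linearly-used continuation name, is fresh for $\encbv\val z$, Lemma~\ref{l:pref_decompose} (applied twice, once in each orientation) decomposes this into $\encbv\val z\bsim\encbv\valp z$ and $\inp qy.\encb{\evctxt[y]}p\bsim\inp qy.\encb{\evctxtp[y]}p$; the former gives $\val\RR\valp$, and firing the input on $q$ the latter gives $\evctxt[y]\RR\evctxtp[y]$ for fresh $y$, which is clause~\reff{ie:split}.

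The main obstacle I anticipate is the $\eta$-case and the bookkeeping it requires: establishing that $\encbva x\bsim\encbva{\abs zM}$ really forces the infinitary $\eta$-shape is the content of Lemma~\ref{l:opt_eta}, whose proof is itself a delicate decomposition (via Lemma~\ref{l:pref_decompose}) of a link against a stuck encoding, and one must check that the relations it produces, $\encba\valp\bsim\encba z$ and $\encba{\evctxt[y]}\bsim\encba y$, are precisely pairs in $\R$ so that the coinductive clauses~\reff{lab:five}--\reff{def:enfe:case:eta} close up. The secondary subtlety is verifying, in the stuck case, that the freshness hypotheses of Lemma~\ref{l:pref_decompose} genuinely hold for the two parallel components, which is exactly where the value-name/continuation-name discipline of the encoding is used.
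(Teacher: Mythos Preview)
Your proposal is correct and follows essentially the same approach as the paper: define $\R$ via bisimilarity of encodings, pass to the optimised encoding via Lemma~\ref{l:opt_sound}, use Lemma~\ref{l:bot} for divergence, use Proposition~\ref{p:opt_op} and Lemma~\ref{l:opt_stuck} to expose the shape of the \enform, apply Lemma~\ref{l:pref_decompose} twice for the stuck-case decomposition, Lemma~\ref{l:bangy} for the abstraction case, and Lemma~\ref{l:opt_eta} for the $\eta$-case. The subtleties you anticipate (freshness conditions for Lemma~\ref{l:pref_decompose}, and that Lemma~\ref{l:opt_eta} yields pairs that land back in $\R$) are exactly the ones the paper tracks.
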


\begin{proof}

Let $\R\defi\{(M,N)\st \encba M\bsim\encba N\}$; we show that $\R$ is an \enfbsim, 
and conclude by Lemma \ref{l:opt_sound}.
Assume $\encba M\bsim \encba N$. 
\begin{enumerate}
\item If $M\diverges$, 
by Lemma~\ref{l:bot}, for any fresh $p$, $\encb Mp\bsim\zero$. Thus
$\encb Np \bsim \zero $, and, by Lemma~\ref{l:bot} again, $N\diverges$.

\item Otherwise, $M$ and $N$ have \enforms $M'$ and $N'$; i.e.,  
$M\converges M'$ and $N\converges N'$. 
Therefore by Lemma~\ref{l:opt_sound} and 
validity of $\betav$-reduction, $\encba {M'}\bsim \encba{N'}$. 
Since $M'$ is in \enform, by Proposition~\ref{p:opt_op}, 
either $\encb {M'} p\arr{\outb x {z,q}}P$ or $\encb {M'} p\arr{\outb p y}P$, 
and likewise for $N'$. This yields two cases:
\begin{enumerate}
\item \label{p:sound:case:eval}$M'=\evctxt[x\val]$,
  $N'=\evctxt'[x\valp]$, and
  $$
    \encbv \val z|\inp q y.\encb {\evctxt[y]} p \quad
    \bsim\quad \encbv {\valp} z|\inp
  q y.\encb {\evctxt'[y]} p\enspace.
  $$
We observe that name $q$ does not appear free
   in $\encbv \val z$ or $\encbv {\valp} z$, hence by Lemma
   \ref{l:pref_decompose}, $\encbva\val \bsim\encbva{\valp}$, thus
   $\encba\val\bsim\encba\valp$.
   
Likewise, $z'$ does not appear free in either
$\inp q y.\encb {\evctxt[y]} p$ or $\inp q y.\encb {\evctxt'[y]} p$,
and both $\encbv \val z$ and $\encbv {\valp} z$ are prefixed by a 
replicated input on
$z$. Hence, by Lemma~\ref{l:pref_decompose}, 
$\encba {\evctxt[y]}\bsim\encba {\evctxt'[y]}$.
 
Therefore $\val\RR\valp$ and $\evctxt[y]\RR \evctxt'[y]$.

\item $M'$ and $N'$ are values. They can be  abstractions or variables. 
\begin{enumerate}
\item If both are abstractions $M'=\abs z M''$, $N'=\abs z N''$, and 
  $$!\inp y
  {z,q}.\encb {M''} q\bsim !\inp y {z,q}.\encb {N''}q,$$
  hence, by Lemma~\ref{l:bangy}, $\encba
  {M''}\bsim \encba{N''}$, which gives $M''\RR N''$.
\item If both are variables, as seen above, we necessarily have
  $M'=N'=x$.
We have $x\RR x$, thus we can conclude. 
\item Otherwise, assume $M'=\abs z M''$ and $N'=x$ without loss of generality.
Then $\encba {M'}\bsim\encba {N'}\bsim \encba x$.
By Lemma~\ref{l:opt_eta}, $M''\converges \evctxt[x\val]$ for some
$\evctxt$, $\val$, and also 
$\encba {\val}\bsim \encba z$ and
$\encba {\evctxt[y]}\bsim \encba y$ for some $y$ fresh.

Hence, $\val\RR z$,
$\evctxt[y]\RR y$ for some $y$ fresh, and we can conclude using case~\ref{def:enfe:case:eta} of Definition~\ref{enfebsim}.
\end{enumerate}
\end{enumerate}
\end{enumerate}
\end{proof}


\subsection{Completeness and Full Abstraction}
\label{s:complete}

 Suppose $M \enfe N$. Then there is an \enfbsim $\R$ such that $M\R N$.
 The completeness of the encoding can thus be stated as follows: given 
 $\R$ an \enfbsim, for all
 $(M,N)\in \R$, $\enca M \bsim\enca N$.

To increase readability of the proof, we first show  completeness for $\enf$, rather than $\enfe$.

\paragraph{Introducing systems of equations}
Suppose $\R$ is an eager normal-form bisimulation. 
We
define an (infinite) system of equations $\eqcbv$, solutions of which will be 
obtained from the encodings of the pairs in $\R$.
The definition of $\eqcbv$ is given on Figure~\ref{f:app:eq}.
We then use Theorem~\ref{thm:usol} and Lemma~\ref{t:transf:equations}
to show that $\eqcbv$ has a unique solution.

We assume an ordering on names and variables, so to be able to view
(finite) sets of these as tuples.  In the equations of
Figure~\ref{f:app:eq}, $\til y$ is assumed to be the ordering of
$\fv{M,N}$.  Moreover, if $F$ is an abstraction, say $\bind \tila P$,
then $\bind \tily F$ is an abbreviation for its uncurrying
$\bind{\tily,\tila}P$.

There is one equation $ X_{M,N} = E_{M,N}$ for each pair $(M, N)\in\R$.
The body $E_{M,N}$ is essentially the encoding
of the eager normal form
 (or absence
thereof)
 of $M$ and $N$,  with the
variables of the equations representing the coinductive hypothesis.
To formalise this, 
we extend the encoding of the $\lambda$-calculus to equation variables 
by setting
\[ 
 \enca {X_{M,N}}{} \deff \bind {p}  \app{X_{M,N}}{\tily,p} 
 \hskip .5cm \mbox{ ~~ where $\tily = \fv{M,N}$}
 \enspace.
\]

\paragraph{Systems \eqcbv and \eqcbvp}

We introduce two systems of equations,
\eqcbv
(Figure~\ref{f:app:eq}) and \eqcbvp (Figure~\ref{f:app:opt}).
On both figures, we provide the equations which are 
needed to handle $\enfe$.
The systems to handle $\enf$
are obtained by omitting some equations (precisely, the last two
equations in Figures~\ref{f:app:eq} and~\ref{f:app:opt}).

\begin{figure*}[t]
  \centering
\begin{align*}
&M\diverges\text{ and } N\diverges:
&X_{M,N} &= \bind {\til y} \enca \Omega \\
%
&M\converges \evctxt[x\val]\text{ and }N\converges\evctxt'[x\valp]:
&X_{M,N} &= \bind {\til y} 
\enca {(\abs z X_{\evctxt[z],\evctxt'[z]}
)
~(x~X_{\val,\valp}
)} \\
%
&M\converges x \text{ and }N\converges x:
&X_{M,N}&=\bind {\til y} \enca x \\
%
&M\converges \abs x M'\text{ and } N\converges \abs x N':
&X_{M,N}&=\bind {\til y}
\enca{\abs x X_{M',N'}
}\\
%
&M\converges x\text{, }N\converges \abs z N'\text{, }N'\converges\evctxt[x\val]:
&X_{M,N} &=\bind {\til y}
\enca{\abs z \left((\abs w X_{w,\evctxt[w]}) ~ (x~X_{z,\val}
)\right)}\\
%
&M\converges \abs z M'\text{, }M'\converges \evctxt[x\val]\text{, }N\converges x:
&X_{M,N} &=\bind {\til y}
\enca{\abs z \left((\abs w X_{\evctxt[w],w}) ~ (x~X_{\val,z}
)\right)}
\end{align*}
\caption{System \eqcbv of equations 
 (the last two equations are
  included only when considering $\enfe$)
}
\label{f:app:eq}
\end{figure*}

Given $(M,N)\in\R$, we now comment on the equation $ X_{M,N} = E_{M,N}$
in system \eqcbv.
%
The equation is 
 parametrised on  the free variables
of  $M$ and $N $ (to ensure that the body $E_{M,N}$ 
 is a name-closed
abstraction) and an additional continuation
name (as all encodings of terms).
Accordingly, we write $\til y$ for the ordering of $\fv{M,N}$.

\begin{enumerate}
\item 
If 
$M\diverges$ and
$N\diverges$, then the equation is
\begin{align*}
X_{M,N} = \bind {\til y} \enca \Omega 
\end{align*}
%
(We could use $\bind{\til y,p}\zero$ above, since the encoding of
a diverging term is bisimilar to $\zero$).

\item If $M\converges x$ and $N\converges x$, then
 the equation
 is 
 the encoding of $x$:
\begin{align*}
  X_{M,N}&=\bind {\til y} \enca x
                  = \bind {\til y,p}  \outb p z. \fwd z x
\end{align*}
Since $x$ is the \enform of $M$ and $N$, $x\in\til y$. 
Note that $\til y$ can contain more names, occurring free in $M$ or $N$.


\item If $M\converges \abs x M'$ and $N\converges \abs x N'$, then 
the equation encodes an abstraction whose body refers 
to the normal forms of  $M',N'$, via the variable $X_{M',N'}$:
\[
\begin{array}{rcl}
 X_{M,N}&=&\bind {\til y}
\enca{\abs x X_{M',N'}
}
\\[\myptSmall]
&= & \bind {\til y,p}
\outb p z .!\inp z
{x,q}.X_{M',N'}\param {\tilprime y,q},
\end{array}
\]
where $\tilprime y$ is the ordering of $\fv{M',N'}$.





\item\label{item:decomp:eqcbv} If $M\converges \evctxt [x\val]$ and
  $N\converges \evctxt' [x\valp]$, 
  we separate the evaluation contexts and the values, as in 
  Definition~\ref{enfbsim}. 
  In the body of the equation, this is achieved by: $(i)$ rewriting
  $\evctxt[x\val]$ into $(\abs z\evctxt [z])(x\val)$, for some fresh $z$,
  and similarly for $\evctxt'$ and $\valp$ (such a transformation is
  valid for $\enf$); and $(ii)$ referring to the variable for the
  evaluation contexts, $X_{\evctxt[z],\evctxt'[z]}$,
and to the variable for the values,  $X_{\val,\valp}$.
This yields the equation (for $z$ fresh):
\begin{align*}
 X_{M,N} = \bind {\til y} 
\enca {(\abs z X_{\evctxt[z],\evctxt'[z]}
)
~(x~X_{\val,\valp}
  )}
\end{align*}
\end{enumerate}

As an example, 
 suppose 
 $(I,\abs
 xM)\in\R$, 
where $I=\abs x x$ and $M= (\abs {zy}z) x x'$. 
We  obtain
the following equations: 
 (we have $\fv{M}=\{x,x'\}$, and we assume $x$ is before $x'$ in the
ordering of variables):
\begin{enumerate}
\item $\begin{aligned}[t]
X_{I,\abs x M}
&=\bind {x'}\enca {\abs x X_{x,M}
} 
\\
&=\bind {x',p} \outb p y .!\inp y {x,q}.X_{x,M}\param {x,x',q}
\end{aligned}$
\item $\begin{aligned}[t]
X_{x,M
}
&=\bind {x,x'}\enca x \\ 
&=\bind{x,x',p} \outb p y.\fwd y x
\end{aligned}$
\end{enumerate}

Before explaining how $\R$ yields solutions of the system,
we prove the following important law: 
\begin{restatable}{lemma}{solaux}\label{l:sol_aux}
If \evctxt\ is an evaluation context, $\val$ is a value,   $x$ is a
name and $z$ is 
fresh in \evctxt, then
$$\enca {\evctxt[x\val]} \bsim \enca {(\abs z \evctxt[z]) (x\val)}
.$$
\end{restatable}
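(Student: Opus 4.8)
The plan is to transport the statement to the optimised encoding, where the stuck-form lemma makes the two sides almost identical. Working with a fresh continuation $p$, since $\enca M\exn\encba M$ (Lemma~\ref{l:opt_sound}) and $\exn\subseteq\bsim$, it suffices to establish $\encb{\evctxt[x\val]}p\bsim\encb{(\abs z\evctxt[z])(x\val)}p$; the claimed equivalence for $\enca{}$ then follows by composing with the two expansions and using that $\bsim$ is an equivalence. So I would first reduce the goal to a statement purely about $\encba{}$.

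The key observation I would make is that $(\abs z\evctxt[z])(x\val)$ is itself an \enform{} of the shape $\evctxt''[x\val]$, with $\evctxt''=(\abs z\evctxt[z])\hdot$. Indeed $\abs z\evctxt[z]$ is a value and its argument $x\val$ is stuck, so no $\betav$-redex is available; the term is in eager normal form and Proposition~\ref{p:case:analysis} gives its unique decomposition. Hence Lemma~\ref{l:opt_stuck} applies to \emph{both} terms, and, choosing the same bound names $z,q$ on the two sides by $\alpha$-conversion, it would yield
\begin{align*}
\encb{\evctxt[x\val]}p &\sim \outb x{z,q}.(\encbv\val z|\inp q y.\encb{\evctxt[y]}p),\\
\encb{(\abs z\evctxt[z])(x\val)}p &\sim \outb x{z,q}.(\encbv\val z|\inp q y.\encb{(\abs z\evctxt[z])y}p).
\end{align*}
The two right-hand sides differ only in the continuation guarded by the input on $q$, namely $\encb{\evctxt[y]}p$ versus $\encb{(\abs z\evctxt[z])y}p$.

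It then remains to relate those two continuations. Since $z$ is fresh for $\evctxt$, there is a $\betav$-reduction $(\abs z\evctxt[z])y\red\evctxt[y]$, so validity of $\betav$-reduction (Lemma~\ref{l:beta:opt}) gives $\encb{(\abs z\evctxt[z])y}p\exn\encb{\evctxt[y]}p$, whence $\encb{(\abs z\evctxt[z])y}p\bsim\encb{\evctxt[y]}p$. As $\bsim$ is preserved by all contexts, in particular by the input prefix $\inp q y.[\cdot]$, by parallel composition with $\encbv\val z$, and by the bound output $\outb x{z,q}.[\cdot]$, the two stuck forms above are bisimilar; combined with $\sim\subseteq\bsim$ and transitivity this gives $\encb{\evctxt[x\val]}p\bsim\encb{(\abs z\evctxt[z])(x\val)}p$, which is what I needed. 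I expect the only delicate points to be purely bookkeeping: verifying that $(\abs z\evctxt[z])\hdot$ is a legal evaluation context so that Lemma~\ref{l:opt_stuck} is applicable to the second term, and aligning the names extruded by the two bound outputs; no genuine obstacle is anticipated.
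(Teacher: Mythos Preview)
Your proposal is correct and coincides with the paper's detailed proof in the appendix: transport to the optimised encoding via Lemma~\ref{l:opt_sound}, apply Lemma~\ref{l:opt_stuck} to both $\evctxt[x\val]$ and $(\abs z\evctxt[z])(x\val)$ (the latter being the stuck term $\evctxt''[x\val]$ with $\evctxt''=(\abs z\evctxt[z])\hdot$, which is indeed a legal evaluation context of the form $\val\,\hdot$), and then close the gap between the continuations using validity of $\betav$-reduction on $(\abs z\evctxt[z])y\red\evctxt[y]$. The brief proof in the main text phrases the argument as an induction on $\evctxt$, but that induction is already packaged in Lemma~\ref{l:opt_stuck}, so the two presentations amount to the same thing.
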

\begin{proof}
  By induction on the evaluation context \evctxt. 
  When $\evctxt=[\cdot]$, we show that
  $\enca {x\val} \bsim \enca {I(x\val)}$ by  algebraic
reasoning. 
The other cases 
can be handled similarly. 
\end{proof}

\paragraph{Solutions of \eqcbv}
Having set the system of  equations for $\R$, we now  define 
solutions for it from the encoding of the pairs in $\R$.

We can view the relation $\R $  as an ordered
sequence of pairs (e.g., assuming some lexicographical  ordering). 
Then $\R_i$ indicates the tuple obtained by
projecting the   pairs in $\R$ onto the $i$-th component ($i=1,2$).
Moreover $(M_j,N_j)$  is the $j$-th pair in $\R$, and $\til{y_j}$ is 
$\fvars {M_j,N_j}$.

We write $\encOO\R$   for the closed abstractions 
resulting  from the
 encoding of $\R_1$, i.e., the tuple 
whose $j$-th component is 
$ 
\bind {\til{y_j}}\enca{M_j}$, and similarly for 
$\encTO\R$.
 \begin{definition}  
We define $
     \encleft \R\scdef \{\bind {\til y}\enca M\st \exists N, M\RR N$
  and $\til y =\fvars {M,N}\}$
. 
$\encright {\R}$ is defined similarly, based on the right-hand side of
the relation. 
\end{definition}

 We observe that if $\til y=\fv{M,N}$, then 
$\bind {\til y,p} \enc M p$ and
$\bind {\til y,p} \enc N p$
 are closed abstractions.

 As a direct consequence of Lemmas~\ref{l:opt_sound}
and~\ref{l:beta:opt}, we obtain the following result, which is used below.
\begin{lemma}[Validity of $\betav$-reduction for the original encoding]\label{l:beta}
    For any
  $M,N$ in $ \Lao$, $M\longrightarrow  N$ implies that  for any $p$,
$\enc Mp\bsim\enc Np$.
\end{lemma}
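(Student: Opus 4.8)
The plan is to prove the statement purely by transitivity, gluing together the two cited lemmas through bisimilarity. The key observations are that expansion is finer than bisimilarity, i.e.\ ${\exn}\subseteq{\bsim}$, and that $\bsim$ is an equivalence relation (in particular symmetric and transitive).

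First I would instantiate Lemma~\ref{l:opt_sound} at both $M$ and $N$. Reading the relation on abstractions as ground instantiation at the continuation name $p$, this gives $\enc M p \exn \encb M p$ and $\enc N p \exn \encb N p$ for any $p$ (the encodings being uniform in the continuation parameter, so that the choice of $p$ is immaterial). Since ${\exn}\subseteq{\bsim}$ and $\bsim$ is symmetric, these yield $\enc M p \bsim \encb M p$ and $\encb N p \bsim \enc N p$.

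Next I would apply Lemma~\ref{l:beta:opt} to the hypothesis $M\red N$, obtaining $\encb M p \exn \encb N p$, hence $\encb M p \bsim \encb N p$. Chaining the three bisimilarities by transitivity,
\[
  \enc M p \;\bsim\; \encb M p \;\bsim\; \encb N p \;\bsim\; \enc N p,
\]
gives the desired $\enc M p \bsim \enc N p$.

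There is no real obstacle here: all the substantial work has already been carried out in the two cited lemmas, one establishing that the optimised encoding expands the original one, the other establishing validity of $\betav$-reduction for the optimised encoding. The only point requiring a little care is the bookkeeping of the direction of expansion: one must pass to the coarser relation $\bsim$ (using ${\exn}\subseteq{\bsim}$ together with the symmetry of $\bsim$) rather than attempting to carry the finer preorder $\exn$ through the whole chain, since the link to $\enc N p$ relies on the \emph{converse} orientation $\enc N p \exn \encb N p$, which would break a purely $\exn$-based argument at the passage through $N$.
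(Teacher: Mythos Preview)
Your proposal is correct and matches the paper's own argument, which simply states that the lemma is a direct consequence of Lemmas~\ref{l:opt_sound} and~\ref{l:beta:opt}. You have spelled out precisely the intended chain of reasoning, including the only subtle point about passing to $\bsim$ to handle the orientation of expansion at $N$.
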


\begin{lemma}\label{l:sol}
 $\encleft{\R}$
 and $\encright{\R}$
 are solutions of the system of equations 
 \eqcbv.
\end{lemma}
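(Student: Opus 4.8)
The plan is to check Definition~\ref{d:un_sol} one equation at a time. Fix a pair $(M,N)\in\R$ and let $\tily$ be the ordering of $\fv{M,N}$; the component of $\encleft\R$ indexed by $(M,N)$ is $\bind{\tily}\enca M$, so I must establish $\bind{\tily}\enca M\bsim E_{M,N}[\encleft\R]$ for each shape of equation in Figure~\ref{f:app:eq}, and symmetrically $\bind{\tily}\enca N\bsim E_{M,N}[\encright\R]$ for $\encright\R$. The key preliminary is a compositionality observation: since the extended encoding sets $\enca{X_{P,Q}}=\bind p\app{X_{P,Q}}{\til{y}_{P,Q},p}$ and the $(P,Q)$-component of $\encleft\R$ is $\bind{\til{y}_{P,Q}}\enca P$, substituting this component for $X_{P,Q}$ and normalising the application redex turns $\enca{X_{P,Q}}$ back into $\enca P$, the arguments $\til{y}_{P,Q}=\fv{P,Q}$ being passed to themselves. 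Hence $E_{M,N}[\encleft\R]$ is, up to normalisation, exactly $\bind{\tily}\enca T$, where $T$ is the $\lambda$-term obtained from the template defining $E_{M,N}$ by replacing every variable $X_{P,Q}$ by its left projection $P$ (and by $Q$ for $\encright\R$). Thus in each case it remains to prove a bisimilarity between encodings of genuine $\lambda$-terms.

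Since $\R$ is an $\eta$-eager normal-form bisimulation, each $(M,N)\in\R$ falls under exactly one clause of Definition~\ref{enfebsim}, matching the six equations; I treat them in turn, giving the argument for $\encleft\R$ (the one for $\encright\R$ being identical, read off $N$). If $M\diverges$ and $N\diverges$ then $T=\Omega$, and $\enca M\bsim\zero\bsim\enca\Omega$ follows from Lemma~\ref{l:bot} together with $\enca{\cdot}\exn\encba{\cdot}$ (Lemma~\ref{l:opt_sound}). If $M\converges x$ and $N\converges x$ then $T=x$ and $\enca M\bsim\enca x$ is immediate from validity of $\betav$-reduction (Lemma~\ref{l:beta}) applied along $M\reds x$; the abstraction clause $M\converges\abs x M'$ is identical, with $T=\abs x M'$ and $\enca M\bsim\enca{\abs x M'}$ again by Lemma~\ref{l:beta}. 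In the branching clause $M\converges\evctxt[x\val]$, the left projection gives $T=(\abs z\evctxt[z])(x\val)$, and I chain $\enca M\bsim\enca{\evctxt[x\val]}$ (Lemma~\ref{l:beta}, along $M\reds\evctxt[x\val]$) with $\enca{\evctxt[x\val]}\bsim\enca{(\abs z\evctxt[z])(x\val)}$ (Lemma~\ref{l:sol_aux}).

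The two $\eta$-clauses are the delicate ones. Consider clause~\reff{def:enfe:case:eta}, with $M\converges\abs z M'$, $M'\converges\evctxt[x\val]$ and $N\converges x$; clause~\reff{lab:five} is its mirror image. For $\encleft\R$ the left projection yields $T=\abs z ((\abs w\evctxt[w])(x\val))$, and $\enca M\bsim\enca T$ is obtained as in the branching clause but now carried out underneath $\abs z{}$: by Lemma~\ref{l:beta} along $M\reds\abs z M'$ we have $\enca M\bsim\enca{\abs z M'}$; validity of $\betav$-reduction applied to $M'\reds\evctxt[x\val]$ gives $\enc{M'}q\bsim\enc{\evctxt[x\val]}q$, which, since $\bsim$ is a congruence, lifts under the prefix $\outb p y.!\inp y{z,q}.[\cdot]$ to $\enca{\abs z M'}\bsim\enca{\abs z\evctxt[x\val]}$; and Lemma~\ref{l:sol_aux}, again lifted under the abstraction, rewrites this into $\enca T$. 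For $\encright\R$, however, the right projection collapses the nested variables, giving $T=\abs z ((\abs w w)(xz))=\abs z(I(xz))$, so I must show $\enca x\bsim\enca{\abs z(I(xz))}$. Here the $\eta$-law enters: Lemma~\ref{l:sol_aux} with the empty context ($\enca{I(xz)}\bsim\enca{xz}$) lifted under $\abs z{}$ gives $\enca{\abs z(I(xz))}\bsim\enca{\abs z(xz)}$, and the encoding validates the $\eta$-law for variables $\enca{\abs z(xz)}\bsim\enca x$, since unfolding the forwarder in $\enca x=\bind p\outb p y.\fwd y x$ reproduces the body of $\enca{\abs z(xz)}$ (a consequence of the link laws, Lemma~\ref{l:fwd}).

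The main obstacle is precisely this $\eta$-contracted side of clauses~\reff{lab:five} and~\reff{def:enfe:case:eta}: unlike all other cases it cannot be discharged by validity of $\betav$-reduction alone, because the relevant term is a variable whereas its equation template is a genuine $\eta$-expansion; one must additionally invoke the $\eta$-law validated by the forwarders and combine it with Lemma~\ref{l:sol_aux}. A secondary point needing care throughout is the free-variable bookkeeping in the compositional substitution: one must verify that the argument tuple $\til{y}_{P,Q}$ carried by each occurrence of $X_{P,Q}$ coincides with $\fv{P,Q}$, so that normalising the application redex reinstates $\enca P$ (resp.\ $\enca Q$) without spurious renaming, the Barendregt convention supplying the freshness required.
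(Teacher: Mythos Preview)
Your proposal is correct and follows essentially the same approach as the paper's proof. The paper organises the argument slightly differently: it first proves Lemma~\ref{l:sol} only for the four $\enf$ clauses, and handles the two $\eta$-clauses separately after isolating the $\eta$-law $\enca{\abs y xy}\bsim\enca x$ as its own result (Lemma~\ref{l:eta_complete_aux}); you fold everything into a single case analysis and sketch the $\eta$-law inline via forwarder unfolding. Your explicit compositionality observation (that substituting the $(P,Q)$-component of $\encleft\R$ for $X_{P,Q}$ and normalising recovers $\enca P$) and the free-variable bookkeeping remark are left implicit in the paper but are indeed what makes the calculation go through.
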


\begin{proof}
We only show the property for $\R_1$, the  case for $\R_2$ is handled
similarly.

We show that each component of  $\encOO\R$ is solution of the
corresponding equation, i.e., 
for the $j$-th component 
we show 
$\bind { \til{y_j}} \enca{M_j}\bsim \Eqsing{M_j,N_j} {\encOO {\R}}$.

We reason by cases over the shape of the 
 eager normal form of $M_j,N_j$. 

\begin{itemize}
\item If $M\diverges$, we use Lemma~\ref{l:bot}, which gives us 
$\bind {\til y}\enca M\bsim\bind {\til y,p} \zero\bsim \bind{\til y}\enca \Omega$.
\item If $M\converges \evctxt[x \val]$, we have to show that: $$\enca M\bsim 
\enca {(\abs z \evctxt[z])(x\val)}\enspace.$$
By Lemma \ref{l:beta},
$\enca M\bsim \enca {\evctxt[x\val]}$;
we then
conclude by Lemma \ref{l:sol_aux}. 
\item If $M\converges \abs x M'$ (and $N$ also reduces to an
  abstraction), then: 
\begin{align*}
 \Eqsing {M,N}{\enca{\R_1}}\param {\til y,p}& = \outb p z.!\inp z {x,q}. \enca {M'} \param { q}\\[\myptSmall]
 &\bsim \enc M {\til y,p} \text{ (by Lemma \ref{l:beta})\enspace.}
 \end{align*}
\item If $M\converges x$ (and $N\converges x$), again:
\begin{align*}
 \Eqsing {M,N}{\enca{\R_1}}\param {\til y,p}& =
\enc x {p} \\[\myptSmall]
&\bsim \enc M {\til y,p}\text{ (by Lemma \ref{l:beta})\enspace.}
\end{align*}
\end{itemize}
\end{proof}

\begin{figure*}[t]
  \centering
\begin{align*}
&M\diverges\text{ and } N\diverges:
&X_{M,N} &= \bind {\til y,p} \zero \\
%
&M\converges \evctxt[xv]\text{ and }N\converges\evctxt'[xv']:
&X_{M,N} &= \bind {\til y,p} \outb x {z,q}.
             \\ &&&\qquad
(\XV_{\val,\valp}\param{z,{\tilprime y}}
|\inp q
  w.X_{\evctxt[w],\evctxt'[w]}\param{{\tilpprime y},p}) 
\\
%
&M\converges \val\text{ and }N\converges \valp:
&X_{M,N} &= \bind{\til y,p} \outb p y.\XV_{v,v'}\param{z,\tilprime y}\\
%
&\val= x \text{ and }\valp= x:
&\XV_{x,x}&=\bind{z,x}\fwd z x \\
%
&\val=  \abs x M\text{ and } \valp= \abs x N:
&\XV_{\abs x M,\abs xN}&=\bind {z,\til y}
!\inp z {x,q}.X_{M,N}\param{\tilprime y ,q}\\
%
&\val=x\text{, }\valp= \abs z N\text{, }N\converges\evctxt[x\val]:
&       
\XV_{x,\abs z N}
&=\bind {y_0,\til y}
                   !\inp {y_0}{z,q}. \outb x {z',q'}.
             \\ &&&\qquad
                   (\XV_{z,\val}\param{z',\tilprime y }
        |\inp {q'}{w}.X_{w,\evctxt[w]}\param{\tilpprime y,q})\\
%
&\val=\abs z M\text{, }M\converges\evctxt[x\val]\text{, }\valp= x:
&       
\XV_{\abs z M,x}
&=\bind {y_0,\til y}
                   !\inp {y_0}{z,q}. \outb x {z',q'}.
             \\ &&&\qquad
                   (\XV_{\val,z}\param{z',\tilprime y }
        |\inp {q'}{w}.X_{\evctxt[w],w}\param{\tilpprime y,q})
\end{align*}

\caption{System \eqcbvp of equations 
  (the last two equations are
included only when considering $\enfe$)}
\label{f:app:opt}
\end{figure*}
\paragraph{Unique solution for \eqcbv}

We rely on Theorem~\ref{thm:usol} to prove uniqueness 
of solutions for \eqcbv. 
The only delicate requirement is the one on divergence for the syntactic
solution.  
For this, we use  Lemma~\ref{t:transf:equations}. We thus introduce
 an auxiliary system of equations,  \eqcbvp, that extends \eqcbv, and
 whose syntactic solutions have no $\tau$-transition and hence trivially
 satisfy the requirement.  
 The definition of \eqcbvp{} is presented in Figure~\ref{f:app:opt}.
 Like the original system 
\eqcbv, so the new one 
 \eqcbvp\ is defined by inspection of the pairs in  $\R$;  
in 
 \eqcbvp, however, a pair of $\R$ may sometimes yield more than one
 equation. 
Thus, let $(M,N)\in \R$ with $\til y=\fvars {M,N}$ (we also
write $\tilprime y$ or $\tilpprime y$ for the free variables of the
terms indexing the corresponding equation variable).

\begin{enumerate}
\item When $M\Uparrow$ and  $N\Uparrow$, the equation is
\begin{align*}
  X_{M,N} = \bind{\til y,p} \zero
  \enspace.
\end{align*}
\item When $M\converges \val$ and $N\converges \valp$, we introduce a
  new  equation variable $\XV_{\val,\valp}$  and a new equation; 
this will allow us, in the following step (3), to 
perform some  optimisations. The equation is
%
\begin{align*}
 X_{M,N} = \bind {\til y,p} \outb p z .  \XV_{\val,\valp}\param{z,\tilprime y}
  \enspace,
\end{align*}
and we have, accordingly, the two following additional equations
corresponding to the cases where values are functions or variables:
%
%
\[
\begin{array}{rcl}
\XV_{\abs x M',\abs x N'}&=& \bind {z,\til y}!\inp z {x,q}
  . X_{M',N'}\param{{\tilprime y},q}
\\[\mypt]
\XV_{x,x}&=&\bind {z,x} \fwd z x
\end{array} 
\]

\item When $M\converges \evctxt[x\val]$ and $N\converges\evctxt[x\valp]$, we
refer 
to $\XV_{\val,\valp}$, instead of $X_{\val,\valp}$, so  to remove all
initial reductions in the corresponding equation for 
\eqcbv.  The first action thus becomes 
  an output: 
$$
  X_{M,N} = 
\bind {\til y,p} \outb x
              {z,q}.(\XV_{\val,\valp}\param{z,{\tilprime y}}
 |\inp q
       w.X_{\evctxt[w],\evctxt'[w]}\param{{\tilpprime y},p})
$$
\end{enumerate}



Lemmas~\ref{l:div_aux} and~\ref{l:div} 
are needed to apply Lemma~\ref{t:transf:equations}.
(In Lemma~\ref{l:div_aux},  `extend' is as by Definition~\ref{d:extend}.)

\begin{restatable}{lemma}{divaux}\label{l:div_aux}The system of equations \eqcbvp extends 
the system of equations \eqcbv.
\end{restatable}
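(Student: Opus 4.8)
The plan is to take $J$ to be the set of indices of the value-body variables $\XV_{\val,\valp}$, which are exactly the variables of \eqcbvp{} that do not occur in \eqcbv, and to show that every solution of \eqcbv{} extends to a solution of \eqcbvp{} agreeing with it outside $J$; by Definition~\ref{d:extend} this is precisely what it means for \eqcbvp{} to extend \eqcbv.

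Concretely, fix a solution $\til F$ of \eqcbv{} for $\bsim$. I would define $\til{F'}$ by setting $F'_{X_{M,N}} \defi F_{X_{M,N}}$ on every variable shared with \eqcbv, and letting the new components $F'_{\XV_{\val,\valp}}$ be determined by the $\XV$-equations of \eqcbvp{} once the $X$-variables occurring in them are instantiated by $\til F$. Since every reference of a $\XV$-equation to another $\XV$-variable lies underneath a prefix (this happens only in the two $\eta$-equations), these equations form a guarded subsystem in the unknowns $\XV_{\val,\valp}$ with $\til F$ as parameters, so such components exist (e.g., its syntactic solutions). By construction $\til{F'}$ satisfies all the $\XV$-equations of \eqcbvp, and deleting the $J$-components returns $\til F$; it therefore remains only to verify the $X$-equations.

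For each $X$-equation I would show that, after substitution of $\til{F'}$, the optimised body of \eqcbvp{} is bisimilar to the original body of the corresponding \eqcbv{} equation under $\til F$; since $\til F$ solves \eqcbv{} and $F'_{X_{M,N}} = F_{X_{M,N}}$, this yields $F'_{X_{M,N}} \bsim E'_{X_{M,N}}[\til{F'}]$. The divergent case is immediate, both bodies being bisimilar to $\zero$ by Lemma~\ref{l:bot}. In the value case $M\converges\val$, $N\converges\valp$, substituting the characterisation of $F'_{\XV_{\val,\valp}}$ supplied by its now-verified equation turns the \eqcbvp{} body $\outb p z.\XV_{\val,\valp}\param{z,\tilprime y}$ back into $\enca{\val}$ up to $\bsim$, because the value-body variable was obtained precisely by peeling the leading output $\outb p z$ off the encoding of a value. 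The remaining cases reproduce, on equation bodies, the passage from the original encoding $\enca\cdot$ to the optimised one $\encba\cdot$, and are justified by Lemma~\ref{l:opt_sound} together with the link-composition and link-as-substitution laws of Lemmas~\ref{l:fwd} and~\ref{l:fwd:optim}.

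The main obstacle is the branching case $M\converges\evctxt[x\val]$, $N\converges\evctxt'[x\valp]$. Here I would unfold $\enca{(\abs z X_{\evctxt[z],\evctxt'[z]})\,(x\,X_{\val,\valp})}$ and use that the argument $x\,X_{\val,\valp}$ is a variable applied to a value, so that its optimised encoding begins with the free output $\outb x{z,q}$, and that $\encbva{X_{\val,\valp}}$ is the value-body variable $\XV_{\val,\valp}$ by the extension of the value encoding to equation variables. After this leading output, a sequence of deterministic communications, simplified by the forwarder laws (Lemmas~\ref{l:fwd} and~\ref{l:fwd:optim}), collapses the continuation into $\inp q w.\,X_{\evctxt[w],\evctxt'[w]}\param{\tilpprime y,p}$, where feeding the received name $w$ for the fresh $z$ rewrites $X_{\evctxt[z],\evctxt'[z]}$ into $X_{\evctxt[w],\evctxt'[w]}$ (using that the encoding commutes with the renaming of the fresh variable). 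This matches the body of the corresponding \eqcbvp{} equation exactly, and the two $\eta$-equations are then handled by the same computation carried out underneath the additional input prefix $!\inp{y_0}{z,q}$.
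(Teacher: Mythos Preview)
Your proposal is correct and, in fact, considerably more careful than the paper's own argument. The paper's proof consists of two sentences: it identifies $J$ as the set of indices of the $\XV$-variables and then asserts that ``a solution for the extended system yields a solution for the original system by looking only at equation variables which are not of the form $\XV_{V,V'}$.'' Read literally, that is the projection direction (from $\eqcbvp$-solutions to $\eqcbv$-solutions), whereas Definition~\ref{d:extend} requires the lifting direction: every solution of $\eqcbv$ must arise from some solution of $\eqcbvp$. You correctly work in the lifting direction, defining the missing $\XV$-components from the $\XV$-equations (which do form a guarded subsystem once the $X$-variables are instantiated), and then verifying the $X$-equations case by case.

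One remark on the case analysis: the lemmas you invoke (Lemmas~\ref{l:opt_sound}, \ref{l:fwd}, \ref{l:fwd:optim}) are stated for encodings of genuine $\lambda$-terms, whereas here you need them for equation bodies in which a hole has been filled by an arbitrary abstraction $F_{X_{M,N}}$. The computations still go through, because the relevant reductions are deterministic communications on the scaffolding names $q,r,y,w$ of the application encoding and do not depend on the internal structure of the plugged components beyond their sort; but strictly speaking this is a mild generalisation of those lemmas rather than a direct application. The paper elides this point entirely, so your treatment is already more honest on this front.
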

\begin{proof}
  The new system \eqcbvp\ is obtained from \eqcbv\ by modifying the
  equations and adding new ones. More precisely, whenever $M$ and $N$
  are values, an additional equation is introduced, using a variable
  written $\XV$.
  A solution for the extended system yields a solution
  for the original system by looking only at equation variables which
  are not of the form $\XV_{V,V'}$.
\end{proof}

\begin{restatable}{lemma}{div}\label{l:div}
 \eqcbvp has a unique solution.
\end{restatable}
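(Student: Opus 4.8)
The plan is to obtain the result directly from Theorem~\ref{thm:usol}. Since \eqcbvp{} is a system of equations in \Intp, it suffices to check two conditions: that \eqcbvp{} is guarded, and that its syntactic solutions are divergence-free. Once both hold, Theorem~\ref{thm:usol} immediately gives that \eqcbvp{} has a unique solution for \bsim. So the whole proof reduces to verifying these two hypotheses on the equations of Figure~\ref{f:app:opt}.

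Guardedness is checked by inspection. In each equation, every occurrence of an equation variable ($X_{\cdot,\cdot}$ or $\XV_{\cdot,\cdot}$) lies underneath a prefix: either an output $\outb x{z,q}$ or $\outb p y$, or a replicated input $!\inp z{x,q}$ or $!\inp{y_0}{z,q}$. The only bodies containing no variable at all are $\bind{\til y,p}\zero$ and $\XV_{x,x}=\bind{z,x}\fwd z x$ (a pure link), which are trivially guarded. Hence \eqcbvp{} is guarded.

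The substantial part is divergence-freedom, and it is precisely the reason \eqcbvp{} was introduced: the optimised equations have already contracted all the deterministic (administrative) communications, so their syntactic solutions should perform \emph{no} internal move whatsoever, which a fortiori rules out any infinite $\tau$-sequence. To make this precise I would introduce the class $\mathcal{D}$ of processes obtained from the instantiated syntactic solutions $K\param{\til a}$ of \eqcbvp{} (with fresh parameters) by closing under the transition relation, and prove the invariant that no process in $\mathcal{D}$ has an enabled $\tau$-transition. The structural facts supporting the invariant are: every $X$-component is either $\zero$ or guarded by an \emph{output} (on a value name $x$, or on a continuation name), while every $\XV$-component is guarded by an \emph{input} on a value name; and the private continuation name $q$ created by the output $\outb x{z,q}$ in case~2 occurs thereafter only as the subject of the single waiting input $\inp q w$, never as the subject of an output inside the process. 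Thus no reachable configuration ever exhibits a matching output/input pair on a shared channel, so no communication redex is enabled.

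The main obstacle is the verification of this invariant: one must show that it is genuinely preserved under transitions, tracking how continuation names (used linearly, with at most one input and one output occurrence, never simultaneously at top level) and value names (occurring only in output within the process, their inputs coming solely from replicated servers triggered by the environment) are created and consumed as the syntactic solutions unfold. Granting the invariant, the syntactic solutions of \eqcbvp{} have no $\tau$-transition at all, hence are divergence-free, and the conclusion follows from Theorem~\ref{thm:usol}.
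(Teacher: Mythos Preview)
Your proposal is correct and follows essentially the same approach as the paper: check guardedness by inspection, argue that the syntactic solutions of \eqcbvp\ can never perform a $\tau$-transition (hence are divergence-free), and conclude via Theorem~\ref{thm:usol}.

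The paper's formulation of the divergence-freedom argument is somewhat more economical than your invariant-tracking scheme: it simply observes that in the bodies of \eqcbvp\ every name, free or bound, appears in subject position either only in inputs or only in outputs (continuation names are linear; value names, when in subject position, are used exclusively one way). Because the LTS is ground, input transitions introduce only fresh names, so this polarity discipline is trivially preserved by transitions and no communication redex can ever be enabled. This is exactly the invariant you are aiming for, but stated as a single syntactic check rather than as a closure property of a process class~$\mathcal{D}$; it spares you the case analysis you anticipate as ``the main obstacle''.
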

\begin{proof}
Divergence-freedom for the
 syntactic solutions
  of \eqcbvp\ 
holds because in
the equations each name (bound or free) can appear either only in inputs
or only in outputs. 
Indeed, 
  in the syntactic solutions 
  of \eqcbvp, linear names ($p,q,\dots$) are used exactly once in
  subject position, and non-linear names ($x,y,w,\dots$), when used in
  subject position, are either used exclusively in input or
  exclusively in output.

  As a consequence, since the labelled transition system is ground,
  no $\tau$-transition can ever be
performed, after any number of visible actions. 
Further,  \eqcbvp is  guarded. Hence   we can apply Theorem~\ref{thm:usol}.

\end{proof}

Hence, by Lemma~\ref{t:transf:equations}, \eqcbv\ has a unique solution.

%
We can observe that \eqcbv{} has equations of the form $X=\enca\Omega$
associated to a diverging \lterm. Such equations give rise to
\emph{innocuous divergences}, using the terminology of~\cite{usol}.  A
refined version of Theorem~\ref{thm:usol} is stated 
in~\cite{usol}, in order to handle such divergences; this would make
it possible to avoid using Lemma~\ref{t:transf:equations}, at the
cost of a more intricate setting.
Using Lemma~\ref{t:transf:equations} allows us to keep the framework
simpler.

A more direct proof of Lemma~\ref{l:div} would have been possible, by
reasoning coinductively over the \enfbsim defining
the system of equations. 

\begin{lemma}[Completeness for $\enf$]\label{l:complete_enf}
  $M\enf N$ implies $\enca M \bsim \enca N$,
for any $M,N \in\Lao$. 
\end{lemma}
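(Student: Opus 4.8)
The plan is to assemble the machinery developed in this subsection rather than to reason directly on the encodings. First I would unfold the hypothesis: since $M \enf N$, there is an eager normal-form bisimulation $\R$ with $M \RR N$. I fix this $\R$ together with the associated system of equations $\eqcbv$ of Figure~\ref{f:app:eq}, and I let $(M,N)$ be, say, the $j$-th pair of $\R$, with $\til y = \fv{M,N}$.

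The two ingredients are already in place. By Lemma~\ref{l:sol}, both $\encleft{\R}$ and $\encright{\R}$ are solutions of $\eqcbv$ for $\bsim$. By Lemma~\ref{l:div} together with Lemma~\ref{t:transf:equations} (which transports uniqueness from the extended system $\eqcbvp$ back to $\eqcbv$), the system $\eqcbv$ has a unique solution. Hence, by the very definition of uniqueness (Definition~\ref{d:un_sol}), the two solutions are bisimilar componentwise, i.e.\ $\encleft{\R} \bsim \encright{\R}$.

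It then remains to read off the component corresponding to the pair $(M,N)$, namely $\bind{\til y}\enca M \bsim \bind{\til y}\enca N$, and to deduce $\enca M \bsim \enca N$. This last passage is a routine check: bisimilarity of abstractions is tested by instantiation with fresh parameters (Remark~\ref{r:abs}), so $\bind{\til y}\enca M \bsim \bind{\til y}\enca N$ gives $\enc M {\til y, p} \bsim \enc N {\til y, p}$ for fresh $\til y, p$; since $\til y$ already lists the free variables of $M$ and $N$, I would instantiate the $\til y$-parameters with those very names (using that $\bsim$ is preserved under injective renaming of free names) to obtain $\enc M p \bsim \enc N p$ for fresh $p$, that is, $\enca M \bsim \enca N$.

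I expect the proof itself to carry essentially no obstacle of its own: all the difficulty has been discharged in advance — the construction of $\eqcbv$ so that the encodings of \emph{both} projections of $\R$ are solutions (Lemma~\ref{l:sol}), and the verification that the syntactic solutions of the optimised system $\eqcbvp$ are divergence-free, which is what powers the unique-solution theorem (Theorem~\ref{thm:usol}). The only point I would be careful about is the bookkeeping of the free variables $\til y$ when passing from the uncurried abstraction $\bind{\til y, p}\enc{\cdot}p$ back to $\enca{\cdot}$, ensuring no accidental name clash arises in the final instantiation.
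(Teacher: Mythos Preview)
Your proposal is correct and follows essentially the same route as the paper: fix an eager normal-form bisimulation $\R$, use Lemma~\ref{l:sol} to see that $\encleft\R$ and $\encright\R$ are solutions of $\eqcbv$, use Lemmas~\ref{l:div_aux} and~\ref{l:div} with Lemma~\ref{t:transf:equations} to get uniqueness of solutions for $\eqcbv$, and read off the $(M,N)$-component. The only cosmetic gap is that you should cite Lemma~\ref{l:div_aux} explicitly (it is what certifies that $\eqcbvp$ extends $\eqcbv$, the hypothesis needed to invoke Lemma~\ref{t:transf:equations}); otherwise your argument, including the extra care about the final instantiation step, matches the paper's.
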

 
\begin{proof}
Consider 
  an eager normal-form bisimulation $\R$, and 
the corresponding  systems of equations \eqcbv\ and  \eqcbvp.
  Lemmas~\ref{l:div} and~\ref{l:div_aux} allow us to apply
  Lemma~\ref{t:transf:equations} and deduce that \eqcbv\ has a unique
  solution.
By Lemma \ref{l:sol},
  $\encOO \R$ and $\encTO\R$ are solutions of \eqcbv. 
Thus, from $M\RR N$,
 we deduce
$\bind {\til y} \enca M  \bsim \bind{\til y} \enca N$, 
  where $\til y=\fvars {M,N}$. 
Hence also 
$ \enca M  \bsim \enca N$.



\end{proof}

\paragraph{Completeness for $\enfe$} 
The proof  for $\enf$ %
is extended to $\enfe$, maintaining  its
structure. We highlight the main additional reasoning steps. 




We enrich \eqcbv\ with the equations corresponding to the two
additional clauses of $\enfe$ (Definition~\ref{enfebsim}).
When $M\converges x$ and $N\converges \abs z N'$, where $N'\enfe xz$, 
we proceed as in case~\ref{item:decomp:eqcbv} of the definition of \eqcbv,
 given that  
$N\enfe \abs z \left( (\abs w\evctxt [w]) (x\val)\right)$. The
equation is:
\begin{align*}
X_{M,N} =\bind {\til y}
  \enca{\abs z \left((\abs w X_{w,\evctxt[w]}) ~ (x~X_{z,\val}
  )\right)}
  \enspace.
\end{align*}
%
We proceed likewise in the symmetric case.


In the optimised equations, 
we add the following equation (relating
values),  as well as its symmetric counterpart:
%
$$\XV_{x,\abs z N'}
=\bind {y_0,\til y}
    !\inp {y_0}{z,q}. \outb x {z',q'}. (\XV_{z,\val}\param{z',\tilprime y }
         |\inp {q'}{w}.X_{w,\evctxt[w]}\param{\tilpprime y,q})
         \enspace.$$

We follow the approach of Lemmas~\ref{l:div_aux} and~\ref{l:div} to
show  unique solution for \eqcbv.

The following lemma validates $\eta$-expansion for the encoding, and
is useful below. It is proved using algebraic reasoning and the
definition of links.
\begin{lemma}\label{l:eta_complete_aux}
 $\enca {\abs y {xy}} \bsim\enca x$. 
\end{lemma}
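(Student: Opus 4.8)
The plan is to prove the equivalent statement $\enc {\abs y{xy}}p\bsim\enc xp$ for a fresh continuation name $p$, which suffices by Remark~\ref{r:abs}. First I would unfold the leading output of each side using the encoding of Figure~\ref{f:enc_internal}:
$$\enc xp=\outb pz.\fwd zx\qquad\text{and}\qquad\enc{\abs y{xy}}p=\outb pz.\,!\inp z{y,q}.\enc{xy}q.$$
Since $\bsim$ is preserved by the common output prefix $\outb pz$, it is enough to establish that the two continuations agree, that is, $\fwd zx\bsim\,!\inp z{y,q}.\enc{xy}q$.

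Next I would expand the link $\fwd zx$ using its defining equation for value names (both $z$ and $x$ are value names, since $z$ is emitted on the continuation $p$ and thus carries value-name sort). Instantiating that definition gives
$$\fwd zx=\,!\inp z{y,q}.\outb x{z',q'}.(\fwd{q'}q\mid\fwd{z'}y),$$
which is again a replicated input on $z$ receiving a value name $y$ together with a continuation $q$. The crucial observation is that the body $\outb x{z',q'}.(\fwd{q'}q\mid\fwd{z'}y)$ is, up to $\alpha$-conversion and reordering of the parallel components, exactly the optimised encoding $\encb{xy}q$ produced by rule \nameDS{var-val} (reading $xy$ as $x\val$ with $\val=y$) together with rule \nameDS{opt-var}, namely $\encb{xy}q=\outb x{z',q'}.(\fwd{z'}y\mid\fwd{q'}q)$. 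Hence $\fwd zx=\,!\inp z{y,q}.\encb{xy}q$ holds syntactically.

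It then remains to replace the optimised encoding by the original one underneath the replicated input. By Lemma~\ref{l:opt_sound} we have $\enc{xy}q\exn\encb{xy}q$, and since $\exn\subseteq\bsim$ (because $\lexn\subseteq\bsim$ and $\bsim$ is symmetric) this yields $\encb{xy}q\bsim\enc{xy}q$. Using that $\bsim$ is preserved by the replicated-input context $!\inp z{y,q}.[\cdot]$ and then by the output prefix $\outb pz$, I obtain $\enc xp\bsim\enc{\abs y{xy}}p$, which is the claim. I expect the only delicate point to be the bookkeeping: verifying that the sorts line up so that $\fwd zx$ is indeed the value-name link, and that the value and continuation components emitted by the link match, respectively, the value name $y$ and the continuation $q$ emitted by $\encb{xy}q$. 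Once the object names are aligned correctly the two processes coincide on the nose, so the algebraic part of the argument is immediate.
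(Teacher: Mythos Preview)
Your proposal is correct and follows essentially the same approach the paper sketches (``algebraic reasoning and the definition of links''): you unfold both encodings to a common output prefix $\outb pz$, expand the value-name link $\fwd zx$, and recognise its body as $\encb{xy}q$ via rules \nameDS{var-val} and \nameDS{opt-var}. The one extra ingredient you bring in, Lemma~\ref{l:opt_sound}, is a legitimate and efficient shortcut that spares you the longer direct simplification of $\enc{xy}q$; the congruence steps you invoke (under replicated input and bound output) are justified in \Intp\ by Theorem~\ref{t:bisbc}.
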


Finally, 
we 
prove that $\encOO \R$ and $\encTO \R$ are 
 solutions of \eqcbv. 
 Two additional cases are to be considered:
\begin{itemize}
\item If $M\converges x$ and $N\converges \abs y N'$, then:
\begin{align*}
\Eqsing {M,N}{\enca{\R_1}}\param {\til y,p} &=\outb p z. !\inp z {y,q}.\enc {(\abs w w)(xy)} q\\[\myptSmall]
&\bsim \outb pz .!\inp z {y,q}.\enc {xy}q\text{ (by Lemma \ref{l:sol_aux})}\\[\myptSmall]
&=\enc {\abs y xy}q\\[\myptSmall]
&\bsim\enc x p\text{ (by Lemma \ref{l:eta_complete_aux})}\\[\myptSmall]
&\bsim \enc {M}{\til y,p} \text{ (by Lemma \ref{l:beta})\enspace.}
\end{align*}
\item If $M\converges \abs y M'$, $M'\converges \evctxt[x \val]$ and $N\converges x$, then:
\begin{align*}
\Eqsing {M,N}{\enca{\R_1}}\param {\til y,p} &= \outb p z.!\inp z {y,q}.
\enc {(\abs w \evctxt[w])(x\val)} q\\[\myptSmall]
&\bsim \outb pz .!\inp z {y,q}.\enc {\evctxt[x\val]} q\text{ (by Lemma \ref{l:sol_aux})}\\[\myptSmall]
&= \enc {\abs y {\evctxt[x\val]}} q\\[\myptSmall]
&\bsim \enc M{\til y,p} \text{ (by Lemma \ref{l:beta})\enspace.}
\end{align*}
\end{itemize}

Given the previous results, we can reason as for the  proof of
Lemma~\ref{l:complete_enf} to establish
completeness.

\begin{proposition}[Completeness for $\enfe$] \label{l:complete}
  For any $M,N$ in $\Lao$, 
  $M\enfe N$ implies $\enca M \bsim \enca N$.
\end{proposition}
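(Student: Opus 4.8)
The plan is to follow exactly the structure of the proof of Lemma~\ref{l:complete_enf} (completeness for $\enf$), now working with the versions of the systems $\eqcbv$ and $\eqcbvp$ that include the two extra equations for the $\eta$-clauses of Definition~\ref{enfebsim}. First I would fix an $\eta$-eager normal-form bisimulation $\R$ with $M\RR N$, and build the enriched system $\eqcbv$ by inspecting each pair of $\R$: the four ordinary clauses yield the equations already described, while a pair falling under clause~\ref{lab:five} or~\ref{def:enfe:case:eta} yields the extra equation $X_{M,N}=\bind{\til y}\enca{\abs z((\abs w X_{w,\evctxt[w]})(x\,X_{z,\val}))}$ and its symmetric counterpart. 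The point that justifies this equation is that when $N\converges\abs z N'$ with $N'\converges\evctxt[x\val]$ we have $N\enfe\abs z((\abs w\evctxt[w])(x\val))$, so the very decomposition already used in case~\ref{item:decomp:eqcbv} of the definition of $\eqcbv$ applies unchanged.

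Next I would establish that $\encleft\R$ and $\encright\R$ are solutions of this enriched $\eqcbv$, extending the case analysis of Lemma~\ref{l:sol} with the two new shapes of normal form. This is where the $\eta$-specific reasoning enters. For a pair with $M\converges x$ and $N\converges\abs y N'$, the body of the equation unfolds to $\outb p z.!\inp z{y,q}.\enc{(\abs w w)(xy)}q$; I would rewrite this first via Lemma~\ref{l:sol_aux} to strip the administrative redex, obtaining $\enc{xy}q$, and then via the $\eta$-law of Lemma~\ref{l:eta_complete_aux}, namely $\enca{\abs y xy}\bsim\enca x$, so as to recognise it as $\enc x p$, which by validity of $\betav$-reduction (Lemma~\ref{l:beta}) is $\bsim\enc M{\til y,p}$. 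The symmetric case is handled identically; these are precisely the two computations displayed just before the statement.

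For uniqueness of solutions I would reuse the $\eqcbvp$ machinery: add to $\eqcbvp$ the optimised value equation $\XV_{x,\abs z N'}=\dots$ and its symmetric partner, check that the enriched $\eqcbvp$ still extends $\eqcbv$ in the sense of Definition~\ref{d:extend} (a solution of $\eqcbv$ is recovered by discarding the $\XV$-variables, as in Lemma~\ref{l:div_aux}), and verify that the syntactic solutions of $\eqcbvp$ remain divergence-free. The obstacle I expect is exactly this last check: I must confirm that the two new $\eta$-equations preserve the invariant that every name occurs either only in input or only in output subject position, so that, the transition system being ground, no $\tau$-action can ever fire; this gives divergence-freedom, and the system remains guarded. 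Granting this, Theorem~\ref{thm:usol} yields a unique solution for $\eqcbvp$, Lemma~\ref{t:transf:equations} transfers uniqueness to $\eqcbv$, and since $\encleft\R$ and $\encright\R$ both solve $\eqcbv$ we conclude $\bind{\til y}\enca M\bsim\bind{\til y}\enca N$ with $\til y=\fvars{M,N}$, hence $\enca M\bsim\enca N$.
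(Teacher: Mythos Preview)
Your proposal is correct and follows essentially the same approach as the paper: enrich $\eqcbv$ and $\eqcbvp$ with the two $\eta$-equations, use Lemmas~\ref{l:sol_aux}, \ref{l:eta_complete_aux}, and \ref{l:beta} to show that $\encleft\R$ and $\encright\R$ solve the enriched $\eqcbv$, check that the added $\XV$-equations preserve the input/output name-discipline that gives divergence-freedom of $\eqcbvp$, and conclude via Theorem~\ref{thm:usol} and Lemma~\ref{t:transf:equations}. One small remark: the two additional cases are not quite ``identical'' --- for $\encleft\R$, the case $M\converges x$ needs Lemma~\ref{l:eta_complete_aux} while the case $M\converges\abs y M'$ with $M'\converges\evctxt[x\val]$ does not (it reduces directly to $\enc{\abs y\evctxt[x\val]}q$ via Lemma~\ref{l:sol_aux} and then to $\enc M{\til y,p}$ via Lemma~\ref{l:beta}); the roles swap when checking $\encright\R$.
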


Combining Propositions~\ref{l:sound} and~\ref{l:complete}, and
Theorem~\ref{t:bisbc}, we deduce full abstraction for  
$\enfe$ with respect to barbed congruence. 

\begin{theorem}[Full Abstraction for $\enfe$]\label{thm:fa}
For any $M,N$ in $\Lao$, we have $M\enfe N$ iff $\enca M\wbc\IntpSmall \enca N$
\end{theorem}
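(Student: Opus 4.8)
The plan is to assemble the theorem from three results already established: soundness (Proposition~\ref{l:sound}), completeness (Proposition~\ref{l:complete}), and the coincidence of barbed congruence with bisimilarity in \Intp{} on image-finite agents (Theorem~\ref{t:bisbc}). Since the statement is a biconditional, I would prove the two implications separately, using $\bsim$ as the bridge between the behavioural equivalence on $\lambda$-terms, $\enfe$, and barbed congruence on their encodings, $\wbc\IntpSmall$.

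For the forward direction, assume $M\enfe N$. Completeness (Proposition~\ref{l:complete}) yields $\enca M\bsim\enca N$. To pass from $\bsim$ to $\wbc\IntpSmall$ I would invoke Theorem~\ref{t:bisbc}, which requires the two agents to be image-finite up to $\bsim$. This hypothesis is exactly the property recorded just after Theorem~\ref{t:bisbc}, namely that the encoding of the $\lambda$-calculus into \Intp{} produces agents that are image-finite up to $\bsim$. Hence the theorem applies and gives $\enca M\wbc\IntpSmall\enca N$.

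For the converse, assume $\enca M\wbc\IntpSmall\enca N$. Applying Theorem~\ref{t:bisbc} in the other direction, again using image-finiteness of the encodings, yields $\enca M\bsim\enca N$, and soundness (Proposition~\ref{l:sound}) then gives $M\enfe N$. This completes the chain $\enfe \leftrightarrow \bsim \leftrightarrow \wbc\IntpSmall$, and hence the biconditional.

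The argument is essentially a bookkeeping step: the substantive work lies entirely in the two propositions and in Theorem~\ref{t:bisbc}. The one point that deserves care, and the only possible obstacle, is that $\enca M$ and $\enca N$ are abstractions rather than processes, each being parametric on a continuation name. Thus both $\bsim$ and $\wbc\IntpSmall$ are to be read on abstractions, by instantiating the parameter with a fresh name as prescribed by Remark~\ref{r:abs}. I would therefore make sure that Theorem~\ref{t:bisbc} and the image-finiteness hypothesis are applied at the level of abstractions in the same uniform way, so that the three relations are compared consistently; this compatibility is guaranteed by the convention of supplying fresh parameters throughout.
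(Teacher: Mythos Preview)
Your proposal is correct and matches the paper's own argument, which simply states that the theorem follows by combining Propositions~\ref{l:sound} and~\ref{l:complete} with Theorem~\ref{t:bisbc}. Your additional remark about reading the relations on abstractions via Remark~\ref{r:abs} is the right care to take and is implicit in the paper.
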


\begin{remark}[Unique solutions versus up-to techniques]
\label{r:upto}
For Milner's encoding of call-by-name $\lambda$-calculus, 
the completeness part of the full abstraction result with respect to
Levy-Longo Trees~\cite{cbn} relies on
\emph{up-to techniques for bisimilarity}. 
Precisely, given a relation $\R$ on $\lambda$-terms that represents a
tree bisimulation,  one shows that the $\pi$-calculus encoding of 
$\R$  is a $\pi$-calculus bisimulation \emph{up-to context and
  expansion}. 

In the up-to technique, expansion is used
to manipulate the derivatives of two transitions so to bring up a
common context.   
Such up-to technique is not powerful enough for the 
 call-by-value encoding and the Eager
Trees
 because 
some of the required transformations would violate expansion (i.e., they
would require to replace a term by  a `less efficient' one).
An example of this is the law
proved in Lemma~\ref{l:sol},
that would have to be
applied from right to left so to 
implement the branching in clause
\reff{ie:split} of Definition~\ref{enfbsim}
(as a context with two holes). 

The use of the technique of  unique solution of equations allows us to
overcome the problem:  the law in Lemma~\ref{l:sol} 
and similar laws that
introduce 'inefficiencies' can be used (and they are indeed used, 
 in various places),  
as long as they 
 do not produce  new divergences.

\end{remark}



\section{Encoding into \alpi}
\label{s:localpi}
Full abstraction
with respect to  $\eta$-Eager-Tree equality also holds
 for Milner's simplest encoding, namely $\qencm$
 (Section~\ref{s:enc:cbv}),
provided that 
  the target
language of the encoding is taken to be  \alpi{} (see Section~\ref{s:subcalculi}). 
The adoption of  \alpi\ implicitly allows us to control capabilities,
avoiding violations of laws such as~\reff{eq:nonlaw} in 
the Introduction. 
In \alpi, bound output prefixes such as  $\outb a x .\inp x y$ are abbreviations for 
$\new x(\out a x| \inp x y )$.

\subsection{The Local \pc}
\label{alpiopsem}

We present here the results which make it possible for us to apply the
unique-solution technique to \alpi. 
%
The main idea is to exploit a characterisation of barbed
 congruence as ground bisimilarity. However, to obtain
 this, ground bisimilarity has to be set 
on top of a non-standard transition system, specialised to 
 \alpi~\cite{localpi}. The Labelled Transition System (LTS) is produced by the rules in Figure~\ref{f:lts:alpi};
 these modify ordinary transitions (the $\arr\mu$ relation)
by  adding \emph{static links}  $\alpilink a b$, which are
abbreviations  defined thus: 
\begin{mathpar}
  \alpilink a b \deff !\inp a{\til x}.\out b{\til x}
  \enspace.
\end{mathpar}
(We call them \emph{static} links, following the terminology in~\cite{localpi}, so to distinguish them from the 
links $\fwd ab$ used in \Intp, whose definition makes use of recursive
process definitions~--- static links only need replication.)

\begin{figure}
  \begin{mathpar}
    \inferrule{P\arr{\res\tild\out a\tilb}P'
      \and \til c\cap( \fnames P\cup\til d)=\emptyset
    }
    {P\alpiar{\res\tilc\out 
        a{\til c}}\res\tild(\alpilink \tilc \tilb | P')}
    \and
    \inferrule{P\arr{\inp a \tilb}P'}{P\alpiar{\inp a\tilb}P'}
    \and
    \inferrule{P\arr\tau P'}{P\alpiar\tau P'}
      \end{mathpar}
  \centering
\caption{The modified labelled transition system for \alpi}
\label{f:lts:alpi}
\end{figure}

Notations for the  ordinary LTS ($\arr\mu$) are transported onto the
new LTS ($\alpiar\mu$), yielding, e.g.,  transitions
$\alpiAr{\mu}$ and $\alpiAr{\hat\mu}$.

We write $\abt$  for (ground) bisimilarity on the new LTS, defined as 
 $\approx$ in Definition~\ref{d:bisimulation}, but using the new LTS in
place of the ordinary one.
Barbed congruence in \alpi, $\wbc\alpiSmall$, is defined as by
 Definition~\ref{d:bc} (on $\tau$-transitions,  which are the only
 transitions needed to define $\wbc\alpiSmall$, the new LTS and
 the original one coincide).

  We  present the definition of asynchronous (ground)
 bisimilarity, which is used in~\cite{localpi} to derive a
 characterisation of barbed congruence; asynchrony is needed because
 the calculus is asynchronous, and barbed congruence observes only
 output actions.

 \begin{definition}[Asynchronous bisimilarity]
  Asynchronous  bisimilarity, written $\abta$,   is the
  largest  symmetric relation $\R$ 
such that
  $P\R Q$ implies
  \begin{itemize}
  \item if $P\alpiar\mu P'$ and $\mu$ is not an input, 
    then there is $Q'$ s.t.\ $Q\alpiAr{\hat\mu}Q'$ and $P'\R Q'$,
    and
  \item if $P\alpiar{\inp a\tilb}P'$, 
    then either $Q\alpiAr{\inp a\tilb}Q'$ and $P'\protect{\R} Q'$ for some $Q'$, or
    $Q\Arr{}Q'$ and $P'\protect{\R} (Q' | \out a\tilb)$ for some $Q'$.
  \end{itemize}
\end{definition}

\begin{theorem}[\cite{localpi}]\label{t:ab:bc}
  On \alpi{} processes that are image-finite up to \bsim, relations 
 $\abta$ and $ \wbc\alpiSmall$ coincide.
\end{theorem}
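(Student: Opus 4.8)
The plan is to establish the two inclusions $\abta \subseteq \wbc\alpiSmall$ and $\wbc\alpiSmall \subseteq \abta$ separately, following the standard pattern by which a coinductively-defined equivalence is shown to coincide with a contextual one; this is the argument of~\cite{localpi}, which I would organise as follows. For the inclusion $\abta \subseteq \wbc\alpiSmall$ (soundness of asynchronous bisimilarity as a proof method for barbed congruence) the two ingredients are that $\abta$ is a congruence for all \alpi-contexts, and that $\abta$ is itself a barbed bisimulation. The second point is immediate from the definition: $\tau$-moves are matched by the first clause, since $\tau$ is not an input, so reduction-closure holds; and a barb $P\Dwa_a$ amounts to $P$ weakly performing an output with subject $a$, which $Q$ must match because outputs, being non-inputs, are matched exactly by $\abta$. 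Hence $P\Dwa_a$ iff $Q\Dwa_a$, giving ${\abta}\subseteq{\wbb}$. The real work is congruence: I would prove closure under parallel composition, restriction, input and output prefixing, and replication, by exhibiting for each operator a candidate relation and checking it is an asynchronous bisimulation, using bisimulation up to restriction and parallel to keep the candidates tractable. Two features of the \alpi{} setting require care: the asynchronous input clause (when $P\mid R$ performs an input consumed inside $P$, the match from $Q\mid R$ may have to use the branch that absorbs an extra message $\out a\tilb$), and the static links $\alpilink \tilc \tilb$ that the modified transition system of Figure~\ref{f:lts:alpi} attaches to every output, which must be shown to be well-behaved \alpi{} processes preserved uniformly by the candidate relations. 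Congruence together with ${\abta}\subseteq{\wbb}$ then yields, for every context $C$, that $C[P]\abta C[Q]$ and hence $C[P]\wbb C[Q]$, i.e.\ $P\wbc\alpiSmall Q$.

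For the converse inclusion $\wbc\alpiSmall \subseteq \abta$, I would show that barbed congruence, restricted to processes image-finite up to~\bsim, is an asynchronous bisimulation. The method is to build, for each kind of action $\mu$, a discriminating \alpi-context that turns the capability of performing $\mu$ (and of reaching a given derivative) into a fresh observable barb. To test an input $P\alpiar{\inp a\tilb}P'$ one supplies a message $\out a\tilb$ together with a fresh success signal; to test an output $P\alpiar{\res\tilc\out a\tilc}P'$ one installs a receiver on $a$ that fires a fresh barb, taking into account that in \alpi{} only the output capability of received names travels, so the testing context may use the extruded names $\tilc$ only in output position. This locality constraint is precisely what makes the characterisation hold for \alpi{} while it fails in the full \pc. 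The matching of derivatives is then secured by a stratification argument that crucially uses image-finiteness: one defines the finite approximants of $\abta$, shows that $\wbc\alpiSmall$ is contained in each of them by induction using the contexts above, and concludes $\wbc\alpiSmall\subseteq\abta$, the last step holding because for processes image-finite up to~\bsim{} the intersection of the approximants coincides with $\abta$.

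I expect the principal obstacle to be the congruence proof in the soundness direction, specifically the interaction between the asynchronous input clause and the static links generated by the modified transition system: the candidate relations must be closed under the dynamic insertion of links $\alpilink \tilc \tilb$, and the parallel-composition case requires exactly the right up-to technique so that the absorbed-message branch of asynchronous bisimilarity is correctly simulated. On the completeness side the delicate point is designing the observing contexts so that they respect \alpi's locality discipline—no input on received names—while remaining strong enough to detect every transition; once those contexts are in place, the image-finiteness hypothesis makes the limiting step routine.
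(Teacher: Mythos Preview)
The paper does not prove this theorem: it is stated with a citation to~\cite{localpi} and taken as a known result from the literature, so there is no ``paper's own proof'' to compare against. Your sketch is a faithful outline of the standard soundness/completeness argument one would expect to find in~\cite{localpi}: congruence plus barb-preservation for the forward inclusion, discriminating contexts plus a stratification exploiting image-finiteness for the converse, with the \alpi-specific twist that the modified LTS inserts static links at every output and that testing contexts must respect the locality discipline. Nothing in your plan is wrong; it simply goes beyond what the present paper does, which is to invoke the result off the shelf.
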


To apply our technique of unique solutions of equations it
is however convenient to use  \emph{synchronous} bisimilarity. 
The following result  allows us to do so:

\begin{theorem}
\label{t:abt}
  On \alpi{} processes that are image-finite up to \bsim{} and have no
  input on a free name, 
relations 
 $\abt$ and $\abta$ 
coincide.
\end{theorem}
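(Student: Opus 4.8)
The plan is to prove the two inclusions $\abt \subseteq \abta$ and $\abta \subseteq \abt$ separately; the first is immediate on all of \alpi, and the second is where the two hypotheses are used.

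For $\abt \subseteq \abta$ I would simply check that any synchronous bisimulation on the modified LTS is also an asynchronous bisimulation. The clauses for $\tau$ and for output actions are literally the same in Definition~\ref{d:bisimulation} (read on $\alpiar{}$) and in the definition of $\abta$; for an input action, the requirement of $\abt$—that it be answered by $Q \alpiAr{\inp a\tilb} Q'$ with related derivatives—is exactly the first of the two disjuncts offered by the input clause of $\abta$. Hence $\abt$ is itself an asynchronous bisimulation, so $\abt \subseteq \abta$.

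For the converse I would show that $\abta$, restricted to agents reachable (in the modified LTS) from the stated class, is a synchronous bisimulation. Again the $\tau$- and output-clauses coincide, so the only thing to establish is the \emph{strong} input clause: if $P \abta Q$ and $P \alpiar{\inp a\tilb} P'$, then $Q$ answers with a genuine input $Q \alpiAr{\inp a\tilb} Q'$ with $P'\abta Q'$. The key observation is that, because the source processes have no input on a free name, every free name on which a reachable derivative can perform an input is necessarily a \emph{link name}, i.e. a fresh name emitted by an earlier output and equipped with a static link $\alpilink a b$. One proves this by induction on the transition sequence, noting that the modified LTS of Figure~\ref{f:lts:alpi} introduces free inputs only through the static links created by its output rule, and that a static link is a persistent replicated input that merely forwards. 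To keep track of this I would take as candidate relation the set $\S$ of $\abta$-pairs carrying matching families of static links, so that whenever $P$ offers an input on a link name $a$, its partner $Q$—having matched the same earlier output—offers the corresponding link, hence the same input, on the same $a$ (the link persists under the $\tau$-steps of $Q \Arr{} Q'$).

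Now fix such a pair and an input $P \alpiar{\inp a\tilb} P'$ on a link name $a$. The input clause of $\abta$ gives two cases. In the first, $Q \alpiAr{\inp a\tilb} Q'$ with $P' \abta Q'$, which is already the synchronous answer. In the second, $Q \Arr{} Q'$ and $P' \abta Q' \mid \out a\tilb$; here I would convert the buffered message into a consumed input using the forwarding behaviour of the static link present in $Q'$: since $\alpilink a b \mid \out a\tilb$ can only let the link absorb the message and re-emit on $b$, we have $Q' \mid \out a\tilb \abt Q''$ (indeed related by expansion), where $Q \alpiAr{\inp a\tilb} Q''$ is precisely the synchronous input derivative obtained by letting the link on $a$ consume the message. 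Using $\abt \subseteq \abta$ and transitivity of $\abta$, from $P' \abta Q' \mid \out a\tilb \abta Q''$ we get $P' \abta Q''$, the required synchronous answer. Together with the easy direction this yields $\abt = \abta$ on the class. The main obstacle is exactly this buffering case: turning it into a synchronous match rests on the two facts that the only free inputs are link inputs—which is where "no input on a free name" is indispensable—and that a static link forwards a pending message deterministically, so that buffering and consuming lead to equivalent states; the delicate bookkeeping is ensuring, via $\S$, that $P$ and $Q$ always expose matching link names. Image-finiteness up to $\bsim$ is used to stay in the regime where modified-LTS bisimilarity is well behaved (and agrees with barbed congruence, Theorem~\ref{t:ab:bc}, with which Theorem~\ref{t:abt} is meant to be chained) and to justify the weak-transition manipulations in the conversion step.
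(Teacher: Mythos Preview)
Your easy direction $\abt \subseteq \abta$ is fine. The hard direction contains a genuine gap.

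The step on which your argument hinges is the claim that $Q' \mid \out a\tilb \abt Q''$ (``indeed related by expansion''), where $Q''$ is the state reached after the static link on $a$ in $Q'$ absorbs the message. This is false. The particle $\out a\tilb$ is not forced to interact with the link: in the modified LTS it can perform the visible output $\outb a{\tilc}$, leaving $\alpilink{\tilc}{\tilb}$ behind. The process $Q''$, in which the message has already been forwarded to $b$, has no output on $a$ at all (the only occurrence of $a$ in $Q'$ is as input subject of the link). Hence $Q' \mid \out a\tilb$ and $Q''$ disagree on an output barb at $a$ and are not bisimilar---synchronously, asynchronously, or for expansion. Your sentence ``$\alpilink a b \mid \out a\tilb$ can only let the link absorb the message'' is precisely where the argument breaks.

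The paper exploits the very same observation, but to derive a contradiction rather than to manufacture a synchronous answer. It isolates an invariant, \emph{respecting output capability}: every free name occurring in input-subject position occurs nowhere in output (subject or object). Processes with no free input trivially satisfy it, and one checks that each rule of the modified LTS preserves it (the output rule creates inputs only at the fresh link names $\tilc$, which occur nowhere else). Now suppose the buffering disjunct applied: $P' \RR (Q' \mid \out a\tilb)$. The right-hand side has a weak output at $a$, so $P'$ must too; but $\tau$-steps cannot move a name into output position without it already occurring in output-object position, so $a$ occurs in output in $P'$, hence (since $a\notin\tilb$) already in $P$. Together with the input prefix on $a$ in $P$ this contradicts the invariant. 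Thus the buffering case is vacuous and every asynchronous bisimulation on this class is already synchronous. No bookkeeping of matching link families is needed, and image-finiteness plays no role in this particular argument.
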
 
\begin{proof}
By construction, $\abt\protect{\subseteq}\abta$.

To show that $\abta\protect{\subseteq} \abt$, 
we first establish a property about  output capability and
transitions. 
We say that $P$ \emph{respects output capability} if 
any free name used in input subject position in $P$
may not be used in output, either in subject or object position, in $P$.
We show that if $P$ respects output capability and $P\alpiar\mu P'$, 
then so does $P'$.

We reason on the type of the transition $P\arr{\mu'} P'$ from 
which $P\alpiar{\mu} P'$ is derived. For simplicity, 
we consider only monadic actions.

\begin{enumerate}
\item if $P\arr{\inp a c} P'$: then $c$ is fresh for $P$, and cannot
  be used in input in $P'$. The property hence holds.

\item if $P\arr\tau P'$: by hypothesis, the communication takes place at a restricted
  name (otherwise the name would have free occurrences in input and in
  output in $P$). If the transmitted name is restricted as well, there is
  nothing to prove. Otherwise, let
 us suppose, to illustrate the reasoning, that
%
$P= \new {a}(\inp a b.P_0|\outC a c )
\arr\tau\new a P_0\subst bc$ (the transmitted name is $c$). Because $c$
occurs free in output in $P$, it cannot occur in input, since $P$
respects output capability. Because we are in \alpi, the new
occurrences of $c$ created by the substitution $\subst bc$ are in
output position. So $c$ cannot occur in input position in $P'=\new a P_0\subst bc$. The proof
in the general case follows the same ideas.

\item if $P\arr{\outC a b}P'$, and $P\alpiar{\outb ac}\alpilink
  cb|P'$:
  because $P$ respects output capability, so does $P'$, as the
  transition $P\arr{\outC a b}P'$ does not introduce any new
  occurrence of names.  Since $b$ is used in output in $P$, it cannot
  be used in input in $P$, and hence $P'$ does not contain output
  occurrences of $b$. We deduce that $\alpilink cb|P'$ also
  respects output capability: indeed, $c$ is fresh and thus is used only
  in input, and $b$ is used only in output in $\alpilink cb$.
  
\item if $P\arr{\outb a b}P'$, and
  $P\alpiar{\bout ac}\new b(\alpilink cb|P')$. This case is simpler
  than the previous one, since $b$ is bound in the resulting process,
  and, as before, $c$ is fresh and is only used in input.
\end{enumerate} 

Thus, we can consider bisimulations containing only processes that respect 
output capability.

Now, assume $\R$ is an asynchronous bisimulation relation containing 
only processes that respect output capability. 
We show that $\R$ is also a synchronous bisimulation relation. 

The only
interesting case  is when  $(P,Q)\in\R$, 
$P\alpiar {\inp a \tilb} P'$ 
and 
$Q\alpiAr{~} Q'$ for some $Q'$ such that $P'\RR (Q'|\out a\tilb)$. 
We observe that $Q'|\out a \tilb$ can perform an output on $a$,  which
implies that $P'$ has a free occurrence of $a$  in output. Since by hypothesis
$a\notin\til b$, this means that $b$ occurs free in input and in
output in $P$, a contradiction.


Over processes that respect output capability,  
asynchronous bisimulation relations are synchronous bisimulation relations, thus  
$\abta $ and $\abt$ coincide.
This allows us to conclude, because
\alpi~processes that have no free input do respect output capability. 
%
\end{proof}

The property in Theorem~\ref{t:abt} is new~-- 
we are not aware of papers in the literature presenting it.
 It is a consequence of the fact that, under the hypothesis 
of the theorem, and with a ground transition system, 
the only 
 input actions in processes that can ever be produced are those
emanating from the
 links, and two tested processes, if bisimilar, must have  the
 same sets of (visible) links. 
We can moreover remark that in Theorem~\ref{t:abt}, the condition on inputs can be removed by
adopting an asynchronous variant of bisimilarity; however, the
synchronous version is easier to use in our proofs based on unique
solution of equations.

For any $M\in\Lao$ and $p$, process 
$\app{\encm M} p$ is  indeed image-finite up to \bsim and  has no free input.
From
Theorems~\ref{t:ab:bc} and~\ref{t:abt}, 
we therefore deduce that $\abt$ and $\wbc\alpiSmall$ coincide 
for processes obtained by the encoding $\qencm$. 

\subsection{Full abstraction in \alpi}\label{s:fa:alpi}
We now discuss full abstraction for Milner's encoding $\qencm$, when the target 
language is \alpi. 
The proof of is overall very similar to that of Theorem~\ref{thm:fa}. 

The systems of equations for \alpi{} are similar to the ones we
introduced for \Intp{} (Figures~\ref{f:enc_internal}
and~\ref{f:opt_encod}). They are presented in~\ref{a:alpi}.
The equations defining the first system 
are exactly
the same as in Figure~\ref{f:enc_internal}, only encoding $\qencm$
is used instead of $\qenca$.
The second system is an optimised version of the first. Again, it is defined
as in \Intp; the differences are that we use static links.



The modified LTS of Figure~\ref{f:lts:alpi} introduces additional
static links with respect to the ground LTS. When establishing the
counterpart of Lemmas~\ref{l:complete_enf} 
and~\ref{l:complete}, we need to
reason about
divergences, and must therefore show
that these links do not produce new reductions.
%


\begin{lemma}Let $P$ be an \alpi{} process such that $P$ has no divergences 
in the ground LTS. Then it has no divergences in the modified LTS for \alpi.
\end{lemma}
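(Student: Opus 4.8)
The plan is to prove the contrapositive in the following form: any infinite $\tau$-computation in the modified LTS gives rise to an infinite $\tau$-computation in the ground LTS starting from a derivative of $P$, which is impossible, since $P$ is ground-divergence-free and ground derivatives of a divergence-free process are again divergence-free (a divergence of a derivative $Q$ extends, by prepending the path $P\to^* Q$, to a divergence of $P$). First I would record two structural observations. By the rules of Figure~\ref{f:lts:alpi}, the $\tau$-transitions of the modified and of the ground LTS coincide; the only effect of a modified computation is to accumulate static links, so that every state reachable from $P$ has, up to structural rearrangement, the shape $\res{\tild}\big(\alpilink{\tilc_1}{\tilb_1}\mid\cdots\mid\alpilink{\tilc_k}{\tilb_k}\mid Q\big)$, where each link was created by an output transition, so each input name $\tilc_i$ is fresh at the moment of creation (by the side-condition $\til c\cap(\fnames P\cup\til d)=\emptyset$), and $Q$ — the ``core'', together with any asynchronous messages forwarded by the links — tracks a ground derivative of $P$.

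Two facts then drive the argument. The first is that the network of links is acyclic: a link $\alpilink{\tilc_i}{\tilb_i}$ can feed another link $\alpilink{\tilc_j}{\tilb_j}$ only when $\tilb_i=\tilc_j$, but $\tilc_j$ is fresh when $\alpilink{\tilc_j}{\tilb_j}$ is introduced, hence differs from every name already present; taking the link created last in a hypothetical cycle $\tilc_{i_1}\!\to\!\tilb_{i_1}\!=\!\tilc_{i_2}\!\to\cdots\to\!\tilc_{i_1}$ yields a contradiction, so no cycle exists and the links alone can perform only finitely many consecutive forwarding steps. The second is a forwarder-elimination property, the \alpi-analogue of Lemma~\ref{l:fwd}: since in \alpi received names occur only in output position, the fresh $\tilc_i$ occurs in the rest of the process only in output subject position, so every internal communication of $\res{\tilc_i}\big(\alpilink{\tilc_i}{\tilb_i}\mid R\big)$ is either a forwarding hop of the link or, after collapsing $\tilc_i$ onto $\tilb_i$, a communication of $R\subst{\tilc_i}{\tilb_i}$. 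As the infinite $\tau$-tail involves no interaction with the environment, each extruded $\tilc_i$ receives no outside output, so I may treat it as restricted and apply this elimination to every link in turn.

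Putting these together: given an infinite $\tau$-tail $S_n\alpiar{\tau}S_{n+1}\alpiar{\tau}\cdots$ with $S_n=\res{\tild}(\prod_i\alpilink{\tilc_i}{\tilb_i}\mid Q)$, acyclicity forces infinitely many of the transitions to be core reductions rather than pure link-firings; since each core reduction is matched by a reduction of the collapsed process while the finitely many forwarding hops disappear, erasing all links via the substitutions yields an infinite $\tau$-sequence from $Q\subst{\tilc_1}{\tilb_1}\cdots\subst{\tilc_k}{\tilb_k}$. This collapsed process is a state reachable from $P$ in the ground LTS, the link indirections on the fresh $\tilc_i$ folding back onto the real names $\tilb_i$ that $P$ emits directly in the ground semantics, and any environment inputs of the finite prefix corresponding to ground inputs on those same names. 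Hence $P$ would diverge in the ground LTS, against the hypothesis.

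The main obstacle is making the forwarder-elimination rigorous in the presence of several interleaved asynchronous links: one must verify that collapsing a link into a substitution neither destroys the core reductions that witness the divergence nor introduces spurious ones, and that the bookkeeping identifying the collapsed core with a bona fide ground derivative of $P$ survives the finite prefix of visible (input) transitions. Establishing the acyclicity invariant and its preservation along transitions — relying on the freshness of the $\tilc_i$ together with the output-capability discipline already exploited in the proof of Theorem~\ref{t:abt} — is the technical heart; once it is in place, the reduction to a ground divergence is routine.
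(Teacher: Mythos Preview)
Your argument is correct, but it works much harder than the paper's. The paper's proof is essentially a two-line observation: the static links $\alpilink{\tilc}{\tilb}$ introduced by the modified output rule are guarded by replicated inputs at the \emph{fresh} names $\tilc$, and nothing else in the derivative---neither the core $P'$, nor any other link, nor any forwarded message---ever carries an output at $\tilc$. Hence a link can never take part in a $\tau$-transition, and the $\tau$-transitions of a modified-LTS derivative coincide with those of its core, which is a ground-LTS derivative of $P$. The paper's second sentence just notes that even if a link fires via a visible input during the finite prefix, the received names are fresh, so the forwarded message creates no $\tau$ that was not already available as a ground input at $\tilb$.

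Your acyclicity analysis and forwarder-elimination are therefore addressing phenomena that do not arise. There are no ``forwarding hops'' in the $\tau$-tail, because links never fire via $\tau$; your acyclicity invariant is true but vacuous here. Likewise the substitutions $\subst{\tilc_i}{\tilb_i}$ you apply are the identity on the core, since $\tilc_i$ is fresh for it. Your machinery would be the right tool if links could feed one another internally, but the freshness side-condition on $\tilc$ in the output rule kills that possibility outright---which is exactly what the paper exploits. What your route buys is robustness (it would survive weakening the freshness condition); what the paper's route buys is a one-paragraph proof.
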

\begin{proof}
The replicated inputs guarding  static links 
created by an output transition in the modified LTS
are always at fresh names|the $\til c$ in Figure~\ref{f:lts:alpi}. 
Hence, no communication at the names in $\til c$ is
possible.
Furthermore, in the ground LTS, the additional inputs at the names in $\til c$ are with
fresh names, so they cannot generate new $\tau$ transitions.
\end{proof}


The encoding into \alpi{} is fully abstract.

\begin{theorem}
\label{t:faALpi}
$M\enfe N$ iff $  \encm M  \wbc\alpiSmall \encm N$, for any $M,N \in \Lao$. 
\end{theorem}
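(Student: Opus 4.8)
The plan is to mirror the full abstraction argument for \Intp{} (Theorem~\ref{thm:fa}) almost verbatim, but working with synchronous ground bisimilarity $\abt$ on the modified LTS of Figure~\ref{f:lts:alpi}, and then transporting the result to barbed congruence. First I would establish the \alpi{} analogues of the operational ingredients of Section~\ref{s:oper:corr}: an optimised version of $\qencm$, defined exactly as in Figure~\ref{f:opt_encod} but replacing the links $\fwd a b$ by static links $\alpilink a b$, together with validity of $\betav$-reduction, operational correspondence (the counterpart of Proposition~\ref{p:opt_op}), and the divergence characterisation (the counterpart of Lemma~\ref{l:bot}). The key observation is that $\qencm$ uses free outputs in the variable clause, but on the modified LTS a free output is automatically decomposed into a fresh bound output followed by a static link; this is precisely what makes $\qencm$ behave in \alpi{} the way the link-based encoding $\qenca$ behaves in \Intp, so the statements and their proofs carry over with $\alpilink a b$ in place of $\fwd a b$.

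Second, I would prove soundness, following Proposition~\ref{l:sound}: set $\R \defi \{(M,N) \st \encm M \abt \encm N\}$ and show that $\R$ is an \enfbsim. This uses the \alpi{} counterpart of the decomposition Lemma~\ref{l:pref_decompose} to split equivalences of the form $\encm{\evctxt[x\val]} \abt \encm{\evctxtp[x\valp]}$ into componentwise equivalences of the values and of the evaluation contexts, and the $\eta$-expansion analysis of Lemma~\ref{l:opt_eta} for the cases relating a variable to an abstraction.

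Third, for completeness I would reuse the unique-solution machinery, following Proposition~\ref{l:complete}. Given an \enfbsim{} $\R$, I form the two systems of equations described in the text: the first identical to the one built from Figure~\ref{f:enc_internal} but with $\qencm$ in place of $\qenca$, and the optimised one using static links. I verify that the tuples of $\qencm$-encodings of the left and right components of $\R$ are solutions of the first system, and then apply Theorem~\ref{thm:usol} via Lemma~\ref{t:transf:equations} to conclude that this system has a unique solution; hence $\encm M \abt \encm N$ whenever $M \RR N$. The divergence-freedom hypothesis of Theorem~\ref{thm:usol} is exactly where the modified LTS could cause trouble, since the static links created at run time might in principle generate fresh reductions; this is ruled out by the lemma just proved (the links guard replicated inputs at fresh names, so they produce no new $\tau$-transitions), which lets us import divergence-freedom of the syntactic solutions from the ground LTS to the modified one.

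Finally I would combine the two directions, which together give $M \enfe N$ iff $\encm M \abt \encm N$, with the coincidence $\abt = \wbc\alpiSmall$ on the encodings. The latter follows from Theorems~\ref{t:ab:bc} and~\ref{t:abt}, because every $\app{\encm M} p$ is image-finite up to $\bsim$ and has no free input, so the two restrictions required by those theorems are met. I expect the main obstacle to be not the high-level structure, which is parallel to the \Intp{} case, but the bookkeeping around the modified LTS: one must check that each operational lemma and each decomposition argument remains valid once free outputs are replaced at run time by bound outputs plus static links, and in particular that the static links introduced during transitions never disturb the subject-distinctness and output-capability discipline on which the divergence analysis and the decomposition lemma rely.
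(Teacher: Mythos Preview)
Your proposal is correct and follows essentially the same approach as the paper: the paper explicitly says the proof is ``overall very similar to that of Theorem~\ref{thm:fa}'', builds the two systems of equations for \alpi{} with $\qencm$ and static links (Figures~\ref{f:eqalpi} and~\ref{f:optalpi}), isolates exactly the divergence-freedom issue you flag for the modified LTS (proving a dedicated lemma that the static links created by output transitions, being at fresh names, generate no new $\tau$-transitions), and concludes via Theorems~\ref{t:ab:bc} and~\ref{t:abt}.
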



\section{Contextual equivalence and preorders}
\label{s:contextual}
We have presented full abstraction for $\eta$-Eager-Tree equality
taking
a `branching' behavioural equivalence, namely
 barbed congruence,  on the $\pi$-processes.
We show here the same result
for contextual
equivalence,  the most common 
 `linear' behavioural equivalence.
We also extend the results to   preorders.

We only discuss the encoding 
$\qenca$ into 
\Intp. Similar results however hold for 
the encoding $\qencm$ into
  \alpi.

\subsection{Contextual relations and traces}
\label{ss:preorders}

Contextual equivalence   is defined in the $\pi$-calculus 
analogously to its definition in the  
 $\lambda$-calculus (Definition~\ref{d:ctxeq}); 
thus, with
respect to barbed congruence, the bisimulation game on reduction is
dropped. 
Since we wish to handle  preorders, we also  introduce
the \emph{contextual preorder}. 

\begin{definition}
\label{d:ctx_pi}
Two \Intp\  processes 
$P,Q$ are in the \emph{contextual preorder}, written 
$P \ctxpre Q $, 
 if 
$C[P]\Dwa_{ a}$ implies $C[Q]\Dwa_{ a}$, 
for all contexts $C$. They are 
\emph{contextually equivalent}, 
written 
$P \ctxeqPI Q$, if both 
$P \ctxpre Q$  and $Q \ctxpre P$ hold.
\end{definition} 

As usual, these relations are extended to abstractions by requiring
instantiation of the parameters with fresh names. 
To manage  contextual  preorder and equivalence in proofs, we
exploit  
 characterisations of them  as trace inclusion and  equivalence.
%
%
 We define the traces of a process 
as follows:

\begin{definition} A (finite, weak, ground) trace is a 
finite sequence of visible actions $\mu_1,\dots,\mu_n$ such that 
for $i\neq j$ the bound names of $\mu_i$ and $\mu_j$ are all distinct 
and if $j<i$, the free names of $\mu_j$ and the bound names of $\mu_i $ 
are all distinct.
\end{definition}

For $s = \mu_1, \ldots, \mu_n$, 
 we write $P \Arr{s} $
if $P \Arr{\mu_1} P_1 \Arr{\mu_2}P_2 \ldots P_{n-1} \Arr{\mu_n}P_n$, for
some  processes $P_1, \ldots, P_n$.
In such a situation, we sometimes say that $s$ is a \emph{finite
  trace} of $P$.

\begin{definition}
\label{d:trace}
Two \Intp\ processes  $P,Q$ are in the \emph{trace inclusion}, written 
$P \trincl Q $, if $P \Arr{s} $ implies $Q \Arr{s} $, for each trace
$s$. They are 
\emph{trace equivalent}, 
written 
$P \treq Q $, if both 
$P \trincl Q $ and 
$Q \trincl P $ hold.
\end{definition} 

The following result is standard.

\begin{proposition}
\label{p:charac:traces}
In \Intp, relation  $ \ctxpre$ coincides with $ \trincl$, and 
relation $ \ctxeqPI$ coincides with $ \treq  $.
\end{proposition}

%




\subsection{A proof technique for preorders}
\label{ss:usol_pre}

We modify the technique of unique solution of equations to
reason  about preorders, precisely the 
 trace inclusion preorder. 

 We say that $\til F$ is  a \emph{pre-fixed
   point 
 for $\trincl$} of  
a system of equations   $\{\til X= \til E\}$ 
if $\til E[\til F]\trincl \til F$;
similarly,  $\til F$ is  a \emph{post-fixed
   point  for $\trincl$}
if 
$\til F\trincl \til E[\til F]$.
In the case of equivalence, 
the 
 technique of unique solutions   exploits symmetry arguments,
but 
symmetry  does not hold for preorders.  
We overcome the problem by referring to the
syntactic solution of the  system in an asymmetric manner. 
 This   yields the two lemmas below, intuitively stating  that the
 syntactic solution    of a  system
is its smallest pre-fixed point, as well as, under the  divergence-freeness
hypothesis, its greatest
post-fixed point. 

 In~\cite{usol}, in order to prove that a system of equations has a unique solution, 
we need to extend the transitions to \emph{equation expressions} (i.e., contexts). 
For the same reasons, here we consider that contexts or equations may perform transitions, 
obtained as for the LTS, and assuming the hole does not perform any action.  
This is extended to traces. 
Hence, 
if $E\param{\til a}$ has the trace $s$ (written $E\param{\til a}\Arr
s$), then for any  $\til F$,  
$E[\til F]\param{\til a}\Arr s$.  
For more details we refer the reader to \cite{usol}.

\begin{restatable}[Pre-fixed points,  $\trincl$]{lemma}{minsol}\label{least}
Let $\Eeq$ be a  
system of equations, 
and $\KEE$ its syntactic solution. 
If  $\til F$ is 
a pre-fixed point  for $\trincl$ of $\Eeq$, 
then $\KEE \trincl \til F$.
\end{restatable}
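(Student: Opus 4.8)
The plan is to prove $\KEE \trincl \til F$ componentwise, i.e. $\KEi E \trincl F_i$ for each $i$, by showing that every (finite, weak, ground) trace of the syntactic solution $\KEi E$ is also a trace of $F_i$. The two ingredients are the precongruence of $\trincl$ and the guardedness of $\Eeq$. First I would record that, by Proposition~\ref{p:charac:traces}, $\trincl$ coincides with the contextual preorder $\ctxpre$, which is a precongruence by definition; in particular $\trincl$ is preserved by the (multi-hole) contexts given by the equation bodies $E_i$, so that $A_j \trincl B_j$ for all $j$ entails $E_i[\til A] \trincl E_i[\til B]$.

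The key step is to lift the pre-fixed point hypothesis to all finite unfoldings. For $k\geq 0$ let $C^{(k)}_i$ denote the $k$-fold unfolding of the $i$-th equation, with $C^{(0)}_i = X_i$ and $C^{(k+1)}_i = E_i[\til C^{(k)}]$, so that every variable occurrence of $C^{(k)}_i$ sits underneath at least $k$ prefixes (this is exactly where guardedness is used). Using $E_i[\til F] \trincl F_i$ together with the precongruence of $\trincl$, I would prove by induction on $k$ that $C^{(k)}_i[\til F] \trincl F_i$ for all $k$ and all $i$: the case $k=1$ is the hypothesis, and the inductive step applies the context $E_i[\cdot]$ to the inequalities $C^{(k)}_j[\til F]\trincl F_j$ and then composes with the hypothesis. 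Moreover, since the syntactic solution is a fixed point of unfolding, $\KEi E = C^{(k)}_i[\KEE]$ for every $k$.

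I would then transfer traces. Fix a trace $s$ of $\KEi E$ and a finite derivation witnessing $\KEi E \Arr{s}$; it uses some finite number $m$ of transitions (visible actions and $\tau$'s). Writing $\KEi E = C^{(m+1)}_i[\KEE]$, the holes of $C^{(m+1)}_i$ lie under at least $m+1$ prefixes, so exposing the content of any such hole would require firing all the guards above it, hence at least $m+1$ transitions, which the chosen derivation cannot perform. The holes therefore stay inert throughout, and the same derivation witnesses a trace of the context alone, $C^{(m+1)}_i \Arr{s}$. By the property recalled just before the lemma (a trace of a context is preserved when its holes are filled), $C^{(m+1)}_i[\til F] \Arr{s}$, and since $C^{(m+1)}_i[\til F] \trincl F_i$ we obtain $F_i \Arr{s}$. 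As $s$ was arbitrary, $\KEi E \trincl F_i$ for every $i$, that is $\KEE \trincl \til F$.

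The delicate point is precisely the counting in this last step: because $\tau$-transitions (communications) can pass a guard without contributing a visible action, it is essential to bound the unfolding depth by the total number of transitions in the fixed derivation, not by the length $n$ of the trace. Guardedness guarantees that a hole at depth $k$ cannot be activated by fewer than $k$ transitions, so any finite derivation explores only a finite unfolding and the deeper holes may be treated as inert. Note that no divergence-freeness hypothesis is required here; it enters only in the companion greatest-post-fixed-point statement.
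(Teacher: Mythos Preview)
Your overall strategy matches the paper's: show that every finite trace of the syntactic solution is already a trace of some finite unfolding $E_i^n$ (holes treated as inert), then use precongruence of $\trincl$ to obtain $E_i^n[\til F]\trincl F_i$ and conclude. The induction yielding $C^{(k)}_i[\til F]\trincl F_i$ is correct, and your remark that divergence-freeness is not needed here is also right.

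The one real issue is that you invoke guardedness, which is \emph{not} a hypothesis of this lemma (contrast Lemma~\ref{greatest}, where it is explicitly assumed). Your prefix-counting step---``the holes of $C^{(m+1)}_i$ lie under at least $m+1$ prefixes''---simply fails for unguarded equations: if some $E_j$ has an unguarded occurrence of a variable (say $E_j = \alpha.\nil \mid X_j$, or even $E_j = X_j$), the corresponding holes in $C^{(k)}$ need not lie under any prefix at all, so the bound you rely on is vacuous. The paper's argument sidesteps this entirely: a finite trace of $\KEi E\param{\til a}$ is witnessed by a finite derivation, and any finite derivation can appeal to the constant-unfolding rule only finitely many times; hence it is already a derivation in $E_i^n\param{\til a}$ (holes inert) for some $n$, regardless of whether the system is guarded. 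Replacing your counting step with this direct observation closes the gap and in fact shortens the proof.
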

\begin{proof}
  Consider a finite trace $s$ of $\KEi i\param{\til a}$. As it is finite, there
  must be an $n$ such that $s$ is a  trace of $E_i^n\param{\til a}$, hence it is also
  a trace of $E_i^n[\til F]\param{\til a}$. 
  From $\til E[\til F]\trincl \til F$, 
  by congruence if follows that $E_i^{n+1}[\til F]\trincl E_i^n[F_i]$, hence 
  also $E_i^{n+1}[\til F]\trincl F_i$. 
%
Hence, $s$ is a
  trace of $F_i\param{\til a}$, and we can conclude by
  Proposition~\ref{p:charac:traces}. 
\end{proof}

\begin{restatable}[Post-fixed points,  $\trincl$]{lemma}{usoltrace}\label{greatest}
Let $\Eeq$ be a guarded 
system of equations, 
and $\KEE$ its syntactic solution. 
Suppose $\KEE$ has no divergences. 
If  $\til F$ is 
a post-fixed point  for $\trincl$ of $\Eeq$, 
then
$\til F \trincl \KEE$.
\end{restatable}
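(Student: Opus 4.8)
The plan is to prove that if $\til F$ is a post-fixed point for $\trincl$, that is $\til F \trincl \til E[\til F]$, then every trace of $F_i\param{\til a}$ is also a trace of the syntactic solution $\KEi i\param{\til a}$. This is the asymmetric counterpart of Lemma~\ref{least}, and it is where the divergence-freeness hypothesis becomes essential: unlike the pre-fixed point direction, we cannot simply unfold finitely many times, because a trace of $F_i\param{\til a}$ of length $n$ might, under repeated unfolding through $\til E$, keep being "pushed back" behind the hole indefinitely without ever being discharged. Divergence-freeness of $\KEE$ is precisely what rules this out.

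First I would fix a finite trace $s = \mu_1,\dots,\mu_n$ of $F_i\param{\til a}$ and argue by an unfolding/approximation scheme. Using $\til F \trincl \til E[\til F]$ and congruence of $\trincl$ (Proposition~\ref{p:charac:traces} together with the fact that trace inclusion is preserved by contexts), I would iterate to obtain $F_i \trincl E_i^k[\til F]$ for every $k$, so that $s$ is a trace of $E_i^k[\til F]\param{\til a}$ for each $k$. The key observation is that the body $E_i^k$, being a $k$-fold composition of guarded expressions, guards the hole under at least $k$ nested prefixes. Here I would invoke guardedness: each occurrence of a variable in $\til E$ sits underneath a prefix, so in $E_i^k$ every hole is underneath at least $k$ prefixes. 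Since the trace $s$ has fixed finite length $n$, once $k$ exceeds the number of visible actions that $E_i^k[\til F]$ can perform while still reaching into the hole, the trace $s$ must in fact be realisable using only transitions of the \emph{context part} $E_i^k$, i.e.\ it is a trace of $E_i^k\param{\til a}$ itself, hence of $\KEi i\param{\til a}$.

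The delicate point — and the main obstacle — is to make the above counting argument rigorous: I must show that the number of \emph{visible} actions contributed by the context $E_i^k$ before the hole is reached grows (weakly) with $k$, so that for $k$ large enough the length-$n$ trace $s$ cannot pass through the hole at all. This is exactly where divergence-freeness of $\KEE$ enters. If the hole could be reached after only boundedly many visible actions for all $k$, then the infinite unfolding $\Einf$ would exhibit an infinite sequence of $\tau$-transitions after finitely many visible ones — a divergence of $\KEE$, contradicting the hypothesis. So divergence-freeness guarantees that along any trace of bounded length, only finitely many layers of unfolding can be consumed by $\tau$-moves, and for $k$ beyond that bound the trace $s$ is a genuine trace of the context $E_i^k$, which is a finite prefix of the syntactic solution $\KEi i$.

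Finally, from $s$ being a trace of $E_i^k\param{\til a}$ and the fact that $\KEi i \treq E_i^k[\KE]\supseteq E_i^k$ on traces (the syntactic solution contains the behaviour of each finite approximant), I would conclude $\KEi i \Arr{s}$, hence $F_i \trincl \KEi i$. Applying Proposition~\ref{p:charac:traces} to pass between trace inclusion and the contextual preorder, and ranging over all $i$, gives $\til F \trincl \KEE$ as required. For a fuller pedagogical treatment of this style of argument (the bisimulation analogue of "reading off" a trace through bounded unfolding under a no-divergence assumption) I would follow the structure of the corresponding proof in~\cite{usol}.
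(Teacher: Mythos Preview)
Your plan is correct and follows essentially the same route as the paper: iterate the post-fixed point inequality to get $F_i \trincl E_i^k[\til F]$, then argue that for the fixed finite trace $s$ some unfolding depth $k$ suffices for $s$ to be realised entirely by context transitions, using guardedness for progress and divergence-freeness of $\KEE$ to rule out the bad case.

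One point of comparison is worth making. Your counting formulation (``once $k$ exceeds the number of visible actions needed to reach the hole, $s$ stays in the context'') is not quite how the paper makes this step rigorous, and as stated it has a gap: knowing only that for each $k$ the trace $s$ is realisable in $E_i^k[\til F]$ does not directly give you a \emph{single} infinite run of $\KEE$ with boundedly many visible actions, because the witnessing runs may differ for different $k$, and a compactness argument would be needed. The paper (following~\cite{usol}) avoids this by an explicit incremental construction: it maintains, for each $n$, a decomposition $E^n[F]\sh \Arr{\alpha_1\dots\alpha_{i_k}} E_n[F] \Arr{\alpha_{i_k+1}\dots\alpha_m}$ where the first segment consists only of context transitions; at the next step one unfolds once more, uses $E_n[F]\trincl E_n[E[F]]$ on the residual trace, and absorbs into the context part whatever prefix of the new run lies in $E_n[E]$. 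Guardedness guarantees this prefix is nonempty, so the context segment strictly grows, and if the construction never terminates the growing segment is literally an infinite run of $\KEE$ with at most $|s|$ visible actions --- the forbidden divergence. Since you already point to~\cite{usol} for the detailed structure, you are effectively invoking this construction; just be aware that it, rather than the bare counting picture, is what makes the argument go through.
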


The proof of Lemma \ref{greatest}
is  similar to the 
proof of Theorem~\ref{thm:usol} (for bisimilarity). 
Details of the proof are given in~\ref{a:trace}. 

The following proof technique makes it possible to avoid referring to
the syntactic solution of a system of equations, which is sometimes
inconvenient.

\begin{theorem}
\label{thm:usol_pre}
Suppose that $\Eeq$ is a guarded 
system of equations 
 with a 
divergence-free 
 syntactic
solution.
If $\til F$ (resp.\ $\til G$) is a pre-fixed point (resp.\ post-fixed
point)   for $\trincl$ of $\Eeq$, 
then
 ${\til G}\trincl {\til F}$. 
\end{theorem}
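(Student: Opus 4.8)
The plan is to observe that this theorem is an immediate consequence of the two preceding lemmas, composed via transitivity of trace inclusion. The key insight is that the syntactic solution $\KEE$ sits \emph{between} every post-fixed point and every pre-fixed point: Lemma~\ref{greatest} tells us that $\KEE$ is the greatest post-fixed point (so every post-fixed point lies below it), while Lemma~\ref{least} tells us that $\KEE$ is the smallest pre-fixed point (so it lies below every pre-fixed point). Sandwiching $\KEE$ between $\til G$ and $\til F$ then gives the result.

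Concretely, first I would invoke Lemma~\ref{greatest}. Its hypotheses are exactly those assumed here: $\Eeq$ is guarded and its syntactic solution $\KEE$ is divergence-free. Since $\til G$ is a post-fixed point for $\trincl$, this lemma yields $\til G \trincl \KEE$. Next I would invoke Lemma~\ref{least}, which requires no divergence-freeness or guardedness hypothesis: since $\til F$ is a pre-fixed point for $\trincl$, we obtain $\KEE \trincl \til F$. Finally, chaining these two inclusions,
$$
\til G \trincl \KEE \trincl \til F,
$$
and using transitivity of $\trincl$ (clear from Definition~\ref{d:trace}: if $P\Arr{s}$ forces $R\Arr{s}$ and $R\Arr{s}$ forces $Q\Arr{s}$, then $P\Arr{s}$ forces $Q\Arr{s}$), I conclude $\til G \trincl \til F$.

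The argument is componentwise, as all the relations are extended to tuples in the expected way, so no additional bookkeeping is needed. I do not anticipate a genuine obstacle here, since the two lemmas have already done all the real work; the only point worth stating explicitly is that the divergence-freeness and guardedness assumptions are used solely to license the application of Lemma~\ref{greatest} (the ``greatest post-fixed point'' direction), whereas the ``smallest pre-fixed point'' direction of Lemma~\ref{least} holds unconditionally. This asymmetry is precisely what makes the preorder version of the unique-solution technique work: one inequality is always available, while the other requires controlling the divergences of the syntactic solution.
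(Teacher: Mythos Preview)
Your proposal is correct and matches the paper's proof essentially verbatim: apply Lemma~\ref{greatest} to the post-fixed point $\til G$ and Lemma~\ref{least} to the pre-fixed point $\til F$, then chain $\til G \trincl \KEE \trincl \til F$. Your remark that only Lemma~\ref{greatest} needs the guardedness and divergence-freeness hypotheses is also accurate.
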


\begin{proof}
Apply Lemma~\ref{greatest} to $\til F$ and Lemma~\ref{least} to $\til
G$: this gives $\til G\trincl \KEE \trincl \til F$.
\end{proof}


We can also extend
Lemma~\ref{t:transf:equations} to preorders.
Given two systems of equations $\Eeq$ and $\Eeq'$, we say that
 $\Eeq'$ 
\emph{extends $\Eeq$ with respect to a given preorder} 
 if there exists a fixed set of indices $J$ such that:
\begin{enumerate}
\item
  any pre-fixed point 
 of
$\Eeq$ for the preorder can be obtained from a pre-fixed point 
of $\Eeq'$ (for the same preorder) by removing the
components corresponding to indices in $J$;

\item 
the same as (1) with post-fixed points in place of pre-fixed points.
\end{enumerate}

\begin{lemma}\label{t:transf:preorders}
Consider two systems of equations $\Eeq'$ and $\Eeq$ 
where $\Eeq'$
extends $\Eeq$ with respect to $\trincl$. 
Furthermore, suppose $\Eeq'$ is  guarded and has a divergence-free 
syntactic solution. 
If $\til F$ is a pre-fixed point  for $\trincl$ of $\Eeq$, and 
 $\til G$  a post-fixed point  for $\trincl$ of $\Eeq$, 
then
 ${\til G}\trincl {\til F}$. 
\end{lemma}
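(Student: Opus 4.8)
The plan is to treat this as the preorder counterpart of Lemma~\ref{t:transf:equations}: rather than arguing directly about $\Eeq$, whose syntactic solution need not be divergence-free, I would lift the two given fixed points of $\Eeq$ to fixed points of the better-behaved system $\Eeq'$, apply Theorem~\ref{thm:usol_pre} there, and then project the resulting inclusion back down to $\Eeq$.

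First I would unfold the hypothesis that $\Eeq'$ extends $\Eeq$ with respect to $\trincl$. This provides a fixed set of indices $J$ such that every pre-fixed point of $\Eeq$ arises from a pre-fixed point of $\Eeq'$ by deleting the components indexed by $J$, and likewise for post-fixed points. Applying the first clause to $\til F$ yields a pre-fixed point $\til F'$ of $\Eeq'$ for $\trincl$ whose restriction to the indices outside $J$ is $\til F$; applying the second clause to $\til G$ yields a post-fixed point $\til G'$ of $\Eeq'$ for $\trincl$ whose restriction outside $J$ is $\til G$. Crucially, the same $J$ governs both liftings, since the notion of extension with respect to $\trincl$ defined just above fixes $J$ once and for all before stating its two clauses.

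Next, since $\Eeq'$ is guarded and has a divergence-free syntactic solution, it satisfies exactly the hypotheses of Theorem~\ref{thm:usol_pre}. Instantiating that theorem with the pre-fixed point $\til F'$ and the post-fixed point $\til G'$ gives $\til G' \trincl \til F'$. Finally, because $\trincl$ on tuples is the componentwise extension of trace inclusion (as fixed in the conventions on relations over tuples), discarding the same block of components, namely those indexed by $J$, from both $\til G'$ and $\til F'$ preserves the inclusion on the surviving components; hence $\til G \trincl \til F$, as required.

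The argument is essentially bookkeeping, so the points demanding care are conceptual rather than computational. The main thing to get right is that the definition of `extends with respect to $\trincl$' must supply the lifted fixed points for a \emph{common} index set $J$, so that the subsequent deletion of components is coherent between $\til F'$ and $\til G'$; this coherence is exactly what the two-clause formulation with a single fixed $J$ guarantees. I would also flag explicitly that we invoke Theorem~\ref{thm:usol_pre}, the preorder form, rather than the bisimilarity-based Theorem~\ref{thm:usol}, and that divergence-freeness is imposed on $\Eeq'$ (not on $\Eeq$) precisely so that its syntactic solution is simultaneously the least pre-fixed point (Lemma~\ref{least}) and the greatest post-fixed point (Lemma~\ref{greatest}), which is what sandwiches $\til G'$ below $\til F'$.
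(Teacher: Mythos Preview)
Your proposal is correct and follows exactly the argument the paper sets up: the definition of ``extends with respect to $\trincl$'' (with its single fixed $J$ and its two clauses for pre- and post-fixed points) is tailored precisely so that one can lift $\til F$ and $\til G$ to $\Eeq'$, invoke Theorem~\ref{thm:usol_pre} there, and project back. The paper does not spell out a proof for this lemma, but your write-up is the intended one, and your explicit remarks about the common index set $J$ and the need for the preorder form of the unique-solution theorem are apt.
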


\paragraph{Unique solution for trace equivalence}
 Theorem~\ref{thm:usol_pre} gives the following property:

\begin{corollary}\label{a:t:usol_trace}
In $\Intp$, a weakly guarded system of equations  whose 
syntactic solution does not diverge 
has a unique solution for $\treq$.
\end{corollary}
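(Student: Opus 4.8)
The plan is to reduce the statement to Theorem~\ref{thm:usol_pre}, exploiting the standard observation that a solution for $\treq$ is simultaneously a pre-fixed point and a post-fixed point for $\trincl$. First I would unfold the definitions. Reading Definition~\ref{d:un_sol} for $\treq$ in place of $\bsim$, the system has a \emph{unique} solution for $\treq$ provided that any two solutions $\til F$ and $\til G$ satisfy $\til F \treq \til G$; existence is never an issue, since the syntactic solution $\KEE$ is a solution by construction. A tuple $\til F$ is a solution exactly when $\til E[\til F]\treq \til F$, and by the definition of trace equivalence (Definition~\ref{d:trace}) this means both $\til E[\til F]\trincl \til F$ and $\til F\trincl \til E[\til F]$; that is, $\til F$ is at once a pre-fixed point and a post-fixed point of $\Eeq$ for $\trincl$.

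The core of the argument is then a double application of Theorem~\ref{thm:usol_pre}. Given two solutions $\til F$ and $\til G$, each is both a pre- and a post-fixed point, and the hypotheses required by Theorem~\ref{thm:usol_pre}, namely guardedness together with divergence-freeness of the syntactic solution, are precisely those assumed in the corollary (Proposition~\ref{p:charac:traces} ensures we may work entirely with traces). Applying the theorem with $\til G$ in the role of the pre-fixed point and $\til F$ in the role of the post-fixed point yields $\til F\trincl \til G$; applying it a second time with the roles exchanged yields $\til G\trincl \til F$. Taken together, these two inclusions give $\til F\treq \til G$, which is exactly what uniqueness for $\treq$ demands.

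I do not expect a genuine obstacle here; the difficulty of the preorder technique has already been discharged into Lemmas~\ref{least} and~\ref{greatest} and hence into Theorem~\ref{thm:usol_pre}, so the corollary is a corollary in the literal sense, obtained by reading $\treq$ as two opposite instances of $\trincl$. The only points needing care are bookkeeping ones: reconciling the wording \emph{weakly guarded} in the corollary with the guardedness hypothesis of Theorem~\ref{thm:usol_pre} (they coincide, since guardedness in this paper already means that every variable occurrence lies underneath a prefix), and checking via Definition~\ref{d:trace} that the decomposition of $\treq$ matches the pre-/post-fixed-point orientation, so that the two invocations of the theorem really deliver the two inclusions in the directions claimed.
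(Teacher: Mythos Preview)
Your proposal is correct and follows essentially the same route as the paper: observe that a solution for $\treq$ is both a pre- and a post-fixed point for $\trincl$, then apply Theorem~\ref{thm:usol_pre} twice with the roles of the two solutions swapped to obtain both inclusions. The paper's proof is exactly this two-line argument; your additional remarks on terminology and the invocation of Proposition~\ref{p:charac:traces} are harmless but unnecessary.
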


If $\til F \treq \til E[\til F]$ 
and $\til G \treq \til E [\til G]$, this gives, by applying Theorem~\ref{thm:usol_pre} 
twice,  $\til F\trincl \til G$ 
and $\til G\trincl \til F$,  hence $\til F\treq \til G$.

%

%


\subsection{Full Abstraction}
\label{ss:fa_preorders}

The preorder on $\lambda$-terms induced by the contextual preorder
 is  
 \emph{$\eta$-eager normal-form similarity}, $\esim$. It is   obtained by
imposing that $M \esim N$ for all $N$, whenever $M$ is  divergent.
Thus, with respect to the bisimilarity relation $\enfe$, we only have to change 
clause (1) of  Definition~\ref{enfbsim}, 
by  requiring 
 only $M$ 
 to be divergent.

\begin{definition}[$\eta$-eager normal-form similarity] A relation $\R$ between \lterms is an \enfse if, whenever $M\R N$, one of the following holds:
\begin{enumerate}
\item $M$ diverges
\item $M\reds \evctxt[x\val]$ and $N\reds \evctxt'[x\valp]$ for some $x$, $\val$, $\valp$, $\evctxt$ and $\evctxt'$ such that $\val\R \valp$ and $\evctxt[z]\R \evctxt'[z]$ for some $z$ fresh in $ {\evctxt,\evctxt'}$
\item $M\reds \abs x M'$ and $N\reds \abs x N'$ for some $x$, $M'$, $N'$ such that $M'\R N'$
\item $M\reds x$ and $N\reds x$ for some $x$
\item $M\reds x$ and $N\reds \abs z \evctxt[x\val]$ for some $x$, $z$, $\val$ and $\evctxt$ such that $z\R \val$ and $y\R \evctxt[y]$ for some $y$ fresh in $ \evctxt$
\item $N\reds x$ and $M\reds \abs z \evctxt[x\val]$ for some $x$, $z$, $\val$ and $\evctxt$ such that $\val\R z$ and $\evctxt[y]\R y$ for some $y $ fresh in $ \evctxt$
\end{enumerate}
$\eta$-eager normal form similarity is the largest \enfse.
\end{definition}



\begin{theorem}[Full abstraction on preorders]
\label{t:preorders}
For any $M,N \in \Lao$, we have
$M\esim N$ iff $ \enca M \ctxpre \enca N$.
 \end{theorem}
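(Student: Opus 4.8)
The plan is to establish the two inclusions separately, following the architecture of the full-abstraction result for $\enfe$ (Theorem~\ref{thm:fa}) but systematically replacing bisimilarity by trace inclusion. By Proposition~\ref{p:charac:traces}, $\enca M \ctxpre \enca N$ holds iff $\enca M \trincl \enca N$; and since $\enca M \exn \encba M$ (Lemma~\ref{l:opt_sound}) and expansion is finer than trace equivalence, the original and optimised encodings have the same traces, so it suffices to prove $M \esim N$ iff $\encba M \trincl \encba N$. The one structural novelty with respect to the equivalence case is the asymmetry of clause~(1) of the simulation (only the left term must diverge): this mirrors exactly the fact that $\zero \trincl Q$ for every $Q$, together with $\encba M \bsim \zero$ whenever $M$ diverges (Lemma~\ref{l:bot}).

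For soundness ($\Leftarrow$) I would put $\R \defi \{(M,N) \st \encba M \trincl \encba N\}$ and show $\R$ is an \enfse. If $M$ diverges we land in clause~(1) with no hypothesis on $N$, which is precisely where the preorder departs from the equivalence. Otherwise $M \converges M'$, so by operational correspondence (Proposition~\ref{p:opt_op}) $\encba{M'}$ has a single initial visible action; since this action is a trace of $\encba M$, hence of $\encba N$, Lemma~\ref{l:opt_enf} gives $N \converges N'$ with a matching first action. A case analysis on that action (an output on the continuation name, forcing $M'$ to be a value; or an output on a free value name $x$, forcing $M' = \evctxt[x\val]$) reproduces the case distinction of Proposition~\ref{l:sound}. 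The only genuinely new configuration arises when $M'$ is a variable $x$: for $\encba N$ to contain the forwarder traces of $\encba x$, the term $N$ must converge to an $\eta$-expansion of $x$, yielding clause~(5) (and symmetrically clause~(6)). To split a stuck normal form $\evctxt[x\val]$ into its value and context parts, and to strip (replicated) input prefixes, I would use trace-inclusion analogues of the decomposition Lemma~\ref{l:pref_decompose}.

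For completeness ($\Rightarrow$) I would start from an \enfse{} $\R$ with $(M,N)\in\R$ and reuse verbatim the systems $\eqcbv$ and $\eqcbvp$ of Section~\ref{s:complete}, including the two $\eta$-equations, and taking for the diverging clause the body $\bind{\til y}\enca\Omega$ (respectively $\bind{\til y,p}\zero$ in $\eqcbvp$). Rather than exhibiting both projections as solutions, I would show that $\encleft\R$ is a \emph{post-fixed} point and $\encright\R$ a \emph{pre-fixed} point of $\eqcbv$ for $\trincl$, and then invoke the preorder unique-solution machinery (Theorem~\ref{thm:usol_pre}, routed through the divergence-free system $\eqcbvp$ by Lemma~\ref{t:transf:preorders}) to obtain $\encleft\R \trincl \encright\R$, that is $\enca M \trincl \enca N$. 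On the matching clauses~(2)--(6) both projections are solutions up to $\bsim$ (as in Lemma~\ref{l:sol}), hence simultaneously pre- and post-fixed; the asymmetry appears only on clause~(1), where the post-fixed condition for $\encleft\R$ reduces to $\zero \trincl \zero$ (since $M$ diverges) while the pre-fixed condition for $\encright\R$ reads $\zero \trincl \enca N$, which holds for any $N$.

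The main obstacle is twofold. First, the decomposition Lemma~\ref{l:pref_decompose} must be re-established for $\trincl$: from $\inp a x.P \mid Q \trincl \inp a x.P' \mid Q'$ (and its replicated variant) with $a \notin \fn Q \cup \fn{Q'}$ one still deduces $Q \trincl Q'$, because the traces of $Q$ do not mention $a$ and so must be matched on the $Q'$ side rather than through the $a$-input; this is the preorder counterpart of the bisimulation decomposition used throughout Section~\ref{s:sound}. Second, and more delicate, is the extension step: because the fixed-point comparison is asymmetric, verifying that $\eqcbvp$ extends $\eqcbv$ with respect to $\trincl$ (in the sense required by Lemma~\ref{t:transf:preorders}) now demands transferring \emph{both} the pre-fixed and the post-fixed point conditions between the two systems, and checking that the syntactic solution of $\eqcbvp$ remains guarded and divergence-free so that Theorem~\ref{thm:usol_pre} applies. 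Once these are in place the two inclusions combine to give $M \esim N$ iff $\enca M \ctxpre \enca N$.
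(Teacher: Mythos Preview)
Your proposal is correct and follows essentially the same approach as the paper: soundness via the relation $\R=\{(M,N)\st\encba M\trincl\encba N\}$ shown to be an \enfse{} (with clause~(1) absorbing the case where only $M$ diverges), and completeness via the asymmetric fixed-point argument of Theorem~\ref{thm:usol_pre}, exhibiting $\encleft\R$ as a post-fixed point and $\encright\R$ as a pre-fixed point, the only non-symmetric obligation being $\enca\Omega\trincl\enca N$ on the diverging clause. The paper even notes, as you do, that $\encleft\R$ is in fact a full solution for $\bsim$ (hence trivially post-fixed for $\trincl$), so the asymmetry shows up only on the $\encright\R$ side; your identification of the two technical obstacles (trace-inclusion analogues of the decomposition lemmas, and the preorder version of the extension Lemma~\ref{t:transf:preorders}) also matches the paper's treatment.
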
 

 The structure of the proofs is similar to
 that for bisimilarity, 
 using however Theorem \ref{thm:usol_pre}. 
We discuss the main aspects of  the soundness and the completeness. 

\paragraph{Soundness for trace inclusion}
We show that 
$\enca M\trincl \enca N$ implies $M\esim N$.
The proof follows the same lines as the proof from Section \ref{s:sound}: we define the relation $\R\scdef \{(M,N)\st \encba M\trincl \encba N\}$, and show that it is an \enfse. The proof carries over similarly, using the equivalents for 
trace inclusion of Lemmas~\ref{l:opt_sound}, \ref{l:opt_enf}, \ref{l:bot} and \ref{l:opt_eta} and
Proposition~\ref{p:opt_op}. 

\paragraph{Completeness for trace inclusion}
Given an \enfse $\R$, we define a system of equations \eqcbv as in
Section \ref{s:complete}. The only notable difference in the definition 
of the equations is in the case
where $M\R N$, $M$ diverges and $N$ has an \enform. In this case, we
use the following equation instead:
\begin{equation}\label{e:tr}
X_{M,N}=\bind {\til y} \enca \Omega
\enspace.
\end{equation}
As in Section \ref{s:complete}, we define a system of
guarded equations \eqcbvp\ 
 whose syntactic solutions do not diverge. 
Equation~\reff{e:tr} is replaced with $X_{M,N}=\bind {\til y,p} \zero$.

Exploiting Lemma~\ref{t:transf:preorders}, 
we can use unique solution for preorders
 (Theorem~\ref{thm:usol_pre}) 
 {with \eqcbv instead of \eqcbvp.} 

Defining $\encleft \R$ and $\encright \R$ as previously, we need to
prove that $\encleft\R \trincl\Eq \R{\encleft\R}$ and 
$\Eq \R {\encright\R}\trincl \encright\R$. The former result is
established along the lines of the analogous result in
Section~\ref{s:complete}: indeed, $\encleft\R$ is a solution of
\eqcbv for $\bsim$, and $\treq$ is coarser than $\bsim$. 

For the latter, the only difference is due to equation \reff{e:tr},
when $M\R N$, and $M$ diverges but not $N$.
In that case, we have to prove that $\enca \Omega\trincl \enca N$, which
follows easily because 
the only trace of $\enca\Omega$ is the empty one, hence
$\enc \Omega p \trincl P$ for any $P$.

We can then derive full abstraction for contextual equivalence as a corollary.

\begin{corollary}[Full abstraction for $\ctxeqPI$] 
\label{t:faCTXpiI}
  For any $M,N$ in $\Lao$, 
  $M\enfe N$ iff $\enca M \ctxeqPI \enca N$.
\end{corollary}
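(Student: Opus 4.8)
The plan is to obtain this symmetric full abstraction result from the asymmetric one already established in Theorem~\ref{t:preorders}, by a pure symmetrisation argument. By Definition~\ref{d:ctx_pi}, $\enca M \ctxeqPI \enca N$ holds precisely when both $\enca M \ctxpre \enca N$ and $\enca N \ctxpre \enca M$ hold. Hence, once I know that $M \enfe N$ is equivalent to the conjunction of $M \esim N$ and $N \esim M$, the corollary follows by applying Theorem~\ref{t:preorders} in each direction and combining:
\begin{align*}
  M \enfe N
  &\iff M \esim N \text{ and } N \esim M \\
  &\iff \enca M \ctxpre \enca N \text{ and } \enca N \ctxpre \enca M \\
  &\iff \enca M \ctxeqPI \enca N.
\end{align*}

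The only real work is therefore the key lemma that $\enfe$ coincides with the relation $\R$ consisting of all pairs $(M,N)$ with $M \esim N$ and $N \esim M$. For the inclusion $\enfe \subseteq \R$, I would observe that $\enfe$ is symmetric and that each of its defining clauses entails the corresponding clause of $\esim$; in particular, clause~(1) of Definition~\ref{enfebsim}, which demands that \emph{both} $M$ and $N$ diverge, implies the weaker divergence clause of $\esim$, which constrains only the left-hand term. Thus $\enfe \subseteq {\esim}$, and since $\enfe$ is symmetric also $\enfe \subseteq {\esim}^{-1}$, whence $\enfe \subseteq \R$.

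For the converse $\R \subseteq \enfe$, I would show that $\R$ is an $\eta$-eager normal-form bisimulation; as $\R$ is symmetric by construction, it suffices to match clauses. Taking $(M,N) \in \R$, I consider \emph{simultaneously} the clause of $\esim$ witnessing $M \esim N$ and the clause witnessing $N \esim M$. By Proposition~\ref{p:case:analysis}, every term either diverges or has a unique eager normal form admitting a unique decomposition; this forces the two clauses to have matching shape, and combining their premises supplies exactly the symmetric side conditions required by the corresponding clause of Definition~\ref{enfebsim} (for instance, from a split one gets both $\val \esim \valp$ and $\valp \esim \val$, hence $(\val,\valp) \in \R$, and likewise $(\evctxt[z],\evctxt'[z]) \in \R$).

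The delicate point, and the main obstacle, is the divergence case, where the asymmetry of $\esim$ must be reconciled. If $M$ diverges, then $M$ has no eager normal form, so none of the clauses witnessing $N \esim M$ that require $M$ to reduce to a value, a variable, or a stuck form $\evctxt[x\val]$ can apply; the only surviving possibility is the divergence clause, which forces $N$ to diverge as well, yielding clause~(1) of Definition~\ref{enfebsim} for $(M,N)$, and symmetrically when $N$ diverges. The $\eta$-clauses~(5) and~(6) demand the same bookkeeping: the clause of $M \esim N$ and the mirror clause of $N \esim M$ together provide the conditions $(\val,z) \in \R$ and $(\evctxt[y],y) \in \R$ needed to instantiate clause~(5) of Definition~\ref{enfebsim}. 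Once $\R = \enfe$ is established, the chain of equivalences above completes the proof.
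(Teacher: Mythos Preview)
Your proof is correct and follows exactly the route the paper intends: the corollary is placed immediately after Theorem~\ref{t:preorders} with the sentence ``We can then derive full abstraction for contextual equivalence as a corollary'', and the derivation is precisely the symmetrisation you spell out, using $\ctxeqPI = {\ctxpre} \cap {\ctxpre}^{-1}$. The paper leaves the identity $\enfe = {\esim} \cap {\esim}^{-1}$ implicit; your explicit verification of it (via uniqueness of eager normal forms to force the two simulation clauses to match, including the divergence and $\eta$ cases) is the right way to make the step rigorous.
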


\ifshort
\begin{lemma}[Full abstraction for $\trincl$]\label{t:a:tr}
  For any $M,N \in \Lao$, we have: $M\esim N$ iff 
  $ \enca M \trincl \enca N$.
\end{lemma}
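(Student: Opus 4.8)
The plan is to establish the two directions of the biconditional separately, namely \emph{soundness} ($\enca M\trincl\enca N$ implies $M\esim N$) and \emph{completeness} ($M\esim N$ implies $\enca M\trincl\enca N$), following the shape of the proofs for the bisimilarity case (Propositions~\ref{l:sound} and~\ref{l:complete}) but systematically replacing the behavioural equivalence $\bsim$ by the asymmetric trace-inclusion preorder $\trincl$, and the unique-solution theorem by its preorder counterpart (Theorem~\ref{thm:usol_pre}, via Lemma~\ref{t:transf:preorders}).

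For soundness I would set $\R\scdef\{(M,N)\st\encba M\trincl\encba N\}$ and prove that $\R$ is an \enfse. The argument mirrors the proof of Proposition~\ref{l:sound}, replaying it with the trace-inclusion analogues of the operational ingredients: the analogue of Lemma~\ref{l:opt_sound} (so that I may reason with the optimised encoding $\encba\cdot$), of Proposition~\ref{p:opt_op} together with Lemma~\ref{l:opt_enf} (operational correspondence, so that a non-empty visible trace of $\encba M$ forces $M$ to reach an \enform), of Lemma~\ref{l:bot} (divergence being detected as the absence of any non-empty trace), and of Lemma~\ref{l:opt_eta} (the $\eta$-expansion analysis giving clauses (5)--(6) of the simulation). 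The genuinely new point is clause~(1): since $\trincl$ is asymmetric, from $\encba M\trincl\encba N$ and $M$ diverging I only conclude that $M$ has no visible trace, which is precisely what the first clause of the \enfse definition now demands of $M$ alone.

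For completeness, given an \enfse $\R$ I would build the system \eqcbv exactly as in Section~\ref{s:complete}, with the single modification~\reff{e:tr}: when $M\R N$ with $M$ divergent but $N$ convergent, the equation becomes $X_{M,N}=\bind{\til y}\enca\Omega$. I then introduce the optimised, divergence-free system \eqcbvp\ (with~\reff{e:tr} replaced by $X_{M,N}=\bind{\til y,p}\zero$) and check that it extends \eqcbv\ with respect to $\trincl$, so that Lemma~\ref{t:transf:preorders} applies. The crux is that $\encleft\R$ is a pre-fixed point and $\encright\R$ a post-fixed point of \eqcbv\ for $\trincl$, i.e.\ $\encleft\R\trincl\Eq\R{\encleft\R}$ and $\Eq\R{\encright\R}\trincl\encright\R$. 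The first follows from the fact (established as in Lemma~\ref{l:sol}, using Lemmas~\ref{l:beta} and~\ref{l:sol_aux}) that $\encleft\R$ is already a $\bsim$-solution, since $\bsim$ refines $\trincl$. The second follows from the same case analysis, the only new case being the divergent one, where $\enca\Omega\trincl\enca N$ holds trivially because the empty trace is the only trace of $\enca\Omega$. Applying Theorem~\ref{thm:usol_pre} through Lemma~\ref{t:transf:preorders} then yields $\encright\R\trincl\encleft\R$, whence $\enca M\trincl\enca N$ for every $(M,N)\in\R$.

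I expect the main obstacle to lie in the soundness direction, specifically in re-deriving the parallel-decomposition step (the trace-inclusion analogue of Lemma~\ref{l:pref_decompose}): because $\trincl$ is not symmetric, decomposing an inclusion between two parallel compositions into componentwise trace inclusions requires care, ensuring that any trace of the surviving component can be recovered from a trace of the whole composition without appeal to a bisimulation game. Once these linear analogues of the decomposition lemmas are in place, both directions are a routine transcription of the bisimilarity proofs with $\trincl$ in place of $\bsim$; and by Proposition~\ref{p:charac:traces} the resulting characterisation transfers verbatim to the contextual preorder $\ctxpre$, recovering Theorem~\ref{t:preorders}.
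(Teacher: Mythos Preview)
Your overall strategy is the paper's own, both for soundness and for completeness. There is, however, a genuine orientation error in your completeness argument that you should fix.

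In the paper's conventions a \emph{pre-fixed point} satisfies $\til E[\til F]\trincl\til F$ and a \emph{post-fixed point} satisfies $\til F\trincl\til E[\til F]$. The inclusions you write down are the right ones (they are exactly what the paper proves), but you have the labels swapped: $\encleft\R\trincl\Eq\R{\encleft\R}$ makes $\encleft\R$ a \emph{post}-fixed point, and $\Eq\R{\encright\R}\trincl\encright\R$ makes $\encright\R$ a \emph{pre}-fixed point. Applying Theorem~\ref{thm:usol_pre} with the correct labels yields $\encleft\R\trincl\encright\R$, not $\encright\R\trincl\encleft\R$ as you state; and it is only from $\encleft\R\trincl\encright\R$ that $\enca M\trincl\enca N$ follows. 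As written, your intermediate conclusion $\encright\R\trincl\encleft\R$ would give the wrong direction $\enca N\trincl\enca M$, so your final ``whence'' is a non-sequitur. Once you correct the pre/post labels, the argument is exactly the paper's: $\encleft\R$ is in fact a $\bsim$-solution (even under equation~\reff{e:tr}, since $M$ diverges there), hence both a pre- and post-fixed point, and $\encright\R$ is only a pre-fixed point because in the divergent case one merely has $\enca\Omega\trincl\enca N$.

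Your remark about the trace-inclusion analogue of Lemma~\ref{l:pref_decompose} is well taken; the paper does not spell this out and simply asserts that the soundness proof ``carries over similarly''. The decomposition does go through for $\trincl$: any trace of one parallel component can be lifted to a trace of the composition (the other component being input-guarded at a name not free elsewhere), so trace inclusion of the compositions entails trace inclusion of the components.
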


\iffull
\begin{corollary}[Full abstraction for $\ctxpre$]\label{t:a:ctxpre}
  For any $M,N \in \Lao$, we have: $M\esim N$ iff 
  $ \enca M \ctxpre \enca N$.
 \end{corollary}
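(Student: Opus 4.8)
The plan is to derive the corollary by chaining two equivalences: full abstraction of the encoding $\qenca$ for trace inclusion, and the characterisation of the contextual preorder as trace inclusion. That is, I would first establish $M\esim N$ iff $\enca M \trincl \enca N$, and then convert $\trincl$ into $\ctxpre$ by Proposition~\ref{p:charac:traces}.

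For the conversion step, Proposition~\ref{p:charac:traces} states that on \Intp\ processes $\ctxpre$ coincides with $\trincl$. Since $\enca M$ and $\enca N$ are abstractions (parametric on $\fv{M,N}$ together with a continuation name), I would observe that $\ctxpre$ and $\trincl$ are extended to abstractions in exactly the same way, namely by instantiating all parameters with fresh names. Hence the coincidence on processes lifts verbatim to abstractions, giving $\enca M \trincl \enca N$ iff $\enca M \ctxpre \enca N$.

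For the first equivalence, I would combine the two halves obtained for trace inclusion in this subsection. Soundness yields $\enca M \trincl \enca N \Rightarrow M\esim N$ by showing that $\{(M,N)\st \encba M \trincl \encba N\}$ is an \enfse, reusing the trace-inclusion analogues of Lemmas~\ref{l:opt_sound}, \ref{l:opt_enf}, \ref{l:bot}, \ref{l:opt_eta} and of Proposition~\ref{p:opt_op}. Completeness yields the converse through the systems \eqcbv\ and \eqcbvp: given an \enfse\ $\R$ with $M\RR N$, one checks that $\encleft\R$ is a post-fixed point and $\encright\R$ a pre-fixed point of \eqcbv\ for $\trincl$ (that is, $\encleft\R \trincl \Eq{\R}{\encleft\R}$ and $\Eq{\R}{\encright\R}\trincl \encright\R$), and then concludes $\encleft\R \trincl \encright\R$ via the preorder unique-solution machinery (Theorem~\ref{thm:usol_pre}, transported from \eqcbvp\ by the extension Lemma~\ref{t:transf:preorders}); projecting onto the pair $(M,N)$ then gives $\enca M \trincl \enca N$.

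The corollary statement itself is then immediate, so the real work sits entirely in the completeness half of the first equivalence. The main obstacle is that the symmetry exploited in the bisimilarity case of the unique-solution technique is no longer available, so one must run the asymmetric pre-fixed/post-fixed point argument of Lemmas~\ref{least} and~\ref{greatest} in tandem, and carry the divergence-freeness hypothesis through the optimised system \eqcbvp. A delicate point within completeness is the clause where one side diverges and the other does not: there one uses equation~\reff{e:tr} and must verify $\enca\Omega \trincl \enca N$, which holds because the only trace of $\enca\Omega$ is the empty one.
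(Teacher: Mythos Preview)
Your proposal is correct and follows essentially the same approach as the paper: reduce $\ctxpre$ to $\trincl$ via Proposition~\ref{p:charac:traces}, establish soundness by showing $\{(M,N)\st \encba M\trincl\encba N\}$ is an \enfse, and derive completeness from the preorder unique-solution machinery (Theorem~\ref{thm:usol_pre} and Lemma~\ref{t:transf:preorders}) applied to \eqcbv{} via \eqcbvp, with the divergent-left clause handled by equation~\reff{e:tr}. The identification of $\encleft\R$ as post-fixed point and $\encright\R$ as pre-fixed point, and the resulting conclusion $\encleft\R\trincl\encright\R$, matches the paper exactly.
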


\begin{corollary}[Full Abstraction for $\ctxeqPI$]\label{t:a:ctx}
  For any $M,N$ in $\Lao$, 
  $M\enfe N$ iff $\enca M \ctxeqPI \enca N$.
\end{corollary}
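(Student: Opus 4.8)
The plan is to obtain this corollary by taking the symmetric part (the kernel) of the corresponding preorder result, exactly as contextual equivalence is the kernel of the contextual preorder. By Definition~\ref{d:ctx_pi}, $\enca M \ctxeqPI \enca N$ holds precisely when both $\enca M \ctxpre \enca N$ and $\enca N \ctxpre \enca M$ hold, so the whole statement reduces to relating the kernel of $\ctxpre$ on encodings with the kernel of $\esim$ on terms.

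First I would invoke the full-abstraction result for the preorder, Theorem~\ref{t:preorders}, which gives $M \esim N$ iff $\enca M \ctxpre \enca N$ (and symmetrically $N \esim M$ iff $\enca N \ctxpre \enca M$). Chaining these two equivalences yields
\[
\enca M \ctxeqPI \enca N
\iff
\bigl(\enca M \ctxpre \enca N \text{ and } \enca N \ctxpre \enca M\bigr)
\iff
\bigl(M \esim N \text{ and } N \esim M\bigr).
\]
Thus it only remains to identify the kernel of $\esim$ with $\enfe$, i.e.\ to prove $\enfe = {\esim}\cap{\revesim}$, where $M\,({\esim}\cap{\revesim})\,N$ means $M\esim N$ and $N\esim M$.

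The hard part is precisely this identification. The inclusion $\enfe \subseteq {\esim}\cap{\revesim}$ is routine: any $\eta$-eager normal-form bisimulation is in particular an $\eta$-eager normal-form simulation (the clause ``both $M$ and $N$ diverge'' of Definition~\ref{enfebsim} implies the weaker clause ``$M$ diverges'', and the remaining clauses are already one-directional), and since $\enfe$ is symmetric it is contained in both $\esim$ and its converse. For the reverse inclusion I would show that ${\esim}\cap{\revesim}$ is itself an $\eta$-eager normal-form bisimulation. Assume $M \esim N$ and $N \esim M$. The linchpin is that eager reduction is deterministic: by Proposition~\ref{p:case:analysis}, a term either diverges or has a unique eager normal form admitting a unique decomposition. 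Hence the $\esim$-clause witnessing $M \esim N$ and the one witnessing $N \esim M$ must refer to matching normal-form shapes. If $M$ diverges, then $N$ cannot converge, since every non-divergent clause of $N \esim M$ would demand a matching normal form for the divergent $M$; so $N$ diverges as well and clause~(1) of Definition~\ref{enfebsim} holds. Otherwise both terms converge, and determinism forces the two directions to use compatible clauses (clause~2 paired with clause~2 when the normal form is a stuck application $\evctxt[x\val]$ with the same head $x$, clause~3 with clause~3 for abstractions, clause~4 with clause~4 for variables, and the $\eta$-clauses~5 and~6 paired with each other). In each case, combining the componentwise requirements coming from $M \esim N$ with those coming from $N \esim M$ places every pair of corresponding components — values, filled evaluation contexts, abstraction bodies — in ${\esim}\cap{\revesim}$ again, which is exactly what the bisimulation clauses of Definition~\ref{enfebsim} require. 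This yields $\enfe = {\esim}\cap{\revesim}$.

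Putting these together closes the argument: $\enca M \ctxeqPI \enca N$ iff $M \esim N$ and $N \esim M$ iff $M \enfe N$. Equivalently, one could route the entire proof through traces, using Proposition~\ref{p:charac:traces} to replace $\ctxpre,\ctxeqPI$ by $\trincl,\treq$ together with the soundness and completeness results for trace inclusion; the structure is identical, and the only genuinely new ingredient remaining is the kernel identity $\enfe = {\esim}\cap{\revesim}$, whose proof rests on the determinism of eager reduction.
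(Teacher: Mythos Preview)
Your approach is correct and is essentially the one the paper intends: the paper derives this corollary from the preorder result (Theorem~\ref{t:preorders}) together with $\ctxeqPI = \ctxpre\cap\ctxpre^{-1}$ (equivalently $\treq = \trincl\cap\incltr$ via Proposition~\ref{p:charac:traces}), treating the passage from ``$M\esim N$ and $N\esim M$'' to ``$M\enfe N$'' as immediate. You make that passage explicit by proving the kernel identity $\enfe = {\esim}\cap{\revesim}$, and your argument for it---using determinism of eager reduction and uniqueness of the normal-form decomposition (Proposition~\ref{p:case:analysis}) to force the two simulation directions to agree on a single bisimulation clause---is sound. So you are not diverging from the paper; you are supplying the detail it omits.

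One remark: an alternative, equally short, route for the $(\Leftarrow)$ direction avoids the kernel identity on the $\lambda$-side altogether. By Proposition~\ref{p:opt_op} the (optimised) encoding is deterministic, and all its derivatives are, up to $\exn$, again encoded terms; on such processes trace equivalence and bisimilarity coincide. Hence $\enca M \ctxeqPI \enca N$ implies $\enca M \bsim \enca N$, and Theorem~\ref{thm:fa} gives $M\enfe N$ directly. This is arguably what the paper has in mind when it says the corollary is immediate, but your route through $\enfe = {\esim}\cap{\revesim}$ is just as legitimate and has the merit of staying entirely on the $\lambda$-calculus side.
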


Corollaries~\ref{t:a:ctxpre} and \ref{t:a:ctx} follow immediately 
from Theorem~\ref{t:a:tr}: $\ctxpre=\trincl$ (by Theorem~\ref{t:a:tr_ctx}),
 and $\ctxeqPI=\treq=\trincl\cap 
\incltr$ (by Theorem~\ref{t:a:tr_ctx}, and using Definition~\ref{d:a:trace}).
\fi\fi





\section{Conclusion}
\label{s:concl}

In the paper we have studied the main question raised in Milner's
landmark paper on functions as $\pi$-calculus processes, which is
about the equivalence induced on $\lambda$-terms by their process 
encoding.  We have focused on call-by-value, where the problem was
still open; as behavioural equivalence on $\pi$-calculus we have 
taken contextual equivalence and  barbed congruence (the most common
`linear' and 'branching' equivalences). 

First we have shown that some expected equalities 
for open  terms fail under Milner's encoding. We have considered two
ways for overcoming  this issue: rectifying the encodings (precisely,
avoiding free outputs); restricting the target language to \alpi, 
so to
control the capabilities of exported names.
We have proved that, in both cases, the equivalence induced  is 
Eager-Tree equality, modulo $\eta$
(i.e., Lassen's \enfbsim). 

We have then introduced a preorder on these trees, and 
 derived similar full abstraction results  for them with respect to 
the contextual preorder on $\pi$-terms. 
The paper is also a test case for 
the technique of
unique solution of equations (and inequations),
which is essential in all our
 completeness proofs.

Lassen had introduced  Eager Trees as the call-by-value analogous of
Levy-Longo and B{\"o}hm Trees. 
The results in the paper confirm the claim,  on process encodings of
$\lambda$-terms: it was known that for (weak and strong) call-by-name, the  equalities
induced are  those of  Levy-Longo Trees and   B{\"o}hm Trees~\cite{xian}.

For controlling capabilities, we have used \alpi. 
Another possibility would have been to use a  type system. 
In this case however, the technique of unique solution of equations
needs to be extended to typed calculi. We leave this for future work.

We also leave for future work a thorough comparison between the technique
of unique solution of equations and  techniques  based on enhancements
of the bisimulation proof method (the ``up-to'' proof techniques),
including if and how our completeness results can be
derived using the latter techniques. (We recall that  the ``up-to''
proof techniques are used in the completeness proofs  with
respect to Levy-Longo Trees and   B{\"o}hm Trees for the
\emph{call-by-name} encodings. 
We have discussed the problems with
call-by-value in Remark~\ref{r:upto}.)  In any case, even if 
other solutions existed, for this specific problem the unique solution 
technique appears to provide an elegant and natural framework to 
carry out the proofs.

For our encodings  we have used the polyadic $\pi$-calculus; Milner's
original paper \cite{milner:inria-00075405} used the monadic calculus (the polyadic $\pi$-calculus
makes the encoding easier to read; it had not been introduced at the
time of \cite{milner:inria-00075405}). We believe that polyadicity
does not affect the results in the paper (the possibility of
autoconcurrency breaks full abstraction of the 
 encoding of the polyadic
$\pi$-calculus into the monadic one, but autoconcurrency does not
appear in the encoding of $\lambda$-terms).

In the call-by-value strategy we have followed, the function is
reduced before the argument in an application. Our results can be
adapted to the case in which the argument runs first, changing the
definition of evaluation contexts. The parallel call-by-value, in
which function and argument can run in parallel (considered in
\cite{encodingsmilner}), appears more delicate, as we cannot rely on the usual
notion of evaluation context.


 Interpretations of $\lambda$-calculi into 
 $\pi$-calculi 
appear  related to
game
semantics~\cite{BHYseqpi,DBLP:conf/fpca/HylandO95,DBLP:journals/tcs/HondaY99}.
In particular, for untyped call-by-name they both allow us to derive 
 B{\"o}hm Trees and  Levy-Longo 
Trees~\cite{DBLP:journals/tcs/KerNO03,DBLP:journals/tcs/OngG04}.
In this respect, it would be interesting to see whether the
relationship between $\pi$-calculus and Eager Trees studied  in 
this paper could help to establish similar relationships in game
semantics. 



\bibliographystyle{plainurl}
\bibliography{efap.bib}

\clearpage
\appendix
\section{List of symbols for behavioural relations}
\label{a:tab}
The following table summarises the notation used for the 
equivalences and preorders  in the paper.

\[
\begin{array}{lll}
\enf&\text{eager normal-form bisimilarity}&\text{ (Definition \ref{enfbsim}
)}\\
\enfe&\text{$\eta$-eager normal-form bisimilarity}&\text{ (Definition \ref{enfebsim}
)}\\
\esim&\text{$\eta$-eager normal-form similarity}&\text{ (Section \ref{ss:fa_preorders}
)}\\
\wbc\LL&\text{barbed congruence in   $\LL$}&\text{ (Definition \ref{d:bc}
)}\\
\bsim &\text{(weak) bisimilarity}&\text{ (Definition \ref{d:bisimulation}
)}\\
\lexn& \text{expansion}&\text{(Definition~\ref{d:expan})}
\\
\ctxeq &\text{contextual equivalence in   $\Lambda$}&\text{ (Definition~\ref{d:ctxeq})}\\
\ctxeqLL\LL &\text{contextual equivalence in   $\LL$}&\text{ (Section~\ref{ss:preorders})}\\
\ctxpre &\text{contextual preorder in   $\Intp$}&\text{ (Section~\ref{ss:preorders})}\\
\treq&\text{trace equivalence}&\text{ (Section 
\ref{ss:preorders})}\\
\trincl&\text{trace inclusion}&\text{ (Section 
\ref{ss:preorders})}
\end{array}
 \]
where $\LL$ is supposed to be a subcalculus of $\pi$; in the paper we have considered
 \Intp\ and \alpi.




\section{Properties of Milner's encoding (Section~\ref{s:enc:cbv})}
\label{a:milner}

\nonlaw*

\begin{proof}
For simplicity, we give the proof 
when $\val=y$, and for encoding $\qencm$. 
The same can be shown for an arbitrary value $\val$ and for 
the encoding $\qencm'$, through similar calculations. 
We use algebraic laws of the equivalence $\wbc\pi$, or of its associated 
proof techniques, to carry out the calculations (cf.~\cite{SW01a}).

\begin{align*}
  \encma {xy} p & = \new q (\out q x| \inp q \cv.\new r(\out  r y| \inp r w.\out \cv {{w,p}}))\\
    &\wbc\pi \new r(\out r y|\inp r w.\out x {{w,p}})\\
  & \wbc\pi\out x {y,p}\\
  \encma {I(xy)} p&\wbc\pi  \new q (\encma I q | \inp q \cv. \new r(\encma {xy} r|\inp r w.\out \cv {w,p})) \\
  &\wbc\pi \new q (\outb q \cv.!\inp u {z,q'}.\out {q'} z| \inp q \cv.\new r(\encma{xy}r|\inp r w.\out\cv{w,p}))\\
      &\wbc\pi \new \cv(!\inp \cv {z,q}.\out q z| \new r (\out x {y,r}|\inp r w.\out \cv {w,p}))\\
  &\wbc\pi \new r(\out x {y,r} |\inp r w.\new u (!\inp u{z,q}.\out q z|\out u {w,p} ))\\
    &\wbc\pi \new r (\out x {y,r}|\inp r w.\out p w)\\
  &=\new q( \out x {y,q} | \inp q z.\out p z  ) \\
&=\new q (\out x {y,q}. \fwd q p)
\end{align*}
\end{proof}



\newcommand{\lhs}{\ensuremath{\mathsf{lhs}}}
\newcommand{\rhs}{\ensuremath{\mathsf{rhs}}}
\newcommand{\wires}{\ensuremath{\mathsf{W}}}

\section{Soundness proof (Section~\ref{s:enc:pii})}
\label{a:soundness:pii}

\subsection{Properties of the optimised encoding}\label{a:calc:encpiI}

\linkcomp*

\begin{proof}
We first show the two following laws:
\begin{align*}
\new q (\fwd p q|\fwd q r)&\sim \inp p x. \new q(\outb q y.\fwd y x | \fwd q r)\\
&\exn \inp p x. \new {y}(\fwd y x| \outb r z. \fwd z y)\\
&\sim \inp p x. \outb r z. \new y (\fwd z y| \fwd y x)
\end{align*}
and 
\begin{align*}
\new y (\fwd xy|\fwd yz)&\sim !\inp x {p,x'}. \new y(\outb y {q,y'}.(\fwd {y'} {x'} |\fwd q p) | \inp y {q,y'} .\outb z {r,z'}.(\fwd {z'}{y'}|\fwd r q))\\
&\exn !\inp x {p,x'}. \new {q,y'}.(\fwd {y'} {x'}| \fwd qp | \outb z {r,z'}.(\fwd {z'}{y'}|\fwd r q))\\
&\sim !\inp x {p,x'}.\outb z {r,z'}. (\new {q} (\fwd rq| \fwd qp)|\new {y'}(\fwd{z'}{y'}|\fwd {y'}{x'}))
\end{align*}

We define a relation $\R$ as the relation that contains, for all continuation names $p, r$ and 
for all value 
names $x,z$, 
 the following pairs:
\begin{enumerate}
\item $\new q(\fwd p q |\fwd q r)$ and $\fwd p r$\\
$\new y(\fwd xy|\fwd yz)$ and $\fwd x z$
\item $\outb r z. \new y (\fwd z y| \fwd y x)$ and $\outb r z . \fwd z x$\\
$\outb z {r,z'}. (\new {q} (\fwd rq| \fwd qp)|\new {y'}(\fwd{z'}{y'}|\fwd {y'}{x'}))$ and 
$\outb z {r,z'}.(\fwd rp|\fwd {z'}{x'})$\\ (for all non-continuation names $x$ or $x',z'$)
\end{enumerate}

We show that this is an expansion up to expansion and contexts, using the previous laws (each of those processes has only one possible action). 
\end{proof}

In the proofs below, we shall use the following properties about the
optimised encoding, without referring explicitly to this lemma. 
\begin{lemma}~
  \begin{itemize}
  \item For any $M, p$, $\encb Mp$ cannot perform any input at $p$.
  \item For any $\val, y$, the only transition that $\encbv{\val}y$
    can do is an input at $y$.
  \item For any $M, x, p$, if $x\in\fvars M$, then $x$ appears in
    $\encb Mp$ only in output subject position.
  \end{itemize}
\end{lemma}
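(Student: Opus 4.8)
The plan is to prove all three items by a single simultaneous structural induction on $M$ and on $\val$, running through the clauses of the optimised encoding (Figure~\ref{f:opt_encod}); the second item turns out to be immediate, whereas the first and third are cleanest once strengthened into invariants about the \emph{syntactic occurrences} of names.

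First I would dispose of the second item directly from the definition of $\encbva{}$: by \nameDS{opt-abs}, $\encbv {\abs x M} y = {!}\inp y {x,q}.\encb M q$, and by \nameDS{opt-var}, $\encbv x y = \fwd y x$, which, $x$ and $y$ being value names, is a replicated input at $y$ followed by an output at $x$. In either case $\encbv \val y$ is a replicated-input process whose subject is $y$, so its only transition is an input at $y$.

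For the first and third items I would establish two stronger invariants, by the same induction: (i) the parameter $p$ occurs in $\encb M p$ only in output-subject position; and (ii) every free variable of $M$ (resp.\ of $\val$) occurs in $\encb M p$ (resp.\ in $\encbv \val y$) only in output-subject position. Two observations make these invariants propagate smoothly. Since the target calculus is \Intp, every output is bound, so the objects of outputs are always fresh bound names; consequently a free name can occur only as the \emph{subject} of a prefix, and it is enough to forbid input subjects. Moreover the rectified encoding is engineered so that both the tracked continuation and the free value variables always land in the second argument of a link, i.e.\ in output-subject position: a free variable appears either directly as an output subject, as the $x$ in $\encb {x\val}p = \outb x {z,q}.(\encbv \val z | \fwd q p)$ of clause \nameDS{var-val}, or as the second component $z$ of a value link $\fwd w z$ (which begins with a replicated input at its \emph{first} argument $w$ and then outputs at subject $z$); symmetrically, $p$ is either emitted directly by a bound output $\outb p y$ in \nameDS{opt-val}, or routed through a continuation link $\fwd {r'} p$ (which inputs at $r'$ and then outputs at subject $p$). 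Running the induction through \nameDS{abs-val}, \nameDS{val-app}, \nameDS{app-val} and \nameDS{app}, every remaining occurrence of $p$ and of the free value variables is either bound by a restriction or an input prefix — the Barendregt convention ensuring these bound names are fresh, hence distinct from the free variables — or appears in a link in the harmless output-subject slot, which closes the induction.

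The first item then follows from invariant (i): since $p$ never occupies an input-subject position, is consumed by its unique output, and the continuation links forward only to fresh names (never reintroducing $p$ as a subject), no input action with subject $p$ is ever enabled, a fortiori none immediately. The main obstacle is the bookkeeping around the link processes: one must keep the continuation links $\fwd q p$, which \emph{do} start with an input (at $q$), distinct from the value links $\fwd w z$, which start with a replicated input (at $w$), and verify in each clause that it is precisely the second argument of every link — the one sitting in output-subject position — that carries either the tracked continuation $p$ or a free value variable, and never the first. This placement is exactly the property that motivated passing from Milner's encoding to the \Intp{} variant, so the induction is really checking that the design goal has been met.
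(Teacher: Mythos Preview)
Your proposal is correct and follows the same approach as the paper, which simply states that the properties ``follow by an easy induction.'' Your strengthening of the first item to the syntactic invariant that $p$ occurs only in output-subject position is the natural way to make the induction go through, and your detailed case analysis of the clauses in Figure~\ref{f:opt_encod}, together with the tracking of the link arguments, is precisely what the paper leaves implicit.
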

The
properties above follow by an easy induction.
The last one allows us to use the distributivity properties
of private replications~\cite{SW01a} when reasoning algebraically about
the encoding of $\lambda$-terms.

\propenc*

\begin{proof}
\newcommand{\IHone}{(IH1)}
\newcommand{\IHtwo}{(IH2)}
\newcommand{\IHthree}{(IH3)}
\newcommand{\Ltwo}{(L2)}
\newcommand{\Lone}{(L1)}
\newcommand{\Lthree}{(L3)}

  We establish the conjunction of the three following properties:
  \begin{enumerate}
  \item[\Lone]\label{conj1} $\new x (\encb M p| \fwd x y )\exn \encb {M\subst x y} p$,
    and, if $M$ is equal to some value $\val$,
    we have $\new x(\encbv{\val}{y_1} | \fwd xy)\exn
    \encbv{\val\subst xy}{y_1}$, for any $y_1$. 
\item[\Ltwo]\label{conj2} $\new p (\encb M p| \fwd p q )\exn \encb M q$.  
\item[\Lthree]\label{conj3} If $M$ is equal to some value $\val$, $\new {y}(\encbv \val y |\fwd x y)\exn \encbv \val x$. 
  \end{enumerate}
Indeed, as we show below, there are dependencies between these three properties, which
prevent us from treating them separately.

We reason by induction over $M$, 
and introduce some notations.
In each case, we use \IHone{} to refer
to property~\Lone{} of the induction hypothesis, and similarly for
\IHtwo{} and \IHthree.

To reason about \Lone, we  set $\lhs = \new x(\encb Mp | \fwd xy)$ and $\rhs =
\encb{M\subst xy}p$.

\bigskip
\noindent
\textbf{First case: $M=z$.}

\textbf{\Lone.}
Then $\lhs =\new x(\encb Mp | \fwd xy)
      = \new x(\bout p{y_1}.\fwd{y_1}z | \fwd xy)$. There are two sub-cases.
      \begin{itemize}
      \item $M=z\neq x$. Then $\lhs 
        \sim \bout p{y_1}.\fwd{y_1}z=\rhs$ because $\new x\fwd
        xy\sim\zero$.

              Since $M$ is a value, we must also check that $\new x(\encbv
        z{y_1} | \fwd xy)\exn \encbv{z\subst
          xy}{y_1}$. Lemma~\ref{l:fwd} allows us to show this.

      \item $M=x$. Then
        \begin{eqnarray}
        \lhs& \sim& \bout p{y_1}.\new x(\fwd{y_1}x |
                    \fwd xy) \label{step:1}\\
          &\exn& \bout p{y_1}.\fwd{y_1}y = \rhs\label{step:2}
        \end{eqnarray}
        \rref{step:1} holds because the only transition \lhs{} can do
        is the output at $p$. \rref{step:2} follows from
        Lemma~\ref{l:fwd}.

        Again, since $x$ is a value, we have to show the corresponding
        property, which amounts to show $\new x(\fwd{y_1}x | \fwd
        xy)\exn \fwd{y_1}y$, which is given by Lemma~\ref{l:fwd}.
      \end{itemize}

      \textbf{\Ltwo.}
      We write in general, for any value $\val$ (the following reasoning is 
      also used below, in the case where $M$ is an abstraction):
          \begin{eqnarray}
      \new p(\encb{\val}p | \fwd pq)
      & = &
            \new p(\bout py.\encbv\val y |\fwd pq) \nonumber\\
      &\exn & \new y(\encbv\val y | \bout q{y'}.\fwd{y'}y)\label{step12}\\
      &\sim & \bout q{y'}.\new y(\encbv\val y|\fwd {y'}y)\label{step13}
    \end{eqnarray}
    Step~\rref{step12} holds because the first process
    deterministically reduces to the second, and step~\rref{step13}
    holds because the only action the process can do is the output at
$q$.

Now since $\val=z$, we have  $\encbv zy = \fwd
yz$, and we obtain $\bout q{y'}.\new y(\fwd yz|\fwd{y'}y)$, a process
that expands $\enc zq$ by Lemma~\ref{l:fwd}.

\textbf{\Lthree.} We check that we do have $\new y(\fwd yz | \fwd xy) \exn\fwd xz$
by Lemma~\ref{l:fwd}.

\bigskip

\noindent
\textbf{Second case: $M=\lambda z.M'$.}

\textbf{\Lone.} We distinguish two cases.
      \begin{itemize}
      \item If $z\neq x$, then we can write
        \begin{eqnarray}
        \lhs &= &\new x(\bout p{y_1}.!y_1(z,q).\encb{M'}q | \fwd
     xy)\nonumber\\
             &\sim& \bout p{y_1}.!y_1(z,q).\new x(\encb{M'}q | \fwd xy)
                    \label{step:3}\\
          &\exn& \rhs\label{step:4}
        \end{eqnarray}
        Step~\rref{step:3} holds because $x$ does not occur in the
        two prefixes, and step~\rref{step:4} holds
        by \IHone.

        Since $M$ is a value, we have to prove $\new
        x(\encbv{\lambda z.M'}{y_1}|\fwd xy)\exn\encbv{\lambda
          z.{M'\subst xy}}{y_1}$. This follows by \IHone, along the
        lines of the above proof.
   \item If $M=\lambda x.M'$, then $x\notin\fnames{\encb Mp}$ and we
       can observe that:  first, the link $\fwd xy$ together with the
       restriction on $x$ can be erased up to $\sim$; second, $M\subst
       xy = M$. We can thus conclude.

       Again, $M$ is a value, so we need to prove the corresponding
       property, which is done as in the proof above.
      \end{itemize}

      \textbf{\Ltwo.} We know from the first case in the induction
      (step~\rref{step13}) 
      that
      \begin{eqnarray}
      \new p(\encb{\lambda z.M'}p | \fwd pq)
      &\exn & \bout q{y'}.\new y(\encbv{\lambda z.M'} y|\fwd
              {y'}y)\nonumber\\
        &=&
\bout q{y'}.\new
   y(!y(z,q').\encb{M'}{q'} | \fwd{y'}y)
            \nonumber\\
        &\sim&
         \bout q{y'}.!y'(z_1,q_1).\new y(!y(z,q').\encb{M'}{q'}
         \nonumber\\
        &&\qquad\qquad
        | \bout{y}{z,q'}.(\fwd z{z_1} | \fwd {q'}{q_1})) 
           \label{step14}\\
        &\exn&
               \bout q{y'}.!y'(z_1,q_1).(\new y,z,q')(!y(z,q').\encb{M'}{q'}
         \nonumber\\
        &&\qquad\qquad
           | \encb{M'}{q'} | \fwd z{z_1} | \fwd{q'}{q_1}) 
           \label{step15}\\
        &\sim&
               \bout q{y'}.!y'(z_1,q_1).(\new z,q')(\encb{M'}{q'}
               | \fwd z{z_1} | \fwd{q'}{q_1}) 
               \label{step16}\\
        &\exn&
               \bout q{y'}.!y'(z_1,q_1).\encb{M'\subst z{z_1}}{q_1}
               \label{step17}\\
        &=& \encb{\lambda z.M'}q
      \end{eqnarray}

      Step~\rref{step14} holds because the input at $y'$ is the only
      transition that can be performed after the bound output at $q$.
      Step~\rref{step15} holds by performing a deterministic
      communication on $y$.
      Step~\rref{step16} simply consists in garbage-collecting the
      input at $y$.
      For step~\rref{step17}, we use \IHone{} and \IHtwo{}| note that
      \IHtwo{} is necessary here to establish \IHone.
      
      \textbf{\Lthree.}
      We write
      \begin{eqnarray}
        \new y(!y(z,q).\encb{M'}q | \fwd xy)
        &\sim&
               !x(z',q').\new y(!y(z,q).\encb{M'}q
               \nonumber\\&&\qquad\qquad
                             | \bout
               y{z,q}.(\fwd z{z'}|\fwd q{q'}))
                             \nonumber\\
        &\exn&
               !x(z',q').(\new z,q)(\encb{M'}q | \fwd z{z'}|\fwd
        q{q'})
                             \nonumber\\
        &\exn&
               !x(z',q').\encb{M'\subst z{z'}}{q'}
                             \nonumber
      \end{eqnarray}

      The reasoning steps are like in the proof above: expand the
      behaviour of a link process, perform a deterministic
      communication, and rely on \IHone{} and \IHtwo{} to get rid of
      the forwarders.
      We note that \IHone{} and \IHtwo{} are used to prove \Lthree{}
      in this case.
\bigskip

\noindent\textbf{Third case: $M$ is an application.}

\noindent
\textbf{\Lthree.} We do not have to consider \Lthree{} in this case, since $M$ is not a value.

\medskip

    There are 5 sub-cases,
    according to the definition of the optimised encoding of
    Figure~\ref{f:opt_encod}.

We let \wires{} stand
    for the process $\bout {y_1}{w',r'}.(\fwd{w'}w | \fwd{r'}p)$, which is
    used in four of the clauses in the encoding for application.

    \medskip

    \noindent
    \textbf{\Ltwo.}
    To prove \Ltwo, we reason in the same way in four sub-cases, namely
    all except $M=z_1\val$: in these sub-cases the only occurrence of
    $p$ in $\encb Mp$ is in the sub-process \wires. We reason modulo
    strong bisimilarity ($\sim$) to bring the forwarder $\fwd pq$
    close to that occurrence, yielding a subterm of the form
    $\bout {y_1}{w',r'}.(\fwd{w'}w | \new p(\fwd{r'}p|\fwd pq))$. We
    use Lemma~\ref{l:fwd} to deduce that this process expands
    $\bout {y_1}{w',r'}.(\fwd{w'}w | \fwd{r'}q)$, which allows us to
    establish \Ltwo.

    Similarly, in the last case ($M=z_1\val$), we write
        $$\new p(\bout{z_1}{z,q'}.( \encbv{\val}z | \fwd {q'}p) |\fwd pq)
        ~\sim~
                 \bout{z_1}{z,q'}.( \encbv{\val}z | \new p(\fwd{q'}p
                 |\fwd pq)),$$
        and we conclude using Lemma~\ref{l:fwd}.

    \medskip

          \noindent\textbf{\Lone.}
    We analyse the 5 cases corresponding to the optimised encoding of application.
        \begin{itemize}
        \item $M=M'N'$, and none of $M'$ and $N'$ are values.

          Then
      \begin{eqnarray}
        \lhs
        & = &
              \new x(\new q(\encb{M'}q | q(y_1).\new r(\encb{N'}r |
              r(w).\wires)) | \fwd xy)
              \nonumber\\
        &\sim&
               \new q(\new x(\encb{M'}q | \fwd xy) | q(y_1).\new r(\new
               x(\encb{N'}r | \fwd xy) | r(w).\wires))
               \label{step:5}\\
        &\exn& \new q(\encb{M'\subst xy}q | q(y_1).\new r(\encb{N'\subst
               xy}r | r(w).\wires))\label{step:6} = \rhs
      \end{eqnarray}
      Step~\rref{step:5} holds by distributivity properties of private
      replications, and step~\rref{step:6} holds by applying 
      \IHone{} twice.


    \item $M=M'\val$.
      Then
      \begin{eqnarray}
          \lhs& = &
                    \new x(\new q(\encb{M'}q | q(y_1).\new w(\encbv{\val}w
                    | \wires)) | \fwd xy)
                    \nonumber\\
          &\sim& \new q(\new x(\encb{M'}q | \fwd xy) | q(y_1).\new
                 w(\new x(\encbv{\val}w | \fwd xy) | \wires)
                 \label{step:7}\\
        &\exn& \rhs\label{step:8}
      \end{eqnarray}
      Step~\rref{step:7} is proved using the distributivity properties
      of private replications; step~\rref{step:8} follows by \IHone.

        
    \item $M=\val M'$. This case is proved using the same kind of
      reasoning as the previous one.
      
    \item $M=(\lambda z.M')\val$.
      Then
      \begin{eqnarray}
        \lhs &=&
                 \new x(\new y_1,w(\encbv{\lambda z.M'}{y_1} |
                 \encbv{\val}w | \wires) | \fwd xy)
                 \nonumber\\
             &\sim&
                    \new y_1,w(\new x(\encbv{\lambda z.M'}{y_1}|\fwd
                 xy)
                    | \new x(\encbv{\val}w|\fwd xy) | \wires)
                    \nonumber\\
        &\exn &\rhs\nonumber
      \end{eqnarray}
      Here again, we distribute the forwarder and then apply \IHone{} twice (for the encoding
      $\encbv{\cdot}{\cdot}$).

    \item $M=z_1\val$.
      We have
      \begin{eqnarray}
        \lhs&=&
                \new x(\bout{z_1}{z,q}.( \encbv{\val}z | \fwd qp) |
                \fwd xy)
                \nonumber
      \end{eqnarray}
      We distinguish two cases:
      \begin{itemize}
      \item if $z_1\neq x$, $z_1\subst xy=z_1$, and
        \begin{eqnarray}
        \lhs& \sim&
                    \bout{z_1}{z,q}.\new x(\encbv{\val}z | \fwd xy | \fwd qp)
                    \nonumber\\
          &\exn& \bout{z_1}{z,q}.(\encbv{\val\subst xy}z | \fwd qp) =
             \rhs
                 \nonumber
        \end{eqnarray}
        Above, we apply \IHone{} for $\encbv{\val}z$.
      \item if $z_1=x$, then the only transition \lhs{} can make is a
        communication at $x$, so
        \begin{eqnarray}
          \lhs &\exn &
                       (\new x,q,z)(\encbv{\val}z | \fwd qp | \fwd xy |
                       \bout{y}{w_1,q_1}.(\fwd{q_1}q | \fwd{w_1}z))
                       \nonumber\\
          &\sim& \bout y{w_1,q_1}.\new x(\new z(\encbv{\val}z | \fwd {w_1}z) |
                 \new q(\fwd{q_1}q | \fwd qp) | \fwd xy)
                 \label{step:20}
        \end{eqnarray}
        Step~\rref{step:20} holds because the only transition the
        process can make is the bound output at $y$.

        We then use Lemma~\ref{l:fwd} to contract the forwarders,
        yielding $\fwd{q_1}p$; we use \IHthree{} to erase the
        forwarder $\fwd{w_1}z$, and we use \IHone{} to replace $\new
        x(\encbv\val{w_1} | \fwd xy)$ with $\encbv{\val\subst
          xy}{w_1}$. This finally yields \rhs.

        We remark here that we use \IHthree{} to show \Lone.
      \end{itemize}
    \end{itemize}
\end{proof}

\optsound*

\begin{proof}
We reason by induction over $M$.

\noindent
\textbf{Case $M=x$.} By definition, $\encb M p =\enc M p$.

\noindent
\textbf{Case $M=\abs x N$.} Assuming $\encb N q \exn\enc N q$, we have, by definition 
\begin{align*}
\encb M p &=\outb p y.!\inp y {x,q}.\encb N q\\
&\exn \outb p y .!\inp y{x,q}.\enc N q ~ = \enc M p
\end{align*}

\noindent
\textbf{Case $M=M_1 M_2$.} We have by definition
  \begin{eqnarray*}
    \enc{M_1 M_2}p &=& \new q(\enc{M_1}q | q(y).\new r(\enc{M_2}r |
                       r(w).\wires))
    \\
    && \qquad\mbox{with }\wires = \bout y{w',p'}.(\fwd{w'}w | \fwd{p'}p)
  \end{eqnarray*}
We distinguish 5 cases, according to the definition of the optimised
encoding in Figure~\ref{f:opt_encod}.
\begin{itemize}
\item $M_1=x, M_2=\val$.
\begin{eqnarray}
  \enc {x\val} p
    &\exn& \new q(\bout qy.\fwd yx|\inp q y.\new r(\encb \val r |\inp
           r w.\wires))
           \label{step:xV1}\\
    &\exn& \new {y,w}(\fwd y x|\encbv \val w | \bout
           y{w',p'}.(\fwd{w'}w |\fwd{p'}p))
           \nonumber\\
&\exn& \new{w,w'}(\outb x {z,q}.(\fwd z{w'}|\fwd q{p'})|\encbv \val
       w|\fwd{w'}w|\fwd{p'}p)
  \nonumber\\
&\sim & \outb x {z,q}.\new{w,w'}(\encbv \val w|\fwd
        z{w'}|\fwd{w'}w|\fwd q{p'}|\fwd{p'}p)
  \label{step22}\\
    &\exn& \outb x {z,q}.\new w(\encbv \val w |\fwd z w|\fwd q p)
  \label{step23}\\
    &\exn& \outb x {z,q}.(\encbv \val z|\fwd q p) ~=~ \encb{x\val}p
           \label{step24}
\end{eqnarray}
Step~\rref{step:xV1} follows by definition (for the encoding of $x$), and by induction (for
the encoding of $\val$).
The two following $\exn$ steps are derived by performing deterministic
$\tau$ transitions.
We then remark that the only action that can be performed is a bound
output at $x$ (step~\rref{step22}), and contract fowarders using
Lemma~\ref{l:fwd} (step~\rref{step23}).
Finally, we use Lemma~\ref{l:fwd:optim} in step~\rref{step24}.
\item Case $M=(\abs x N) \val$.
\begin{eqnarray}
  \enc {(\abs x N )\val} p
  &=& \new q (\outb q y.!\inp y{x,p'}.\enc N{p'}
      |\inp q y.\new r(\enc
      \val r|
    \inp r w .\wires)))\nonumber\\ 
  &\exn& \new q (\outb q y.!\inp y{x,p'}.\enc N{p'}\nonumber\\
  &&\qquad |\inp q y.
     \new r(\outb r w.\encbv \val w|\inp r w .\wires))
  \label{step:beta1}\\
  &\exn& \new {(y,w)}(!\inp y {x,p'}.\enc N {p'}|\encbv \val w| \wires)
  \label{step:beta2}\\
  &\exn& \new {(y,w)}(!\inp y {x,p'}.\encb N {p'}|\encbv \val w| \wires)
  \label{step:beta3}\\
&=& \encb M p\nonumber
\end{eqnarray}
Steps~\rref{step:beta1} and~\rref{step:beta3} follow by
induction. Step~\rref{step:beta2} follows by performing two
deterministic $\tau$-transitions and garbage-collecting the
restrictions on $q$ and $r$.
\item Case $M_1=\val, M_2=N$.
\begin{eqnarray*}
  \enc{\val N} p
&\exn&\new q(\outb q y. \encbv \val y |\inp q y.\new r (\encb N r|\inp
       r w.\wires ))\\
    &\exn &\new y(\encbv \val y |\new r (\encb N r|\inp r w.\wires))
~ =\encb {\val N}p
\end{eqnarray*}
Again, we use the inductive hypothesis, and perform a
deterministic $\tau$-transition on $q$.
\item Case $M_1=N, M_2=\val$.
\begin{eqnarray*}
  \enc{N \val} p
&\exn&\new q(\encb N q |\inp q y.\new r (\outb r w.\encbv\val r|\inp r w.\wires))\\
&\exn& \new q(\encb N q |\inp q y.\new w (\encbv\val w|\wires))
~=\encb { N\val}p
\end{eqnarray*}
Again, we first use the inductive hypothesis, then contract a
deterministic communication on $r$.
\item Finally, if neither $M_1$ nor $M_2$ is a value, the two
  encodings coincide, and the property is immediate.
\end{itemize}
\end{proof}

\subsection{Operational Correspondence and Soundness}\label{a:soundness}

The following lemma is the central property we need to derive the validity
of $\betav$-reduction. 

\begin{lemma}\label{l:subst:value:optim}
  $\new x(\encb Mp |\encbv\val x)\exn \encb {M\subst x \val} p$.
\end{lemma}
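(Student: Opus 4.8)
The lemma states: $\new x(\encb Mp |\encbv\val x)\exn \encb {M\subst x \val} p$.

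This is about the optimized encoding into Internal π. We have a process $\encb M p$ (the encoding of term $M$ at continuation $p$), and we put it in parallel with $\encbv \val x$ (the value-encoding of $\val$ at name $x$), restricting $x$. The claim is that this expands (is "faster than" in terms of internal steps, but behaviorally equivalent to) the encoding of $M$ with $x$ substituted by $\val$.

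**What tools do I have?**

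Let me look at the key lemmas available:

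1. **Lemma \ref{l:fwd} (linkcomp)**: Links compose. $\new q (\fwd p q|\fwd q r)\exn \fwd pr$ for continuation names, and $\new y (\fwd xy|\fwd yz)\exn \fwd xz$ for value names.

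2. **Lemma \ref{l:fwd:optim} (propenc)**:
   - (1) $\new x (\encb M p| \fwd x y )\exn \encb {M\subst x y} p$ — links behave as substitutions (substituting a variable $x$ by another variable $y$)
   - (2) $\new p (\encb M p| \fwd p q )\exn \encb M q$ — continuation link forwarding
   - (3) $\new {y}(\encbv \val y |\fwd x y)\exn \encbv \val x$ — value forwarding

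So Lemma \ref{l:fwd:optim}(1) gives us substitution of a *variable* $y$ for $x$. But the current lemma asks for substitution of a *value* $\val$ for $x$.

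The key difference: In \ref{l:fwd:optim}(1), we have $\fwd x y$ (a link forwarding from $x$ to value name $y$). Here we have $\encbv \val x$ (the value encoding at $x$).

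**The relationship between $\encbv \val x$ and links:**

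Note from \ref{l:fwd:optim}(3): $\new {y}(\encbv \val y |\fwd x y)\exn \encbv \val x$. This says a value encoding composed with a forwarding link equals the value encoding relocated.

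Also, recall that $\encbv x y = \fwd y x$ (the value encoding of a *variable* $x$ at name $y$ is exactly the link $\fwd y x$). This is rule \nameDS{opt-var}: $\encbva x \defi \bind y \fwd y x$.

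So when $\val = x'$ is itself a variable, $\encbv {x'} x = \fwd x {x'}$, and the statement becomes $\new x(\encb M p | \fwd x {x'}) \exn \encb{M\subst x {x'}} p$, which is *exactly* Lemma \ref{l:fwd:optim}(1). So the variable case is already handled.

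**The general value case:**

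When $\val = \abs z N$ is an abstraction, we have $\encbv {\abs z N} x = \bind x \, !\inp x {z,q}.\encb N q$ (from \nameDS{opt-abs}), i.e., $!\inp x {z,q}.\encb N q$.

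So $\encbv \val x$ is a replicated input at $x$. And since $x$ appears in $\encb M p$ only in output subject position (by the auxiliary lemma stated right before propenc), the replicated input $\encbv \val x$ can serve every output request on $x$ made by $\encb M p$.

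**My proof plan:**

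My plan is to prove this by structural induction on $M$, following the same architecture as the proof of Lemma \ref{l:fwd:optim} (propenc), which proves the analogous statement for *variable* substitution via links. The key conceptual point is that $\encbv \val x$ behaves, with respect to $\encb M p$, like a "generalized link" carrying the value $\val$: each output at $x$ in $\encb M p$ triggers a synchronization with the replicated input $!\inp x{z,q}.\encb N q$ (when $\val=\abs z N$), installing a fresh copy of $\encb N q$ with appropriate forwarders on the arguments. First I would dispatch the variable case $\val = x'$: here $\encbv{x'}x = \fwd x {x'}$ by rule \nameDS{opt-var}, so the statement reduces verbatim to Lemma \ref{l:fwd:optim}(1), which is already proved.

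For the abstraction case $\val = \abs z N$, I would strengthen the induction hypothesis, exactly as was done in propenc, into a conjunction of three statements, to break the circular dependencies between the encoding of terms $\encb{\cdot}{\cdot}$ and of values $\encbv{\cdot}{\cdot}$: namely (L1) the main statement $\new x(\encb Mp | \encbv\val x)\exn \encb{M\subst x\val}p$, together with (L2) a continuation-forwarding statement $\new p(\encb Mp|\fwd pq)\exn\encb Mq$ (which is just \ref{l:fwd:optim}(2), already available), and (L3) the value-level statement $\new x(\encbv\valb y|\encbv\val x)\exn\encbv{\valb\subst x\val}y$ whenever $M$ equals a value $\valb$. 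The induction proceeds by case analysis on $M$ following the clauses of the optimized encoding in Figure~\ref{f:opt_encod}.

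The heart of the argument, and the main obstacle, is the base case $M = x\val'$ where the head variable is exactly $x$ being substituted — this is the clause \nameDS{var-val}, where $\encb{x\val'}p = \outb x{z,q}.(\encbv{\val'}z|\fwd q p)$. Here the leading output at $x$ must synchronize with the replicated input $\encbv\val x = \,!\inp x{z,q'}.\encb N{q'}$ (writing $\val=\abs z N$), and I would proceed exactly as in the corresponding case of propenc: perform the deterministic communication at $x$, which installs a fresh copy of $\encb N{q'}$ together with forwarders $\fwd {z'}z$ (routing the argument) and $\fwd{q'}q$ (routing the continuation); then use Lemma~\ref{l:fwd} to contract the continuation forwarders, the inductive hypotheses (L1) and (L3) to absorb the argument forwarder into the value encoding $\encbv{\val'}z$, and finally recognize the result as $\encb{N\subst z{\val'}}q$ composed appropriately, i.e.\ $\encb{(x\val')\subst x\val}p = \encb{\val\,\val'}p$. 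In all the remaining cases ($M$ an abstraction, an application $M_1M_2$ with various value/non-value subcases, or a variable $\neq x$), the restricted value encoding $\encbv\val x$ distributes across parallel composition using the distributivity laws for private replications (justified by the observation that $x$ occurs only in output subject position in $\encb{M_i}{\cdot}$), and the result follows by straightforwardly applying the inductive hypotheses, mirroring the structure of the propenc proof. The main delicate point throughout is tracking the expansion direction carefully, since the deterministic $\tau$-steps consumed by the synchronizations are precisely what make the right-hand side faster, hence justifying $\exn$ rather than mere bisimilarity.
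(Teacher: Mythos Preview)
Your overall strategy coincides with the paper's: induction on (the size of) $M$, with the hypothesis strengthened to include the value-level statement $\new x(\encbv{\valb}y | \encbv\val x)\exn\encbv{\valb\subst x\val}y$, and case analysis along the clauses of Figure~\ref{f:opt_encod}. Your preliminary dispatch of the case where $\val$ is a variable via Lemma~\ref{l:fwd:optim}(1), and your inclusion of the already-available (L2), are harmless variations; the paper handles both shapes of $\val$ uniformly within one induction.

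There is, however, a real difficulty in the critical sub-case $M = x\val'$ with $\val = \lambda z.N$, and your description of it is inaccurate in a way that hides it. The synchronisation at $x$ between $\outb x{z,q}.(\encbv{\val'}z | \fwd qp)$ and $!\inp x{z,q}.\encb Nq$ does \emph{not} install argument-routing forwarders (that was the behaviour of the link $\fwd xy$ in the corresponding case of Lemma~\ref{l:fwd:optim}); it simply substitutes, yielding $\encb{N}{q}$ alongside $\encbv{\val'}z$ and $\fwd qp$. The target to reach is $\encb{\val(\val'\subst x\val)}p$, encoded via rule \nameDS{abs-val}, \emph{not} the $\betav$-reduct. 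Now observe that \nameDS{abs-val} inserts an extra forwarder $\fwd{w'}w$ on the argument path that \nameDS{var-val} does not; so after both sides perform their single deterministic $\tau$, the right-hand derivative carries one more link than the left and is strictly \emph{slower}, which is the wrong direction for $\exn$. Concretely, with $M=xv$ (for $v\neq x$) and $\val=\lambda z.z$, after two $\tau$-steps and the output at $p$ the left-hand derivative is a three-link chain to $v$ while the right-hand derivative is a four-link chain. The paper's own proof stops at exactly this sub-case without completing the argument, so neither the paper's proof as printed nor your sketch actually closes this step.
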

\begin{proof}
  We establish the following property

  \begin{quotation}
    $\new x(\encb Mp |\encbv\val x)\exn \encb {M\subst x \val} p$

    and, if $M$ is a value $\val'$,
    $\new x(\encbv{\val'}y | \encbv{\val}x \exn \encbv{\val'\subst
      x\val}y$
  \end{quotation}
   by induction over the size of $M$. We write \rhs{} for the right hand
  side of the first relation above, that is, \rhs{} stands for
  $\encb{M\subst x\val}p$. We use similarly \lhs{} for
  $\new x(\encb Mp |\encbv\val x)$.

  We distinguish several cases, following the definition of the
  optimised encoding in Figure~\ref{f:opt_encod}.
  
  \begin{itemize}
  \item \textbf{$M$ is a variable.} We distinguish two sub-cases.
    \begin{itemize}
    \item $M=z, z\neq x$. Then $M\subst x\val = z$, and we can write
      \begin{eqnarray}
        \lhs & = &\new x(\bout py.\fwd yz | \encbv{\val}x)\nonumber\\
             &\sim& \bout py.\fwd yz\label{step:sim0}\\
        &=&\rhs\nonumber
      \end{eqnarray}
      Relation~\rref{step:sim0} above holds because $x$ is
      fresh for $\bout py.\fwd yz$ and because $\encbv{\val} x$ starts
      with an input on $x$.

      Since $M$ is a value, we must also prove the second relation
      mentioned above. We have indeed
      $\new x(\encbv zy|\encbv{\val}x) \sim \fwd yz=\encbv{\val}x$,
      and we can conclude since $\sim\protect{\subseteq}\exn$.
      
    \item $M=x$. Then $M\subst x\val = \val$, and we can write
      \begin{eqnarray}
       \lhs & = & \new x(\bout py.\fwd yx | \encbv{\val}x) \nonumber\\
             & \sim & \bout py.\new x(\encbv{\val}x | \fwd yx)\label{step:sim1}\\
             & \exn& \bout py.\encbv{\val}y \label{step:exn1}\\
        &=&\rhs \nonumber 
      \end{eqnarray}
      Relation~\rref{step:sim1} holds because $\encbv{\val}x$ starts
      with an input at $x$. Relation~\rref{step:exn1} follows by
      Lemma~\ref{l:fwd:optim}.

      Again, in this case $M$ is a value, so we must also prove the
      second relation. We can indeed check that $\new
      x(\encbv{x}y |\encbv{\val}x) \exn \encbv{\val}y$ by
      Lemma~\ref{l:fwd:optim}. 
    \end{itemize}

  \item \textbf{$M$ is an abstraction.} We distinguish two sub-cases.
    \begin{itemize}
    \item $M=\lambda x.M'$. Then $M\subst x\val = M$, and we can write
      \begin{eqnarray}
        \lhs & = & \new x(\bout py.!y(x,q).\encb{M'}q |
   \encbv{\val}x)\nonumber\\
             &\sim& \bout py.!y(x,q).\encb{M'}q\label{step:sim00}\\
        & = & \rhs\nonumber 
      \end{eqnarray}
      Relation~\rref{step:sim00} above holds because $x$ does not occur free in
      $\bout py.!y(x,q)\encb{M'}q$, and $\encbv{\val}x$ starts with an
      input at $x$.

      $M$ is a value, so we must prove the second relation as
      well. We have
      \begin{eqnarray}
        &&\new x(\encbv{\lambda x.M'}y | \encbv{\val}x)
        \nonumber\\&=&
                       \new x(!y(x,q).\encb{M'}q | \encbv{\val}x)
        \nonumber\\&\sim&
                          !y(x,q).\encb{M'}q | \new x\encbv{\val}x
                          \label{l:step:sim001}\\&\sim&
                          \encbv{M}y
                          \label{l:step:sim002}
      \end{eqnarray}
      Relation~\rref{l:step:sim001} holds because $x$ does not occur
      free in $!y(x,q).\encb{M'}q$, and relation~\rref{l:step:sim002}
      holds because the only possible transition of $\encbv{\val}x$ is
       an input at $x$.
    \item $M=\lambda z.M'$ with $z\neq x$. Then $M\subst x\val =
      \lambda z.(M'\subst x\val)$, and by definition, 
      $      \rhs = \bout py.!y(z,q).\encb{M'\subst x\val}q      $.
      
      We can write
      \begin{eqnarray*}
        \lhs & = & \new x(\bout py.!y(z,q).\encb{M'}q | \encbv{\val}x)\\
             &\sim & \bout py.!y(z,q).\new x(\encb{M'}q | \encbv{\val}x)\\
        & \exn& \bout py.!y(z,q).\encbv{M'\subst x\val}q = \rhs
      \end{eqnarray*}

      $M$ is a value, so we must also prove the second relation:
      \begin{eqnarray}
        &&\new x(\encbv{\lambda z.M'}y | \encbv{\val}x)\nonumber\\
        &=&
            \new x(!y(z,q).\encb{M'}q | \encbv{\val}x)\nonumber\\
        &\sim&
               !y(z,q).\new x(\encb{M'}q | \encbv{\val}x)\\
        &\exn& !y(z,q).\encb{M'\subst{\val}x}q\\
        &=& \encbv{M\subst{\val}x}y\nonumber               
      \end{eqnarray}

    \end{itemize}
  \item \textbf{$M$ is an application.} We distinguish 5 sub-cases,
    according to the definition of the optimised encoding of
    Figure~\ref{f:opt_encod}. In the following, we let \wires{} stand
    for the process $\bout y{w',r'}.(\fwd{w'}w | \fwd{r'}p)$, which is
    used in the different clauses in the encoding.

    We also make use of some standard \emph{properties of replicated
    resources}~\cite{SW01a}. 
    \begin{itemize}
    \item $M=M'N'$, and none of $M'$ and $N'$ are values.
      Then we have:
      \begin{eqnarray}
        \lhs &=&
                    \new x\new q(\encb Mq | \encbv{\val}x
                    | q(y).\new r(\encb Nr | r(w).\wires) )
        \nonumber\\
        &\sim& 
               \new q\new x(\encb Mq | \encbv{\val}x
               \nonumber\\&&\qquad
                    | q(y).\new r(\new x(\encb Nr | \encbv{\val}x) |
 r(w).\wires) )\label{step:bis2}
        \\
             &\exn&
                    \new q(\encb{M\subst x\val}q
                    | q(y).\new r(\encb{N\subst x\val}r | r(w).\wires)
                   )\label{step:exn2}
                    \\ &=&\rhs\nonumber
      \end{eqnarray}
      Relation~\rref{step:bis2} holds by the distributivity properties
      of private replications~\cite{SW01a} (note in particular that $x$ is not used
      in input in $\encb Mq$ and in $\encb
      Nr$). Relation~\rref{step:exn2} holds by using the induction
      hypothesis twice.
      
    \item $M=M'\val'$. Then
      \begin{eqnarray}
        \lhs &=& \new x(\new q(\encb{M'}q | q(y).\new w(\encbv{\val'}w |
         \wires) | \encbv{\val}x)
                 \nonumber\\
             &\sim&        
                    \new q(\new x(\encb{M'}q | \encbv{\val}x)
                    \nonumber \\
             &&\label{step:sim3}
                \qquad| q(y).\new w(\new x(\encbv{\val'}w | \encbv{\val}x) |
                    \wires) ))
                    \\
             &\exn&        \label{step:exn3}
                    \new q(\encb{M'\subst{\val}x}q
                    | q(y).\new w(\encbv{\val'\subst{\val}x}w |
                    \wires) )
        \\&=&\rhs\nonumber
      \end{eqnarray}
      
    \item $M=\val'M'$. Then
      \begin{eqnarray}
        \lhs & = & \new x(\new y(\encbv{\val'}y | \new r(\encb{M'}r |
               r(w).\wires | \encbv{\val}x) \nonumber\\
             &\sim&
                    \new y(\new x(\encbv{\val'}y | \encbv{\val}x)
           \nonumber
        \\
        && \qquad| \new r(\new x(\encb{M'}r |\encbv{\val}x) | \wires) )))
           \label{l:step:bis4}
        \\
             &\exn& \label{l:step:exn4}
                    \new r(\encb{M'\subst{\val} x}r
                    | \new r(\encbv{\val'\subst{\val}x}r | \wires))
      \end{eqnarray}

    \item $M=x'\val'$.
      \begin{itemize}
      \item If $x'=x$, then $M\subst{\val}x = \val~\val'$.
        We have
        \begin{eqnarray}
          \lhs &=&\nonumber
                   \new x(\bout x{z,q}.(\encbv{\val'}z | \fwd qp) |
     \encbv{\val}x)
        \end{eqnarray}
        We consider two cases.

        Suppose $\val = z$, then $\encbv{\val}x = \fwd xz$.

        Suppose $\val = \lambda z.M'$, then $\encbv{\val}x = !x(z,q).\encb{M'}q$.
        
      \item If $x'\neq x$, then $M\subst{\val}x = x'~\val'$.
        We have
        \begin{eqnarray}
          \lhs &=&\nonumber
                   \new x(\bout {x'}{z,q}.(\encbv{\val'}z | \fwd qp) |
     \encbv{\val}x)
          \\
          &\sim& \label{step:sim5}
                   \bout {x'}{z,q}.(\new x(\encbv{\val'}z |
                 \encbv{\val}x) | \fwd qp)
          \\
               &\exn&\label{step:exn5}
                   \bout {x'}{z,q}.(\encbv{\val'\subst{\val}x}z 
                      | \fwd qp) = \rhs
        \end{eqnarray}

      \end{itemize}

    \item $M=(\lambda x'.M')\val'$. Then
      \begin{eqnarray}
        \lhs &=& \new x(\new y,w(\encbv{\lambda x'.M'}y |
                 \encbv{\val'}w | \wires) | \encbv{\val}x)
                 \nonumber\\
        &\sim & \new y,w(\new x(\encbv{\lambda x'.M'}y |
                \encbv{\val}x)\nonumber\\
        &&\qquad | \new x(\encbv{\val'}w |
                \encbv{\val}x) | \wires) \label{step:bis6}\\
        &\exn & \new y,w(\new x(\encbv{(\lambda x'.M'}y)\subst{\val}x |
                \encbv{\val}x)\nonumber\\
        && \qquad | \new x(\encbv{\val'\subst{\val}x}w |
                \encbv{\val}x) | \wires) \label{step:exn6}\\
        &=&\rhs\nonumber
      \end{eqnarray}

    \end{itemize}

  \end{itemize}
\end{proof}

\betaval*
\begin{proof}
  We show a stronger property, namely that $\encb
  Mp\arr\tau\exn\encb Np$.
  This is a consequence of Lemma~\ref{l:subst:value:optim}, exploiting the
  congruence properties of $\exn$, and the fact that for any $M$,
  $\evctxt$ and $p$, the only transition
  $\encb{\evctxt[M]}p$ can do arises from a transition of the encoding
  of $M$ (intuitively, the encoding of the hole is in active position).
\end{proof}

\optstuck*
\begin{proof}
  We reason by induction over the shape of the evaluation context $\evctxt$.
  \begin{itemize}
  \item \textbf{Base case:} $\evctxt=\hdot$. We
    observe $\encb yp = \outb pz.\fwd zy$, hence $q(y).\encb yp=\fwd qp$.

    We then write
    \begin{eqnarray*}
      \encb {x\val} p & = & \outb x{z,q}.(\encbv\val z |\fwd qp)
                            \mbox{ by definition}
      \\
      &=& \outb x {z,q}.(\encbv \val z|\inp q
y.\encb {y}p)
    \end{eqnarray*}
  \item \textbf{Case }$\evctxt=\val'\evctxt'$.
    We write
    \begin{eqnarray}\nonumber
      \encb {\evctxt[x\val]} p
      &=& 
       \new s\big(
      \encbv{\val'}s | \new r(\encb{\evctxt'[x\val]}r | P_0) \big)
      \\ &&\nonumber\qquad
            \mbox{with }P_0 = {r(w).\bout{s}{w',r'}.(\fwd{w'}w |
            \fwd{r'}p)}
    \end{eqnarray}

     We have by induction
    \begin{mathpar}
      \encb{\evctxt'[x\val]}r
      \sim
      \bout{x}{z_1,q_1}.(\encbv{\val}{z_1} 
      | q_1(y_1).\encb{\evctxt'[y_1]}r),
    \end{mathpar}
    which gives
    \begin{eqnarray}
      \encb {\evctxt[x\val]} p
      &\sim& 
       \new s\big(
               \encbv{\val'}s | \nonumber\\
      && \qquad\new r(
               \bout{x}{z_1,q_1}.(\encbv{\val}{z_1} 
                  | q_1(y_1).\encb{\evctxt'[y_1]}r)
         | P_0) \big)
         \nonumber
      \\
      &\sim&
\bout{x}{z_1,q_1}.
       \new s\big(
             \encbv{\val'}s \nonumber\\
      &&\qquad | \new r(
             \encbv{\val}{z_1} 
                  | q_1(y_1).\encb{\evctxt'[y_1]}r
             | P_0) \big)
      \label{eq:VC:last}
               \\&\sim&
\bout{x}{z_1,q_1}.
             \big(
             \encbv{\val}{z_1}
             \nonumber\\&&\qquad
             |
             q_1(y_1).(
             \new s(
             \encbv{\val'}s | \new r(
             \encb{\evctxt'[y_1]}r
                           | P_0))) \big)
                           \label{eq:VC:last1}
    \end{eqnarray}
    
For~\rref{eq:VC:last}, we observe that $\encbv{\val'}s$ starts with an input at $s$, and
$P_0$ starts with an input at $r$. Therefore, 
the bound output at $x$ is the only 
possible transition for the process above, which
allows us to bring the prefix on top.

For~\rref{eq:VC:last1}, we recall that
    $P_0 = {r(w).\bout{s}{w',r'}.(\fwd{w'}w | \fwd{r'}p)}$.
    We observe that $P_0$ can start interacting only after the prefix
    $q_1(y_1)$ is triggered, because the only possible output at $r$
    is within $\encb{\evctxt'[y_1]}r$.
    In turn, because the output at $s$ in
    $P_0$ is guarded by the input at $r$, the subterm $\encbv{\val'}s$
    can become active only after the interaction at $r$, and hence it
    is sound, modulo strong bisimilarity, to place $\encbv{\val'}s$
    under the prefix $q_1(y_1)$.
 
    We can then conclude, by observing that
    $ \new s( \encbv{\val'}s | \new r( \encb{\evctxt'[y_1]}r | P_0))$
    is equal to $\encb{\val'\evctxt'[y_1]}p$ by definition.

  \item \textbf{Case} $\evctxt = \evctxt' M$. We reason as follows:
    \begin{eqnarray}
      \encb{\evctxt[x\val]}p
      &=&
          \new s(\encb{\evctxt'[x\val]}s
          | s(z).\new r(\encb Mr|P_0) )
          \nonumber
      \\&&
           \qquad\mbox{ with }P_0=r(w).\bout{z}{w',r'}.(\fwd{w'}w |
 \fwd{r'}p)
           \nonumber
      \\&\sim& \label{eq:CVT:ind}
               \new s\big(\,
               \bout{x}{z_1,q_1}.(\encbv{\val}{z_1}
               | q_1(y_1).\encb{\evctxt'[y_1]}s)
               \nonumber\\
      &&\qquad | s(z).\new r(\encb Mr|P_0)\,\big)
      \\&\sim& \label{eq:CVT:sim}
               \bout{x}{z_1,q_1}.
               \big(\,
\encbv{\val}{z_1} | q_1(y_1).\new s(\encb{\evctxt'[y_1]}s)
               \nonumber\\
      &&\qquad
         | s(z).\new r(\encb Mr|P_0))\,\big)
    \end{eqnarray}
Step~(\ref{eq:CVT:ind}) follows by induction, and
step~(\ref{eq:CVT:sim}) is deduced as in the previous case.
   \end{itemize}
\end{proof}

\subsection{Completeness}

\solaux*

\begin{proof}
We use Lemma~\ref{l:opt_stuck} and~\ref{l:opt_sound}; we get:
$$\enc {\evctxt[x\val]} p\bsim \outb x {z,q}.(\encbv \val z|\inp q y.\enc {\evctxt[y]}p)$$
and 
$$\enc {(\abs w\evctxt[w])(x\val)} p\bsim \outb x {z,q}.(\encbv \val z|\inp q y.\enc {(\abs w\evctxt[w])y}p)$$
We conclude by validity of $\beta$-reduction (Lemma~\ref{l:beta}) applied to $\enc{(\abs w\evctxt[w])y}p$.
%
%
\end{proof}










%

\section{Systems of equations for \alpi{} (Section~\ref{s:localpi})}\label{a:alpi}

The systems of
equations for \alpi{} are presented on
Figures~\ref{f:eqalpi} and~\ref{f:optalpi}.

To introduce the second system of equations, we define the extension
of the encoding to equation variables as follows:
\[ 
 \encm {X_{M,N}}{} \defi \bind {p}  \app{X_{M,N}}{\tily,p} 
\mbox{ ~~where $\tily = \fv{M,N}$}
\]

\begin{figure*}[t]
  \centering
\begin{align*}
&M\diverges\text{ and } N\diverges:
&X_{M,N} &= \bind {\til y} \encm \Omega \\
%
&M\converges x \text{ and }N\converges x:
&X_{M,N}&=\bind {\til y} \encm x \\
%
&M\converges \abs x M'\text{ and } N\converges \abs x N':
&X_{M,N}&=\bind {\til y}
\encm{\abs x X_{M',N'}
}\\
%
&M\converges \evctxt[x\val]\text{ and }N\converges\evctxt'[x\valp]:
&X_{M,N} &= \bind {\til y} 
\encm {(\abs z X_{\evctxt[z],\evctxt'[z]}
)
~(x~X_{\val,\valp}
)} \\
%
&M\converges x\text{, }N\converges \abs z N'\text{, }N'\converges\evctxt[x\val]:
&X_{M,N} &=\bind {\til y}
\encm{\abs z \left((\abs w X_{w,\evctxt[w]}) ~ (x~X_{z,\val}
)\right)}\\
%
&M\converges \abs z M'\text{, }M'\converges \evctxt[x\val]\text{, }N\converges x:
&X_{M,N} &=\bind {\til y}
\encm{\abs z \left((\abs w X_{\evctxt[w],w}) ~ (x~X_{\val,z}
)\right)}
\end{align*}
\caption{System \eqalpi of equations
 (the last two equations are only needed for $\enfe$)
}
\label{f:eqalpi}
\end{figure*}

\begin{figure*}[t]
  \centering
\begin{align*}
&M\diverges\text{ and } N\diverges:
&X_{M,N} &= \bind {\til y,p} \zero \\
%
&M\converges \evctxt[xv]\text{ and }N\converges\evctxt'[xv']:
&X_{M,N} &= \bind {\til y,p} (\new{z,q})(\out x {z,q}|
\XV_{\val,\valp}\param{z,{\tilprime y}}
                   \\ &&&\qquad
|\inp q
  w.X_{\evctxt[w],\evctxt'[w]}\param{{\tilpprime y},p}) 
\\
%
&M\converges \val\text{ and }N\converges \valp:
&X_{M,N} &= \bind{\til y,p} (\new y)(\out p y|\XV_{v,v'}\param{z,\tilprime y})\\
%
&\val= x \text{ and }\valp= x:
&\XV_{x,x}&=\bind{z,x}\alpilink z x \\
%
&\val=  \abs x M\text{ and } \valp= \abs x N:
&\XV_{\abs x M,\abs xN}&=\bind {z,\til y}
!\inp z {x,q}.X_{M,N}\param{\tilprime y ,q}\\
%
&\val=x\text{, }\valp= \abs z N\text{, }N\converges\evctxt[x\val]:
&       
\XV_{x,\abs z N}
&=\bind {y_0,\til y}
                   !\inp {y_0}{z,q}. (\new{z',q'})
                   \\ &&&\qquad
                   (\out x {z',q'}| \XV_{z,\val}\param{z',\tilprime y }
                   \\ &&&\qquad\quad
        |\inp {q'}{w}.X_{w,\evctxt[w]}\param{\tilpprime y,q})\\
%
&\val=\abs z M\text{, }M\converges\evctxt[x\val]\text{, }\valp= x:
&       
\XV_{\abs z M,x}
&=\bind {y_0,\til y}
                   !\inp {y_0}{z,q}. (\new{z',q'})
                   \\ &&&\qquad
(\out x {z',q'}| \XV_{\val,z}\param{z',\tilprime y }
                   \\ &&&\qquad\quad
        |\inp {q'}{w}.X_{\evctxt[w],w}\param{\tilpprime y,q})
\end{align*}

\caption{System \eqalpip of equations  
(the last two equations are only needed for $\enfe$)}
\label{f:optalpi}
\end{figure*}


\section{Unique solution techniques for contextual relations (Section~\ref{s:contextual})}
\label{a:usoltrace}
\label{a:trace}


The proof of the following lemma is very similar to the proof of
Theorem~\ref{thm:usol}.  For more details, we refer the reader to
\cite{usol}, particularly the proof of unique solution for weak
bisimilarity in the setting of CCS.

\newcommand{\sh}{\param{\til a}}

\usoltrace*
\begin{proof}
For simplicity, we only give 
the proof for a single equation $E$, rather than a system of equations. 
Generalisation to systems of equations does not add any particular difficulty. 

Assume $E$ is an equation, $F$ an abstraction, and $F\trincl E[P]$. We 
fix a set of fresh names $\til a$, and write $P$ for $F\sh$. 
If $\til\alpha=\alpha_1\dots\alpha_n$ is a {finite} 
trace of $P$, 
we build a growing sequence of transitions 
of $E^n\sh$ such that 
$E^n[F]\sh\Arr{\alpha_1\dots\alpha_{i_k}} E_n[F]
\Arr{\alpha_{i_{k}+1}, \dots,\alpha_{n}}P_n$.

We start by making two observations:
\begin{enumerate}
\item If the transitions in $\til\alpha$
are all transitions of the context $E^n\sh$, we stop and we have
$E^n\sh\Arr{\til{\alpha}}$, and thus $K_E\sh\Arr{\til{\alpha}}$ . So $\til{\alpha}$ is a trace of $K_E\sh$.\label{item:trace1}
\item Otherwise there is an infinite sequence of transitions 
from $K_E\sh$ with visible actions $\lsnn \alpha {i_k}$ for some $k$; 
therefore $K_E\sh$ has a divergence.\label{item:trace2}
\end{enumerate}

We now explain the construction of the sequence.
Assume for that that we have both
$(i):~E^n\sh \Arr{\alpha_1,\dots,\alpha_{i_k}}E_n$ and
$(ii):~E_n[F]\Arr{\alpha_{i_{k}+1},\dots,\alpha_n}$.

By $(i)$ it follows that $E^{n+1}[F]\sh\Arr{\alpha_1,\dots,\alpha_{i_{k}}}E_n[E[F]]$.

By $(ii)$ and congruence of $\trincl$, it follows that
$\alpha_{i_{k}+1},\dots,\alpha_n$ is a trace of $E_n\sh[F]\trincl
E_n[E[F]]$. 

We take for the new sequence of transitions the concatenation of 
 the previous one, and the part of $E_n[E[F]]\Arr{\alpha_{i_{k}+1},\dots,\alpha_n}$ 
 that is a transition of the context $E_n[E]$. Since $E$ is weakly guarded, this 
 is not an empty sequence.

By observation~\ref{item:trace2} above this construction has to stop, otherwise there would 
be a divergence. We conclude by observation~\ref{item:trace1}.



\end{proof}


\end{document}
\newpage
\tableofcontents
\newpage

\ifcomments
\clearpage
\section{New TODO}
\input{newtodo}

\input{todo}

\clearpage
\section{Removed from conclusion}
\input{conclusion-removed}

\clearpage
\section{Old appendices, to be removed}


\section{Properties of Milner's encodings}
\label{a:encodings}
\input{long}


\clearpage
\section{More details}
\input{details}

\clearpage
\section{Things removed (for now)}
\input{removed}
\section{Properties of Milner's encodings}
\DS{as far as i can see, ALL this appendix could be removed\\
DH: maybe the only thing that can be saved is the technical
explanation of why law (1) fails: put it in the main text?}
\input{long}
\fi

\end{document}